\title{Fully Dynamic Algorithms for Transitive Reduction}
\author[1]{Gramoz Goranci}
\author[2]{Adam Karczmarz\thanks{Supported by the National Science Centre (NCN) grant no. 2022/47/D/ST6/02184. Work partially done at IDEAS NCBR, Poland.}}
\author[3]{Ali Momeni}
\author[4]{Nikos Parotsidis}
\affil[1]{Faculty of Computer Science, University of Vienna, Austria}
\affil[2]{University of Warsaw}
\affil[3]{Faculty of Computer Science, UniVie Doctoral School Computer Science DoCS, University of Vienna, Austria}
\affil[4]{Google Research, Switzerland}
\date{}
\setlist[itemize]{itemsep=.5pt, topsep=3pt}
\setlist[enumerate]{itemsep=.5pt, topsep=3pt}
\definecolor{myOrange}{RGB}{230, 159, 0}
\definecolor{myLightBlue}{RGB}{86, 180, 233}
\definecolor{myGreen}{RGB}{0, 158, 115}
\definecolor{myYellow}{RGB}{240, 228, 66}
\definecolor{myDarkBlue}{RGB}{0, 114, 178}
\definecolor{myRed}{RGB}{213, 94, 0}
\definecolor{myPink}{RGB}{204, 121, 167}
\newtheorem{theorem}{Theorem}[section]
\newtheorem{corollary}[theorem]{Corollary}
\newtheorem{lemma}[theorem]{Lemma}
\newtheorem{fact}{Fact}[subsection]
\newtheorem{invariant}{Invariant}[section]
\newcommand{\counter}[1]{\textrm{c\(( #1 )\)}\xspace}
\newcommand{\touch}[1]{\textrm{t\(( #1 )\)}\xspace}
\newcommand{\graph}[1]{\text{\( G^{#1} \)\xspace}} %
\newcommand{\TR}[0]{\text{\( G^{t} \)}\xspace} 
\newcommand{\DAG}[1]{\textrm{{\normalfont DAG} \( #1 \)}\xspace}
\newcommand{\poly}[1]{\ensuremath{\operatorname{poly}\left( #1 \right)}\xspace}
\renewcommand{\textsc}[1]{\textnormal{\scshape #1}}
\newcommand{\nul}[0]{\textsf{null}\xspace}
\newcommand{\desc}[1]{\textrm{{\normalfont Desc}\( ^{ #1 }\)}\xspace}
\newcommand{\D}[1]{\textrm{{\normalfont D}\(^{ #1 } \)}\xspace}
\newcommand{\anc}[1]{\textrm{{\normalfont Anc}\( ^{ #1 }\)}\xspace}
\newcommand{\A}[1]{\textrm{{\normalfont A}\( ^{ #1 } \)}\xspace}
\newcommand{\s}[1]{\textrm{{\normalfont S}\( ^{ #1 } \)}\xspace}
\newcommand{\CC}[1]{\textrm{{\normalfont C}\( ^{ #1 } \)}\xspace}
\newcommand{\B}[1]{\textrm{{\normalfont B}\( ^{ #1 } \)}\xspace}
\newcommand{\E}[1]{\text{\( E^{#1} \)}\xspace} %
\newcommand{\T}[1]{\textrm{\normalfont \( T^{} \)}\xspace}
\newcommand{\Ei}[1]{\text{\( E_{#1} \)}\xspace} %
\newcommand{\Ed}{\text{\( E_{\mathrm{del}} \)}\xspace} %
\newcommand{\ts}[1]{\text{time\( ( #1 ) \)}\xspace} %
\newcommand{\F}[2]{\textrm{F\( [ #1, #2 ]  \)}\xspace}
\newcommand{\activee}[2]{\textrm{\normalfont active\( ^{  } [ #1 ] \)}\xspace} %
\newcommand{\inn}[2]{\textrm{\normalfont in\( ^{ } [ #1 ] \)}\xspace} 
\newcommand{\outt}[2]{\textrm{\normalfont{ out}\( _{#2} [ #1 ] \)}\xspace}
\newcommand{\scc}[2]{\textrm{{scc}\( _{#2} [ #1 ] \)}\xspace}
\newcommand{\updateC}[2]{\textsc{UpdateC\( ( #1
) \)}\xspace}
\newcommand{\updateP}[2]{\textsc{UpdateP\( ( #1
) \)}\xspace}
\newcommand{\edge}[2]{\ensuremath{ #1 #2 }\xspace}
\newcommand{\cc}[2]{\ensuremath{ c^{} (#1) }}
\newcommand{\parent}[2]{\ensuremath{ p^{} (#1) }}
\appto{\bibsetup}{\sloppy}
\crefname{problem}{problem}{problems}
\crefname{claim}{claim}{claims}
\crefname{fact}{fact}{facts}
\crefname{algorithm}{algorithm}{algorithms}
\crefname{proof}{proof}{proofs}
\crefname{observation}{observation}{observations}
\crefname{invariant}{invariant}{invariants}
\newcommand{\arxivVsConference}[2]{#1}
\begin{document}

\maketitle

\arxivVsConference{
\pagenumbering{roman}
}{}

\begin{abstract}
 Given a directed graph $G$, a transitive reduction $G^t$ of $G$ (first studied by Aho, Garey, Ullman [SICOMP `72]) is a minimal subgraph of $G$ that preserves the reachability relation between every two vertices in $G$.
 
 In this paper, we study the computational complexity of transitive reduction in the dynamic setting.
 We obtain the first fully dynamic algorithms for maintaining a transitive reduction of a general directed graph undergoing updates such as edge insertions or deletions.
 Our first algorithm achieves $O(m+n \log n)$ amortized update time, which is near-optimal for sparse directed graphs, and can even support extended update operations such as inserting a set of edges all incident to the same vertex, or deleting an arbitrary set of edges. Our second algorithm relies on fast matrix multiplication and achieves $O(m+ n^{1.585})$ \emph{worst-case} update time.   

\end{abstract}

\arxivVsConference{
\newpage
\pagenumbering{arabic}
}{}

\section{Introduction}
Graph sparsification is a technique that reduces the size of a graph by replacing it with a smaller graph while preserving a property of interest. The resulting graph, often called a \emph{sparsifier}, ensures that the property of interest holds if and only if it holds for the original graph.
Sparsifiers have numerous applications, such as reducing storage needs, saving bandwidth, and speeding up algorithms by using them as a preprocessing step.
Sparsification has been extensively studied for various basic problems in both undirected and directed graphs~\cite{Aho:1972aa, althofer1993sparse, benczur1996approximating, cheriyan2000approximating, spielman2011spectral, spielman2008graph}. In this paper, we focus on maintaining a sparsifier for the notion of transitive closure in dynamic directed graphs.

Computing the transitive closure of a directed graph (digraph) is one of the most basic problems in algorithmic graph theory. Given a graph~$G=(V,E)$ with $n$ vertices and $m$ edges, the problem asks to compute for every pair of vertices $s,t$ on whether $t$ is reachable from $s$ in $G$. The efficient computation of the transitive closure of a digraph has received much attention over the past decades. 
In dense graphs, due to the problem being equivalent to Boolean Matrix Multiplication, the best known efficient algorithm runs in $O(n^\omega)$ time, where $\omega<2.371552$ \cite{strassen1969gaussian, coppersmith1982asymptotic, Williams12, williams2024new}. %
In sparse graphs, transitive closure can be trivially computed in $O(nm)$ time\footnote{Interestingly, shaving a logarithmic factor is possible here~\cite{Chan08}.}.

In their seminal work, Aho, Garey, and Ullman~\cite{Aho:1972aa} introduced the notion of transitive reduction; a \emph{transitive reduction} of a digraph $G$ is a digraph \TR on $V$ with fewest possible edges that preserves the reachability information between every two vertices in $G$.
Transitive reduction can be thought of as a sparsifier for transitive closure. 

While the transitive reduction is known to be uniquely defined for a directed acyclic graph (DAGs), it may not be unique for general graphs due to the existence of strongly connected components (SCCs). For each SCC $S$ there may exist multiple smallest graphs on $S$ that preserve reachability among its vertices. One example of such a graph is a directed cycle on the vertices of $S$. %
Significantly, \cite{Aho:1972aa} showed that computing the transitive reduction of a directed graph requires asymptotically the same time as computing its transitive closure.

It is important to note that a transitive reduction with an asymptotically smaller size than the graph itself is not guaranteed to exist even if we allow introducing auxiliary vertices: indeed, any bipartite digraph $G$ with $n$ vertices on both sides equals its transitive closure and one needs at least $n^2$ bits to uniquely encode such a digraph.
This is in contrast to e.g., equivalence relations such as strong connectivity where sparsification all the way down to linear size is possible.

In a DAG $G$, the same unique transitive reduction $\TR$ could be equivalently defined as the (inclusion-wise) minimal subgraph of $G$ preserving the reachability relation~\cite{Aho:1972aa}. In some applications, having a sparsifier that remains a subgraph of the original graph might be desirable.
Unfortunately, in the presence of cycles, if we insist on \TR being a subgraph of a $G$, then computing such a subgraph \TR of minimum possible size is NP-hard\footnote{$G$ has a Hamiltonian cycle iff \TR is a cycle consisting of all vertices of $G$}.
However, if we redefine $\TR$ to be simply an \emph{inclusion-wise minimal subgraph} of $G$ preserving its reachability, computing it becomes tractable again\footnote{In fact, an inclusion-wise minimal transitive reduction can have at most $n$ more edges than the minimum-size transitive reduction, hence the former is a $2$-approximation of the latter in terms of size.}, as a minimal strongly connected subgraph of a strongly connected graph can be computed in near-linear time~\cite{GibbonsKRST91, han1995computing}.
Throughout this paper, for our convenience, we will adopt this minimal subgraph-based definition of a transitive reduction~$G^t$ of a general digraph.  At the same time, we stress that all our algorithms can also be applied to the original ``minimum-size'' definition~\cite{Aho:1972aa} of $\TR$ after an easy adaptation of the way reachability inside SCCs is handled.

The transitive reduction of a digraph finds applications across a multitude of domains such as the reconstruction of biological networks~\cite{bovsnavcki2012efficient, 9460463} and other types of networks (e.g.,~\cite{pinna2013reconstruction, klamt2010transwesd, aditya2013algorithmic}), in code analysis and debugging~\cite{xu2006regulated, netzer1993optimal}, network analysis and visualization for social networks~\cite{1517831,clough2015transitive}, signature verification~\cite{HOU2015144}, serving reachability queries in database systems~\cite{jin2012scarab}, large-scale parallel machine rescheduling problems~\cite{meng2023transitive}, and many more. 
In some of these applications~(e.g.,~\cite{bovsnavcki2012efficient,9460463}), identifying and eliminating edges \emph{redundant} from the point of view of reachability\footnote{i.e.\ that can be removed from the graph without affecting the transitive closure.} is more critical than reducing the size of the graph and, consequently, the space required to store it. 

In certain applications, one might need to compute the transitive reduction of dynamically evolving graphs, where nodes and edges are being inserted and deleted from the graph and the objective is to efficiently update the transitive reduction after every update. 
 The naive way to do that is to recompute it from scratch after every update, which has total update time $O\left(m \cdot \min(n^\omega,nm)\right)$, or in other words, the algorithm has $O\left(\min(n^\omega,nm)\right)$ amortized update time.
 This is computationally very expensive, though.
 It is interesting to ask whether a more efficient approach is possible.

The concept of dynamically maintaining the sparsifier of a graph is not new.
Sparsifiers for many graph properties have been studied in the dynamic setting, where the objective is to dynamically maintain a sparse certificate as edges or vertices are being inserted and/or deleted to/from the underlying dynamic graph. To the best of our knowledge, apart from transitive reduction, other studies have mainly focused on sparsifiers for dynamic \emph{undirected} graphs. %

In this paper, we study fully dynamic sparsifiers for reachability; that is, for one of the most basic notions in directed graphs. 
In particular, we continue the line of work initiated by La Poutr{\'e} and van Leeuwen~\cite{lapoutre1988} who were the first to study the maintenance of the transitive reduction in the partially dynamic setting, over 30 years ago. They presented an incremental algorithm with total update time $O(mn)$, and a decremental algorithm with total update time that is $O(m^2)$ for general graphs, and $O(mn)$ for DAGs.

\subsection{Our results}
We introduce the first fully dynamic data structures designed for maintaining the transitive reduction in digraphs. These data structures are tailored for both DAGs and general digraphs, and are categorized based on whether they offer worst-case or amortized guarantees on the update time.

\paragraph*{Amortized times for handling updates.} Our first contribution is two fully dynamic data structures for maintaining the transitive reduction of DAGs and general digraphs, each achieving roughly linear amortized update time on the number of edges. Both data structures are combinatorial and deterministic, with their exact guarantees summarized in the theorems below. 
\begin{restatable}{theorem}{dag}\label{thm: OurResultsDag1}
    Let $G$ be an initially empty graph that is guaranteed to remain acyclic throughout any sequence of edge insertions and deletions.  Then, there is a fully dynamic deterministic data structure that maintains the transitive reduction $G^t$ of $G$ undergoing a sequence of edge insertions centered around a vertex or arbitrary edge deletions in $O(m)$ amortized update time, where $m$ is the current number of edges in the graph.
\end{restatable}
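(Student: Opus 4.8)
The plan is to maintain the current DAG $G$ as adjacency lists, together with its transitive reduction $G^t$, stored implicitly by flagging each edge of $G$ with whether it currently belongs to $G^t$ and keeping, for every vertex, its $G^t$-in- and $G^t$-out-neighbourhoods. I will use the standard characterization: in a DAG, an edge $uv \in E(G)$ lies in $G^t$ iff it is not \emph{redundant}, i.e.\ iff no out-neighbour $w \neq v$ of $u$ reaches $v$ in $G$; equivalently, $uv$ is the only $u$-to-$v$ path. The two update types are handled separately, each exploiting an opposite monotonicity: inserting edges can only turn edges redundant (never un-redundant), while deleting edges can only turn edges un-redundant.

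For an insertion of a set of edges all incident to a vertex $v$, every new path runs through $v$, so (i) the only edges that can newly enter $G^t$ are the inserted ones, and (ii) an old edge $ab \in G^t$ not incident to $v$ becomes redundant exactly when a new path of length at least $2$ from $a$ to $b$ appears, which by acyclicity forces $a$ to be an ancestor and $b$ a descendant of $v$ in the updated graph --- and, these two sets being disjoint, every such ``crossing'' edge does in fact become redundant. Hence the procedure is: recompute the ancestor and descendant sets of $v$ by a backward and a forward BFS; recompute the $G^t$-out-neighbourhood (resp.\ $G^t$-in-neighbourhood) of $v$ by one forward (resp.\ backward) multi-source BFS from the out-neighbours (resp.\ in-neighbours) of $v$, retaining a source precisely when it is not reached from another source during the search; and, finally, scan $G^t$ once and delete the crossing edges from (ii). Each step costs $O(m)$, so a batch insertion around a vertex takes $O(m)$ time --- in fact even in the worst case.

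For a deletion of an arbitrary edge set $D$, deleting a redundant edge changes neither the reachability relation nor $G^t$ and is therefore essentially free (up to $O(|D|)$ bookkeeping), so the work concentrates on the essential edges $D \cap G^t$: removing them destroys some reachabilities, and a previously redundant edge $ab$ must be \emph{added} to $G^t$ precisely when every length-$\geq 2$ path from $a$ to $b$ in $G$ used an edge of $D$. The plan is to delete $D \cap G^t$ from $G^t$ and then repair $G^t$ by a search started at the endpoints of the deleted essential edges that propagates to their affected ancestors, updates the ancestor/descendant information that changed, and, for every vertex $u$ whose set of out-neighbours reachable from $u$ at distance $\geq 2$ could have changed, recomputes $u$'s $G^t$-out-neighbourhood with the same multi-source-BFS primitive as above; correctness again follows from the redundancy characterization and the monotonicity.

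The main obstacle is bounding the deletion cost by $O(m)$ \emph{amortized}: a single essential-edge deletion can force re-examining many vertices, and a black-box reduction to decremental transitive closure only gives an $O(mn)$ total bound --- useless for, e.g., one large batch deletion. Two ingredients seem necessary. First, the repair must be made output-sensitive, costing $O(m + |D| + (\text{number of edges entering or leaving } G^t \text{ in this update}))$ instead of re-deriving $G^t$ from scratch; a natural way is to keep, alongside each redundant edge, a cheaply repairable witness of its redundancy, so that only edges whose witness was destroyed are re-examined. Second, one bounds the total ``$G^t$-churn'' over the sequence: an edge can re-enter $G^t$ only after having left it, and it leaves $G^t$ only at an insertion update --- each of which already pays $\Theta(m)$ and, as shown above, removes from $G^t$ at least as many edges as it later re-admits --- or when the edge itself is deleted from $G$; since the graph starts empty, the total number of edges ever inserted, ever deleted, and ever removed from $G^t$ by an insertion each telescope to $O(\sum_i m_i)$, hence so does the total number of edges ever entering $G^t$. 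Combining the two, the total running time is $O(\sum_i m_i)$, i.e.\ $O(m)$ amortized per update. Making the witness maintenance and this telescoping precise --- and, in particular, performing a size-$|D|$ batch deletion in one shot rather than edge by edge --- is where the real technical work lies.
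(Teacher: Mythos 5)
Your insertion procedure is sound and essentially matches the paper's: compute ancestor/descendant sets of the insertion center $v$, recompute $v$'s $G^t$-neighbourhoods by a linear-time multi-source search, and prune from $G^t$ every surviving edge joining an ancestor of $v$ to a descendant of $v$. This gives $O(m)$ worst-case per insertion, as claimed.

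The gap is in the deletion case, and it is the part you yourself flag as ``where the real technical work lies.'' Your telescoping argument correctly bounds the total number of edges that ever enter or leave $G^t$, but this controls only the \emph{output}, not the \emph{search} work. The dominant cost is re-verifying edges that \emph{remain} redundant after their stored witness is destroyed: a single edge deletion can invalidate the witnesses of $\Omega(m)$ redundant edges whose status does not change, so none of that work is absorbed by the churn budget, and ``cheaply repairable witness'' is left unspecified. Nothing in your plan bounds the number of vertices $u$ ``whose set of out-neighbours reachable at distance $\geq 2$ could have changed'' by anything better than $n$, so the repair step degenerates to $\Omega(mn)$ for a single deletion, exactly the bound you set out to beat.

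The paper avoids repairing witnesses altogether. For each vertex $u$ it keeps a frozen snapshot $G^u$ of $G$ (taken at the last insertion centered at $u$) together with a decremental single-source reachability structure that maintains $\mathrm{Desc}^u$ and $\mathrm{Anc}^u$ on $G^u$ and reports the vertices lost from those sets at every deletion; all decremental work on $G^u$ between rebuilds totals $O(|E^u|)$ and is charged to the insertion that built $G^u$. For each edge $xy$ it stores the \emph{count} $\mathrm{c}(xy)$ of vertices $z \notin \{x,y\}$ with $xy \in E^z$, $x \in \mathrm{Anc}^z$, $y \in \mathrm{Desc}^z$ in $G^z$ (plus a flag $\mathrm{t}(xy)$ covering $z \in \{x,y\}$). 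Deletions update these counters locally via the reported vertex losses---no search for a replacement witness is ever performed, because the counter already accounts for all of them---and $xy$ enters $G^t$ exactly when both $\mathrm{c}(xy)$ and $\mathrm{t}(xy)$ drop to zero (\Cref{lem:redundant}). This per-vertex-snapshot, count-all-witnesses device, together with the decremental single-source structure that supports it, is the ingredient your proposal is missing.
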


For general sparse digraphs, we obtain a much more involved data structure, where we  pay an additional logarithmic factor in the update time.

\begin{restatable}{theorem}{general}\label{thm: OurResultsDag2}
    Given an initially empty general digraph $G$, there is a fully dynamic deterministic data structure that supports edge insertions centered around a vertex and arbitrary edge deletions, and maintains a transitive reduction $G^t$ of $G$ in $O(m + n \log n)$ amortized update time, where $m$ is the current number of edges in the graph.
\end{restatable}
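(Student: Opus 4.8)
The plan is to reduce to the acyclic case of \Cref{thm: OurResultsDag1}, augmented with a data structure that maintains the strongly connected component (SCC) decomposition of $G$ and, inside every SCC, a minimal strongly connected spanning subgraph. The starting point is that a transitive reduction decomposes along the SCC structure: let $G^c$ be the condensation of $G$ (the DAG obtained by contracting each SCC to a node), fix for every SCC $S$ a minimal strongly connected spanning subgraph $T_S$ of $G[S]$ (computable in linear time~\cite{GibbonsKRST91,han1995computing}), and fix for every edge $(A,B)$ of the unique transitive reduction $(G^c)^t$ a single witnessing edge $e_{AB}\in E(G)$ from $A$ to $B$. Then $G^t:=\bigcup_S T_S\cup\{e_{AB}:(A,B)\in (G^c)^t\}$ is an inclusion-wise minimal subgraph of $G$ with the same transitive closure: a path $s\rightsquigarrow t$ is recovered from a walk inside $s$'s SCC, the witnessing edges along the corresponding path of $(G^c)^t$, and a walk inside $t$'s SCC, while deleting any edge of $G^t$ either breaks strong connectivity of some $T_S$ or removes a necessary edge of $(G^c)^t$. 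Hence it suffices to maintain the SCC decomposition, the subgraphs $T_S$, and a copy of $G^c$ fed into the DAG data structure of \Cref{thm: OurResultsDag1} --- storing one representative vertex per SCC, hiding the others, and keeping for each $G^c$-edge the multiset of $G$-edges realizing it.

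Next I would translate the updates of $G$ into updates of these objects. A batch insertion centered at $w$ can only create cycles through $w$, so it triggers at most one SCC merge, namely of the set $\mathcal C$ of SCCs that lie on a cycle through $w$ afterwards; $\mathcal C$ is a convex subset of $G^c$, equals the new SCC of $w$, and is found by one forward and one backward reachability sweep from $w$. If $|\mathcal C|>1$ we contract $\mathcal C$ to a new representative $z$ by deleting from the DAG structure all $G^c$-edges incident to $\mathcal C$, re-inserting the surviving (now $z$-incident) ones as a single batch centered at $z$, and recomputing $T_z$; otherwise we merely feed the new $z$-incident $G^c$-edges as a batch. Deleting an edge $(u,v)$ between distinct SCCs only decrements the multiplicity of the corresponding $G^c$-edge $(A,B)$: if $(u,v)$ was its witness we pick a replacement, and only if no $A$-to-$B$ edge remains do we delete $(A,B)$ from the DAG structure (which may then re-introduce other DAG edges --- exactly the behaviour guaranteed by \Cref{thm: OurResultsDag1}). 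Deleting an edge inside an SCC $S$ may split $S$ into a topologically ordered chain $S_1,\dots,S_p$; we then expand the node $z_S$, deleting all its incident $G^c$-edges, promoting $p$ vertices of $S$ to the new representatives, re-inserting --- in topological order, one representative at a time, as at most two batches (outgoing, then incoming) centered at that representative --- the incident $G^c$-edges, and recomputing every $T_{S_i}$ from scratch. Each of these steps touches only $G$-edges incident to the affected SCC(s), i.e.\ $O(m)$ edges, and each update causes at most one merge and at most one split.

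The main obstacle is to make the interface between the SCC maintenance and the DAG data structure of \Cref{thm: OurResultsDag1} seamless: a single update can reshape a large portion of the condensation at once, so the contraction and expansion routines must be organized so that per update no $G^c$-edge is deleted or re-inserted more than a constant number of times, and so that the $O(p)$ batch insertions of an expansion telescope against the $O(m)$ $G$-edges touched; this requires the per-operation cost of the structure of \Cref{thm: OurResultsDag1} to scale with the size of each batch and with the number of changes it induces, which I would verify by inspecting its amortized analysis. Granting this, the reachability sweeps, the $T_S$ rebuilds, the witness bookkeeping and the simulated condensation updates all cost $O(m)$ per update, while the remaining $O(n\log n)$ term comes from balanced-search-tree / priority-queue bookkeeping inside the update procedure --- e.g.\ maintaining an ordering of the condensation's vertices consistent with its current topology and answering the $O(n)$ order/reachability queries needed to recompute the locally affected fragment of $(G^c)^t$ after a merge or split, each in $O(\log n)$ time. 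Summing over the update sequence yields the claimed $O(m+n\log n)$ amortized bound; this extra layer over the acyclic algorithm is exactly what makes the general case ``much more involved.''
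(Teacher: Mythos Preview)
Your reduction---maintain the condensation $G^c$ and feed it into the DAG data structure of \Cref{thm: OurResultsDag1} as a black box---is precisely the approach the paper discusses and rejects in its technical overview (\Cref{sec:overview}, general-graphs subsection). The fatal step is the SCC split. When a deletion breaks an SCC $S$ into $p$ pieces, you propose to ``re-insert \ldots\ one representative at a time'' via $O(p)$ vertex-centered batch insertions. But in the DAG data structure of \Cref{thm: OurResultsDag1}, an insertion centered at $z$ rebuilds the entire snapshot $\graph{z}$ and recomputes $\desc{z},\anc{z}$ from scratch; its cost is $\Theta(m_c)$ where $m_c$ is the current size of the condensation, not the size of the batch. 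Nothing in that analysis ``scales with the size of each batch,'' and there is no telescoping: $p$ such insertions cost $\Theta(p\cdot m_c)$. Since a single edge deletion can shatter an SCC into $p=\Theta(n)$ singletons (e.g.\ removing one edge of an induced Hamiltonian cycle), your update cost is $\Theta(nm)$ in the worst case, not $O(m+n\log n)$. Incidentally, the pieces after a split need not form a chain: take $V=\{u,v,a,b\}$ with edges $uv,va,au,vb,bu$ and delete $uv$; the resulting condensation has $a$ and $b$ incomparable.

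The paper therefore does \emph{not} run the DAG structure on the condensation. Instead it extends the DAG machinery itself: for every vertex $z$ it keeps the snapshot $\graph{z}$, but now tracks the SCCs \emph{inside each snapshot} using the decremental SCC algorithm of Roditty--Zwick~\cite{Roditty:2016aa} (\Cref{lem:extended_roditty}), which maintains the SCCs of all $n$ nested snapshots jointly in $O(m+n\log n)$ amortized time per update. The redundancy test for inter-SCC edges (\Cref{lem:redundant_general}) then mirrors the DAG case with SCCs playing the role of vertices, plus an extra ``marked parallel edge'' mechanism to pick one representative among parallel inter-SCC edges. The $n\log n$ term arises from this joint SCC maintenance and from the $O(m+n\log n)$ algorithm of~\cite{GibbonsKRST91} for recomputing a minimal strongly connected subgraph of each SCC after every update---not from any topological-order bookkeeping on the condensation.
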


In fact, the data structures from \Cref{thm: OurResultsDag1,thm: OurResultsDag2} support more general update operations: 1) insertions of a set of any number of edges, all incident to the same single vertex, and 2) the deletion of an arbitrary set of edges from the graph. Note that these \emph{extended} update operations are more powerful compared to the single edge insertions and deletions supported by more traditional dynamic data structures. For further details, we refer the reader to \Cref{subsec:combinatorial_dag} and~\Cref{subsec:combinatorial_general}. 

For sparse digraphs, our dynamic algorithms supporting insertions in $O(m + n \log n) = O(n \log n)$ are almost optimal, up to a $\log n$ factor. This is because a polynomially faster dynamic algorithm would lead to an improvement in the running time of the best-known static $O(n^2)$ algorithm for computing the transitive reduction of a sparse graph, which would constitute a major breakthrough.

Observe that within $O(m)$ amortized time spent on updating the data structure, one can explicitly list each edge of the maintained transitive reduction which is guaranteed to have at most $m$ edges. This is why \Cref{thm: OurResultsDag1,thm: OurResultsDag2} do not come with separate query bounds.

\paragraph*{Worst-case times for handling updates.}  

Our second contribution is another pair of fully dynamic data structures that maintain the transitive reduction of DAGs and general digraphs. These data structures achieve \emph{worst-case} update time (on the number of nodes) for vertex updates and sub-quadratic \emph{worst-case} update time for edge updates. This is as opposed to \Cref{thm: OurResultsDag1,thm: OurResultsDag2}, where the worst-case cost of a single update can be as much as $O(nm) = O(n^3)$. 

Both data structures rely on fast matrix multiplication and are randomized, with their exact guarantee summarized in the theorem below.

\begin{theorem} \label{thm: ourResultsAlgebraic1}
Let $G$ be a graph that is guaranteed to remain acyclic throughout any sequence of updates. Then, there are randomized Monte Carlo data structures for maintaining the transitive reduction $G^t$ of $G$
\begin{itemize}
    \item[(1)] in $O(n^2)$ worst-case update time for vertex updates, and
    \item[(2)] in $O(n^{1.528} + m)$ worst-case update time for edge updates.
\end{itemize}
Both data structures output a correct transitive reduction with high probability. They can be initialized in $O(n^\omega)$ time. %
\end{theorem}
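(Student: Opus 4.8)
My plan is to turn ``is this edge redundant in the DAG $G$'' into an algebraic test and then drive it with dynamic matrix‑inverse machinery. Put independent uniformly random weights $a_e\in\mathbb{Z}_p$ on the edges, where $p$ is a prime of size $n^{\Theta(1)}$, let $A$ be the corresponding weighted adjacency matrix, and set $N=I-A$. Since $G$ is acyclic, $A$ is nilpotent, so over $\mathbb{Z}_p$ we have $N^{-1}=\sum_{k\ge 0}A^{k}$, and for $u\ne v$ the entry $N^{-1}[u][v]$ equals the sum over all $u$–$v$ paths $P$ of $\prod_{e\in P}a_e$. As a polynomial in the $a_e$, $N^{-1}[u][v]$ is nonzero iff some $u$–$v$ path exists, and for an edge $(u,v)\in E$ it equals the single monomial $a_{uv}$ exactly when that edge is the only $u$–$v$ path, i.e.\ when $(u,v)$ is not redundant; if $(u,v)$ is redundant then $N^{-1}[u][v]-a_{uv}$ is a nonzero polynomial of degree $<n$, so by Schwartz--Zippel a random evaluation is nonzero with probability $\ge 1-n/p$. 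Union bounding over all edges and all (polynomially many) updates, w.h.p.\ throughout the whole sequence
\[
  G^{t}=\{(u,v)\in E:\ N^{-1}[u][v]=a_{uv}\}.
\]
Hence it suffices to maintain, for every current edge, the number $N^{-1}[u][v]$, together with the explicit list of edges where this equals $a_{uv}$; both data structures store these explicitly (which is why there is no separate query bound), and the $O(n^{\omega})$ preprocessing is one fast matrix inversion plus a scan of the at most $n^{2}$ edges.

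For vertex updates I would keep the whole matrix $N^{-1}$. A vertex update at $x$ rewrites row $x$ and column $x$ of $A$, so it changes $A$ by a matrix of rank at most two, $A'=A+UW^{\top}$ with $U,W\in\mathbb{Z}_p^{n\times 2}$. Writing $\tilde N=I-A'=N-UW^{\top}$, the Woodbury identity gives $\tilde N^{-1}=N^{-1}+N^{-1}U\,(I_{2}-W^{\top}N^{-1}U)^{-1}\,W^{\top}N^{-1}$, and the $2\times 2$ matrix $I_{2}-W^{\top}N^{-1}U$ has determinant $\det(\tilde N)/\det(N)=1$ (both $N$ and $\tilde N$ are $I$ minus a nilpotent matrix), hence is invertible. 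Computing $\tilde N^{-1}$ this way costs $O(n^{2})$, after which rescanning all $\le n^{2}$ edges to refresh the stored values and the list $G^{t}$ costs another $O(n^{2})$; initialization is $O(n^{\omega})$.

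For edge updates, keeping all of $N^{-1}$ explicit already costs $\Theta(n^{2})$ per update, so instead I would maintain $N^{-1}$ implicitly with a dynamic matrix‑inverse data structure in Sankowski's algebraic framework (and its refinements via rectangular matrix multiplication), which supports a single‑entry change to $A$ and the extraction of one designated row and one designated column of $N^{-1}$ in $O(n^{1.528})$ worst‑case time, with $O(n^{\omega})$ initialization. The crucial structural fact — and the place acyclicity is used — is that inserting or deleting an edge $(a,b)$ is a rank‑one change to $A$, and since a path from $b$ back to $a$ would close a cycle we have $N^{-1}_{\mathrm{old}}[b][a]=0$, so the Sherman--Morrison denominator is exactly $1$ and
\[
  N^{-1}_{\mathrm{new}}[u][v]=N^{-1}_{\mathrm{old}}[u][v]+\varepsilon\,a_{ab}\,N^{-1}_{\mathrm{old}}[u][a]\,N^{-1}_{\mathrm{old}}[b][v],
\]
with $\varepsilon=+1$ for an insertion and $\varepsilon=-1$ for a deletion. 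The procedure is then: pull column $a$ and row $b$ of the old inverse from the structure ($O(n^{1.528})$); sweep over all current edges, refreshing each stored value by the displayed formula in $O(1)$ (total $O(m)$) and, for every edge whose value actually changed — equivalently, every edge from a vertex reaching $a$ to a vertex reachable from $b$ — re‑testing whether the new value equals $a_{uv}$ and fixing its membership in $G^{t}$; add or remove the edge $(a,b)$ itself and set its status from the old inverse; and feed the single‑entry change to the data structure ($O(n^{1.528})$). This is $O(n^{1.528}+m)$ worst‑case per edge update.

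The main obstacle is the edge‑update bound. Treating a dynamic‑reachability structure as a black box would only let us re‑test pairs one at a time at the structure's (super‑constant) query cost, which is hopeless because up to $\Theta(n^{2})$ edges can change status after a single edge update; the way around it is exactly the two observations above — that the Sherman--Morrison correction for a DAG edge is denominator‑free and depends only on one row and one column of the current inverse, so the per‑edge values are maintainable with a single row/column extraction plus an $O(m)$ sweep, and that the edges whose status can change are precisely those from an ancestor of $a$ to a descendant of $b$, which that same sweep enumerates for free. What remains is routine: de‑amortizing the periodic rebuilds inside the dynamic matrix‑inverse structure so its bounds become worst‑case, drawing a fresh random weight for each inserted edge so the Schwartz--Zippel union bound covers the entire update sequence, and noting that the only possible error — retaining a redundant edge whose random value accidentally equals $a_{uv}$ — occurs with negligible probability and in any case leaves a reachability‑preserving subgraph.
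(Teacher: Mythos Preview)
Your argument is correct and arrives at the same bounds through essentially the same underlying engine (dynamic matrix inverse over $\mathbb{Z}_p$), but the route differs from the paper's in one clean way. The paper does not test redundancy directly on $G$: instead, it first proves a black-box reduction (\Cref{l:dag-reduction}) that builds a three-layer graph $G'$ on $V\cup V'\cup V''$ so that an edge $xy\in E$ is redundant iff $x$ can reach $y''$ in $G'$; it then plugs this into off-the-shelf dynamic transitive closure, i.e., it maintains $(I-A(G'))^{-1}$ via Sankowski's data structures (\Cref{thm:dyn-inv-rows,thm:dyn-inv-elem}) with the ``set of interest'' $Y=\{(x,y''):xy\in E\}$, and the $O(n^{1.528}+m)$ bound falls out of the $O(n^{\omega(1,a,1)-a}+n^{1+a}+|Y|)$ update cost at $a\approx 0.528$. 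The randomness there is only a random prime modulus to keep integer path counts small. You instead skip the reduction entirely: with random edge weights, the test $N^{-1}[u][v]\stackrel{?}{=}a_{uv}$ on $G$ itself already isolates ``is the direct edge the unique $u\to v$ path'' with high probability, and your hand-rolled Sherman--Morrison sweep (using one extracted row and one column of the old inverse) is exactly what the Sankowski structure does internally when it refreshes the entries in $Y=E$. Your approach is a bit more elementary and avoids the $3n$-vertex auxiliary graph; the paper's approach is more modular, since it reduces transitive reduction on DAGs to a primitive (dynamic reachability on a declared set of pairs) that it reuses elsewhere.
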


For general digraphs, the runtime guarantees for vertex updates remain the same, whereas for edge updates, we incur a slightly slower sub-quadratic update time.

\begin{theorem} \label{thm: ourResultsAlgebraic2}
Given a general digraph $G$, there are randomized Monte Carlo data structures for maintaining the transitive reduction $G^t$ of $G$
\begin{itemize}
    \item[(1)] in $O(n^2)$ worst-case update time for vertex updates, and
    \item[(2)] in $O(n^{1.585} + m)$ worst-case update time for edge updates.
\end{itemize}
Both data structures output a correct transitive reduction with high probability. They can be initialized in $O(n^\omega)$ time. %
\end{theorem}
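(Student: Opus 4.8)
The plan is to reduce the general case to the DAG case of \Cref{thm: ourResultsAlgebraic1} together with a separate, from‑scratch treatment of the strongly connected component (SCC) structure, reusing the same algebraic dynamic‑matrix‑inverse machinery for reachability. For reachability, work over a finite field $\mathbb{F}_p$ of large prime order, assign to every edge an independent uniformly random nonzero weight, let $A$ be the resulting weighted adjacency matrix and $M = I - A$. A Schwartz--Zippel argument shows that, with high probability, $(M^{-1})[u][v] \neq 0$ iff $u = v$ or $u$ reaches $v$ in $G$; hence the reachability matrix $R$ of $G$, the SCC partition ($u \sim v$ iff $R[u][v] = R[v][u] = 1$), and the condensation $\condensation$ are all recoverable from $M^{-1}$. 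We maintain $M^{-1}$ (equivalently $R$) with a dynamic matrix‑inverse / transitive‑closure data structure: explicitly, in $O(n^2)$ worst‑case time per vertex update, and in sub‑quadratic worst‑case time per edge update while still supporting $O(1)$‑time reachability look‑ups, after $O(n^\omega)$ initialization.

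Next, decompose a transitive reduction $G^t$ of $G$ into two parts: (i) for every SCC $S$, an inclusion‑wise minimal strongly connected spanning subgraph of $G[S]$; and (ii) for every edge $(S, S')$ of the transitive reduction $(\condensation)^t$ of the condensation, one arbitrarily chosen $G$‑edge from $S$ to $S'$. One checks that any such choice yields a valid transitive reduction (no $G$‑path leaves and re‑enters an SCC, so the within‑SCC and cross‑SCC parts are independent, and a cross edge is kept exactly for covering pairs of $\condensation$), that $|V(\condensation)| \le n$, and that $|E(\condensation)| \le m$. Part (i) is recomputed from scratch after every update in near‑linear time, which is affordable: for vertex updates this is $O(m) = O(n^2)$, and for edge updates the update bound already contains an additive $O(m)$ term (and the SCC partition itself is recomputed by a linear‑time scan). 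For part (ii), since $\condensation$ is a DAG the same characterization of transitive‑reduction edges as in the DAG case applies at the SCC level: writing $\hat R[a][b] = R[a][b] \wedge \neg R[b][a]$ for the ``strictly below in $\condensation$'' relation — which is itself a transitively closed partial order — the pair $(S_u, S_v)$ spanned by a $G$‑edge $u \to v$ with $S_u \neq S_v$ is a transitive‑reduction edge of $\condensation$ iff no SCC lies strictly between $S_u$ and $S_v$, i.e.\ iff $(\hat R \cdot \hat R)[u][v] = 0$; this depends only on $S_u, S_v$, so we retain a single $G$‑edge per passing ordered pair of SCCs. It thus suffices to test, for each of the at most $m$ cross‑SCC $G$‑edges, whether there is a walk of length $\ge 2$ in the auxiliary digraph $\hat G$ with adjacency matrix $\hat R$ (for a transitively closed $\hat G$ this is equivalent to $(\hat R^2)[u][v] > 0$).

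For vertex updates this test is read off from an explicitly maintained matrix, staying within the $O(n^2)$ budget; for edge updates we answer each test in $O(1)$ by maintaining reachability, via the dynamic reachability structure, in a constant‑depth layered copy $\hat G'$ of $\hat G$ — three layers $V \times \{0,1,2\}$, with each $\hat R$‑edge $x \to y$ contributing $(x,0) \to (y,1)$, $(x,1) \to (y,2)$, and $(x,2) \to (y,2)$, so that $(u,0)$ reaches $(v,2)$ in $\hat G'$ exactly when $(\hat R^2)[u][v] > 0$. Since $\hat G'$ has $O(n)$ vertices, this stays sub‑quadratic per update. Correctness with high probability follows by a union bound over the polynomially many Schwartz--Zippel events across the whole update sequence, and initialization is $O(n^\omega)$ (one matrix inversion plus one static transitive‑reduction computation).

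The main obstacle is part (ii) under edge updates. The auxiliary digraph encoded by $\hat R$ is not a simple algebraic function of $A$: Boolean‑ization and the entrywise ``AND with the complement of the transpose'' destroy the low‑rank structure of the change an edge update induces on $M^{-1}$, so a single $G$‑edge update can flip $\Theta(n^2)$ entries of $\hat R$ and one cannot simply chain two off‑the‑shelf dynamic‑matrix‑inverse structures. Making the length‑$\ge 2$‑walk test maintainable in sub‑quadratic worst‑case time — arranging matters so that each $G$‑edge update triggers only $O(1)$ updates to a single algebraic structure on $O(n)$ vertices — is precisely where the extra overhead is incurred, and accounts for the $n^{1.585}$ bound replacing the $n^{1.528}$ bound of \Cref{thm: ourResultsAlgebraic1}; a secondary, routine point is verifying that the near‑linear static subroutines for SCCs and for minimal strongly connected spanning subgraphs are absorbed by the stated budgets.
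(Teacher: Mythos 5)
Your high-level decomposition (handle each SCC's minimal strongly connected spanning subgraph from scratch; handle inter-SCC redundancy at the level of the condensation) matches the paper's, and part~(1) for vertex updates is essentially correct and routine. However, part~(2) for edge updates is not actually proven: the proposal identifies the core obstacle — that a single $G$-edge update can flip $\Theta(n^2)$ entries of the Boolean relation $\hat R$, so one cannot feed the changes to $\hat R$ into a second dynamic structure (your layered graph $\hat G'$ would need $\Theta(n^2)$ edge updates, each costing at least $\Omega(n)$) — but then offers no mechanism to circumvent it. Asserting that this ``accounts for the $n^{1.585}$ bound'' is not a proof; the layered-graph construction simply cannot be maintained in sub-quadratic worst-case time by the means described, and no alternative is given.

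The paper closes exactly this gap by never materializing any Boolean or condensation-derived matrix. Its central lemma (the theorem labelled \texttt{t:tr-matrix}) characterizes redundancy of a group $F=\{u_iv_i\}_{i=1}^k$ of parallel inter-SCC edges between SCCs $R,T$ directly as a polynomial identity on $O(k)$ entries of the \emph{symbolic} adjacency-matrix inverse $\tilde A(G)^{-1}$, namely on $(r,t)$, $(r,u_i)$, $(v_i,t)$ for arbitrary roots $r\in R$, $t\in T$. This makes the redundancy test a constant-time evaluation per entry once the relevant entries of $A^{-1}$ (a random scalar instantiation) are available. The remaining difficulty — that the needed ``rooting'' pairs $(r,u_i)$, $(v_i,t)$, $(r,t)$ depend on the evolving SCC structure and are not in $1$--$1$ correspondence with edges — is handled by a heavy/light SCC distinction combined with a random hitting set $S$ of size $\Theta(n^{1-\delta}\log n)$: the dynamic matrix inverse data structure (Sankowski's element-maintenance variant) is asked to keep $A^{-1}$ on the set $Y$ consisting of $(S\times V)\cup(V\times S)$, all edge pairs $E$, and $B\times B$ for every light SCC $B$, so that any needed root can be chosen inside $Y$. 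The update cost of maintaining $Y$, including square-submatrix queries for freshly created light SCCs after splits, is what produces the $O(n^{\omega(1,a,1)-a}+n^{1+a}+n^{1-\delta+\omega(\delta,a,\delta)}+n^{2-\delta}+m)$ bound, optimized to $O(n^{1.585}+m)$. None of this machinery appears in your proposal, so the edge-update bound remains unsupported.
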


All our data structures require $O(n^2)$ space, similarly to the partially dynamic data structures proposed by~\cite{lapoutre1988}. Going beyond the quadratic barrier in space complexity is known to be a very challenging task in data structures for all-pairs reachability. For example, to the best of our knowledge, it is not even known whether there exists a \emph{static} data structure with $O(n^{2-\epsilon})$ space and answering arbitrary-pair reachability queries in $O(m^{1-\epsilon})$ time.
Finally, recall that in certain applications eliminating redundant edges in a time-efficient manner is vital, and quadratic space is not a bottleneck. %

\subsection{Related work}

Due to their wide set of applications, sparsification techniques have been studied for many problems in both undirected and directed graphs.
For undirected graphs, some notable examples include sparsification for $k$-connectivity \cite{han1995computing, nagamochi1992linear}, shortest paths~\cite{althofer1993sparse, peleg1989graph,baswana2007simple}, cut sparsifiers~\cite{benczur1996approximating, benczur2015randomized, fung2011general}, spectral sparsifiers~\cite{spielman2008graph, spielman2011spectral}, and many more.
For directed graphs, applications of sparsification have been studied for reachability \cite{Aho:1972aa}, strong connectivity \cite{han1995computing, vetta2001approximating}, shortest paths~\cite{King:1999aa,roditty2008roundtrip}, $k$-connectivity \cite{georgiadis2016sparse, laekhanukit2012rounding, cheriyan2000approximating}, cut sparsifiers~\cite{cen2020sparsification}, spectral sparsifiers for Eulerian digraphs~\cite{cohen2016faster, sachdeva2023better}, and many more.

There is also a large body of literature on maintaining graph sparsifiers on dynamic undirected graphs. Examples of this body of work includes dynamic sparsifiers for connectivity \cite{holm2001polylogarithmic}, shortest paths \cite{baswana2012fully, forster2019dynamic, bernstein2021adeamortization}, cut and spectral sparsifiers~\cite{ittai2016onfully, bernstein2022fully}, and $k$-edge-connectivity~\cite{aamand2023optimal}. 

\subsection{Organization}
In \Cref{sec:prelims}, we set up some notation. \Cref{sec:overview} provides a technical overview of our algorithms for both DAGs and general graphs.
\arxivVsConference{}{
We then present the simpler data structures for DAGs in \Cref{sec:reduction,subsec:combinatorial_dag}. Due to space constraints, the detailed description of our data structures for general graphs, together with some proofs, is deferred to the appendix.
}
\arxivVsConference{
We then present the combinatorial data structures for DAGs and general graphs in \Cref{sec:combinatorial}, 
followed by the algebraic data structures in \Cref{sec:algebraic}.
}{}

\section{Preliminaries}\label{sec:prelims}
In this section, we introduce some notation and review key results on transitive reduction in directed graphs, which will be useful throughout the paper. 

\paragraph*{Graph Theory}
Let \( G = (V, E) \) be a directed, unweighted, and loop-less graph where \( |V| = n \) and \( |E| = m \).
For each edge \( xy \), we call \( y \) an \textit{out-neighbor} of \( x \), and \( x \) an \textit{in-neighbor} of \( y \).
A \textit{path} is defined as a sequence of vertices \( P =  \left\langle v_1, v_2, \dots , v_k \right\rangle \) where \( v_i v_{i+1} \in E \) for each \( i \in [k-1] \).
We call \( v_1 \) and \( v_k \) as the \textit{first} vertex and the \textit{last} vertex of \( P \), respectively.
The \textit{length} of a path \( P \), \( |P| \), is the number of its edges.
For two (possibly empty) paths \( P = \left\langle u_1, u_2, \dots , u_k \right\rangle \) and \( Q = \left\langle v_1, v_2, \dots , v_l \right\rangle \), where \( u_k = v_1 \), the concatenation of \( P \) and \( Q \) is the path obtained by identifying the last vertex of \( P \) with the first vertex of \( Q \).
i.e.\ the path \( \left\langle u_1, u_2, \dots , u_k = v_1, v_2, \dots , v_l \right\rangle \).
A \textit{cycle} is a path whose first and last vertices are the same, i.e., \( v_1 = v_k \).
We say that \( G \) is a \textit{directed acyclic graph} (DAG) if \( G \) does not have any cycle.
We say there is a path \( u \rightarrow v \) from \( u \) to \( v \) (or \( u \) can reach \( v \)) if there exists a path \( P = \left\langle v_1, v_2, \dots , v_k \right\rangle \) with \( v_1 = u \) and \( v_k = v \).

A graph \( H \) is called a \textit{subgraph} of \( G \) if \( H \) can be obtained from \( G \) by deleting (possibly empty) subsets of vertices and edges of \( G \).
For a set \( U \subseteq E \), we define \( G \setminus U \) as the subgraph of \( G \) obtained by deleting edges in \( U \) from \( G \). 
We also define \( G \setminus \edge{u}{v} = G \setminus \{ \edge{u}{v} \} \).

\paragraph*{Transitive Reduction} \label{subsec:TR}

A transitive reduction of a graph \(G = (V, E)\) is a graph \( \TR = (V, \tilde{E})\) with the fewest possible edges that preserves the reachability information between every two vertices in \(G\).
i.e., for arbitrary vertices \( u,v \in V \), there is a \( u \rightarrow v \) path in \( G \) iff there is a \( u \rightarrow v \) path in \TR.
Note that \TR may not be unique and might not necessarily be a subgraph of \(G\).

For a \DAG{G}, \( G \) has a unique transitive reduction \TR \cite{Aho:1972aa}.
They presented an algorithm to compute \TR by identifying and eliminating the \textit{redundant} edges.
We say that edge \( xy \in E\) is redundant if there is a directed path \( x \rightarrow y \) in \( G \setminus xy \).
\begin{theorem}[\cite{Aho:1972aa}] \label{th:DAG_old_paper}
Every \DAG{G} has a unique transitive reduction \TR that is a subgraph of~\( G \) and is computed by eliminating all redundant edges of \( G \).
\end{theorem}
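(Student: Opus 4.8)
The plan is to let $R$ denote the set of redundant edges of $G$, set $G' := G \setminus R$, and prove three things: $G'$ preserves reachability, $G'$ is a subgraph of $G$ (immediate), and $G'$ is in fact the \emph{unique} graph on $V$ with the fewest edges preserving reachability. Two elementary consequences of $G$ being a DAG are used repeatedly: (i) concatenating a path from $a$ to $b$ with a path from $b$ to $c$ yields a simple path, since any repeated vertex $w \ne b$ would give both $w \rightarrow w$ via $b$, contradicting acyclicity; and (ii) for any ordered pair of vertices, the length of the longest path between them is a well-defined finite number.

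The technical core is a rerouting lemma: \emph{if $xy$ is redundant, then $G'$ still contains a path from $x$ to $y$}. I would prove this by strong induction on $\ell$, the length of the longest $x \rightarrow y$ path in $G$ (note $\ell \ge 2$ since $xy$ is redundant). Fix a witnessing path $x = w_0, w_1, \dots, w_k = y$ with $k \ge 2$. For each $i$, concatenating the prefix $w_0 \cdots w_i$, a longest $w_i \rightarrow w_{i+1}$ path (of length $\ell_i$), and the suffix $w_{i+1} \cdots w_k$ gives, by (i), a simple $x \rightarrow y$ path of length $\ell_i + (k-1)$, so $\ell_i \le \ell - (k-1) \le \ell - 1$. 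Hence each edge $w_i w_{i+1}$ is either non-redundant, in which case it is itself a one-edge $G'$-path, or redundant with strictly smaller longest-path parameter $\ell_i$, in which case the induction hypothesis supplies a $G'$-path from $w_i$ to $w_{i+1}$. Concatenating these pieces and deleting cycles yields a $G'$-path from $x$ to $y$. Given the lemma, $G'$ preserves reachability: it is a subgraph of $G$, so one direction is trivial, and conversely any $G$-path can have each of its redundant edges replaced by a $G'$-path and then be reduced to a simple $G'$-path.

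Next I would show that \emph{every} graph $H = (V, E_H)$ with the same reachability as $G$ contains all non-redundant edges of $G$; this yields both minimality and uniqueness in one stroke. Let $xy$ be non-redundant, so the edge itself is the only $x \rightarrow y$ path in $G$. Since $x$ reaches $y$ in $G$, it does so in $H$; if $xy \notin E_H$, then $H$ has an $x \rightarrow y$ path of length $\ge 2$ through some intermediate vertex $z$, so $H$ — and hence $G$ — contains paths $x \rightarrow z$ and $z \rightarrow y$, whose concatenation is, by (i), a length-$\ge 2$ path from $x$ to $y$ in $G$, contradicting non-redundancy of $xy$. Thus $E(G') = \{\text{non-redundant edges of } G\} \subseteq E_H$, so $|E_H| \ge |E(G')|$, with equality only if $E_H = E(G')$. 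Combined with the previous paragraph (which shows $G'$ itself is reachability-preserving and a subgraph of $G$), this proves that $G'$ is the unique transitive reduction and that it is exactly $G$ with all redundant edges removed.

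I expect the rerouting lemma to be the main obstacle: the naive attempt to delete redundant edges one at a time is awkward because rerouting around one redundant edge may pass through another redundant edge, so one must argue that deleting \emph{all} of $R$ at once still preserves reachability. The strong induction on the longest-path length between the endpoints is precisely the device that breaks this circularity, and it leans on acyclicity both to keep the relevant concatenations simple and to guarantee that this measure is finite and strictly decreases along the recursion.
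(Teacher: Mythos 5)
The paper does not prove this theorem: it cites the result directly from Aho, Garey, and Ullman (1972) and uses it as a black box, so there is no in-paper proof to compare your attempt against. Judged on its own, your blind proof is correct and self-contained. The heart of it, the rerouting lemma, is precisely where a naive ``delete redundant edges one at a time'' argument would fail, and your strong induction on the length of the longest $x\to y$ path cleanly breaks that circularity. The inductive step is sound: for a redundant edge $xy$ with witness path $w_0,\dots,w_k$ ($k\ge 2$) avoiding $xy$, splicing a longest $w_i\to w_{i+1}$ path into the witness yields (by acyclicity) a simple $x\to y$ path of length $\ell_i + k - 1 \le \ell$, so $\ell_i \le \ell - 1 < \ell$ and the induction hypothesis applies to every redundant edge along the witness. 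Your minimality and uniqueness argument --- that every reachability-preserving digraph on $V$ must contain every non-redundant edge of $G$ --- is also correct; the implicit but essential observation is that a simple $x\to y$ path of length at least two cannot contain the edge $xy$ itself, so producing one certifies redundancy and gives the desired contradiction. One tiny cosmetic point: the final ``deleting cycles'' step in the rerouting lemma is superfluous, since iterated application of your fact~(i) already guarantees the concatenation of the $G'$-segments is simple.
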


For a general graph \( G \), a transitive reduction \TR can be obtained by replacing every strongly connected component (SCC) in \( G \) with a cycle and removing the redundant inter-SCC edges one-by-one \cite{Aho:1972aa}.
But if we insist on \TR being a subgraph of \( G \), then finding \TR is NP-hard since \( G \) has a Hamiltonian cycle iff \TR is a cycle consisting of all vertices of \( G \).
To overcome this theoretical obstacle, we consider a \textit{minimal} transitive reduction \TR of \( G \).
Given a graph \( G \), we call \TR a minimal transitive reduction of \( G \) if removing any edge from the subgraph \TR results in \TR no longer being a transitive reduction of \( G \).
For the rest of this paper, we assume \TR is a subgraph of \( G \).  At the same time, we once again stress this is merely for convenience and all our algorithms can be easily adapted to maintain the minimum-size reachability preserver with SCCs replaced with cycles.

\section{Technical overview}\label{sec:overview}
Dynamic transitive closure has been extensively studied, with
efficient combinatorial~\cite{Roditty08, Roditty:2016aa} and algebraic~\cite{BrandNS19, Sankowski04} data structures known.
As mentioned before, computing  transitive reduction is closely
related to computing the transitive closure~\cite{Aho:1972aa}.
This is why, in this work, we adopt the general approach of reusing some of the technical ideas developed in the prior literature on dynamic transitive closure.
The main challenge lies in our goal to achieve near-linear update time in the number of edges $m$ and \emph{constant} query time,
or explicit maintenance of the transitive reduction.
Existing dynamic transitive closure data structures such as those in~\cite{Roditty08, Sankowski04}
have either $O(n^2)$ update time and $O(1)$ arbitrary-pair query time (which is optimal
if the reachability matrix is to be maintained explicitly),
or polynomial query time~\cite{BrandNS19, Roditty:2016aa, Sankowski04}.

To maintain the transitive reduction, we do not need to support arbitrary reachability queries. Instead, we focus on maintaining specific reachability information between $m$ pairs of vertices connected by an edge. This reachability information, however, is more sophisticated than in the case of transitive closure.

\subsection{DAGs.} Let us first consider the simpler case when $G$ is a DAG.
Recall (\Cref{th:DAG_old_paper}) that the problem boils down to identifying redundant edges. To test the redundancy of an edge $uv$,
we need to maintain information on whether a $u\to v$ path exists in the graph $G\setminus uv$.

\subsubsection{A reduction to the transitive closure problem}
In acyclic graphs, a reduction resembling that of~\cite{Aho:1972aa} (see~\Cref{l:dag-reduction}) carries quite well to the fully dynamic setting.
Roughly speaking, one can set up a graph $G'$ with two additional copies of every vertex and edge, so that for all $u,v\in V$, paths $u''\to v''$ between the respective second-copy vertices $u'',v''$ in~$G'$ correspond precisely to \emph{indirect} paths $u\to v$ in $G$, i.e.,
$u\to v$ paths avoiding the edge $uv$.
Clearly, an edge $xy$ is redundant if indirect $x\to y$ paths exist.
As a result, one can obtain a dynamic transitive reduction data structure by setting up a dynamic transitive closure data structure on $G$ and issuing $m$ reachability queries after each update.
The reduction immediately implies an $O(n^2)$ worst-case update bound (Monte Carlo randomized) and $O(n^2)$ amortized update bound (deterministic)
for the problem in the acyclic case via~\cite{Roditty08, Sankowski04}, even in the case of vertex updates. This is optimal for dense acyclic graphs, at least if one is interested in maintaining the reduction explicitly.\footnote{Indeed, consider a graph $G=(V_1\cup V_2\cup \{s,t\},E)$ with $E=(V_1\times V_2)\cup \{vs:v\in V_1\}\cup \{tv:v\in V_2\}$, where $n=|V_1|=|V_2|$. No edges in this acyclic graph are redundant. However, adding the edge $st$ makes all $n^2$ edges $V_1\times V_2$ redundant. Adding and removing $st$ back and forth causes $\Theta(n^2)$ amortized change in the transitive reduction.}

The reduction works best if the transitive closure data structure can maintain some (mostly fixed) $m$ reachability pairs $Y\subseteq V\times V$ of interest more efficiently than via issuing $m$ reachability queries. In our use case, the reachability pairs $Y$ correspond to the edges of the graph, so an update of $G$ can only change $Y$ by one pair in the case of single-edge updates, and by $n$ pairs sharing a single vertex in the case of vertex updates to $G$.
Some algebraic dynamic reachability data structures based on dynamic matrix inverse~\cite{Sankowski04, DBLP:conf/cocoon/Sankowski05} come with such a functionality out of the box. By applying these data structures on top of the reduction, one can obtain an $O(n^{1.528}+m)$ worst-case bound for updating the transitive reduction of an acyclic digraph (see~\Cref{thm:algebraic-dag} for details). Interestingly, the $n^{1.528}$ term (where the exponent depends on the current matrix multiplication exponent~\cite{williams2024new} and simplifies to $n^{1.5}$ if $\omega=2$) typically emerges in connection with single-source reachability problems, and $O(n^{1.528})$ is indeed conjectured optimal for fully dynamic single-source reachability~\cite{BrandNS19}. One could say that our algebraic transitive reduction data structure for DAGs meets a fairly natural barrier.

\subsubsection{A combinatorial data structure with amortized linear update time bound}
The algebraic reachability data structures provide worst-case guarantees but are more suitable for denser graphs.
In particular, they never achieve near-linear in $n$ update bounds, even when applied to sparse graphs.
On the other hand, some combinatorial fully dynamic reachability data structures, such as~\cite{RodittyZ08}, do achieve near-linear in $n$ update bound.
However, these update bound guarantees (1) are only amortized, and (2) come with a non-trivial polynomial query procedure that does not inherently support maintaining a set $Y$ of reachability pairs of interest.
This is why relying on the reduction (\Cref{l:dag-reduction}) does not immediately improve the update bound for sparse graphs. 
Instead, we design a data structure tailored specifically to maintain the transitive reduction of a DAG (\Cref{thm: OurResultsDag1}).
We later extend it to 
general graphs, ultimately obtaining \Cref{thm: OurResultsDag2}.

First of all, we prove that
given a source vertex $s$ of a \DAG{G}, one can efficiently maintain, subject to edge deletions, whether an indirect $s\to t$ path 
exists in~$G$ for every outgoing edge $st\in E$ (\Cref{lem:extended_italiano}). 
This ``indirect paths'' data structure 
is based on extending the decremental single-source
reachability data structure of~\cite{Italiano:1988aa}, which is a classical combinatorial data structure with an optimal $O(m)$ total update time.

Equipped with the above, we apply the strategy used in the reachability data structures of~\cite{RodittyZ08, Roditty:2016aa}: every time an insertion of edges centered at a vertex $z$ issued, we build a new decremental single-source indirect paths data structure $D_z$ ``rooted'' at $z$ and initialized with the current snapshot of $G$.
Such a data structure will not accept any further insertions, so the future insertions to $G$ are effectively ignored by~$D_z$.
Intuitively, the responsibility of~$D_z$ is to handle (i.e., test the indirectness of) paths in the \emph{current} graph $G$ \emph{whose most recently
updated vertex is $z$}.\footnote{In the transitive closure data structures~\cite{RodittyZ08, Roditty:2016aa},
the concrete goal of an analogous data structure rooted at $z$ is to efficiently query for the existence of a path $x\to y$, where \( z \) is the most recent updated vertex along the path.} 
This way, handling each path in $G$ is delegated to precisely one maintained
decremental indirect paths data structure rooted at a single vertex. 

Compared to the data structures of~\cite{RodittyZ08, Roditty:2016aa},
an additional layer of global bookkeeping is used to aggregate the counts of alternative
paths from the individual data structures $D_z$.
That is, for an arbitrary \( z\in V \), \( D_z \) contributes to the counter iff it contains an alternative path. Using~this~technique,
 if the count is zero for some edge $uv$, then $uv$ is not redundant. This allows constant-time redundancy queries, or, more generally, explicitly maintaining the set of edges in the transitive reduction.

We prove that all actions we perform can be charged to the $O(m)$ initialization time of the decremental data structures $D_z$. 
Since, upon each insert update (incident to the same vertex), we (re)initialize only one data structure $D_z$, the amortized update time of the data structure is $O(m)$. 

\subsection{General graphs}
For maintaining a transitive reduction of a general directed graph, the black-box reduction to fully dynamic transitive closure (\Cref{l:dag-reduction}) breaks. A natural idea to consider is to apply it to the acyclic \emph{condensation} of $G$ obtained by contracting the strongly connected components (SCCs).
However, a single edge update to $G$ might change the SCCs structure of $G$ rather dramatically.\footnote{For example, a single vertex of the condensation may break into~$n$ vertices forming a path (consider removing a single edge of an induced cycle).}
This, in turn, could enforce many single-edge updates to the condensation, not necessarily centered around a single vertex.
Using a dynamic transitive closure data structure (accepting edge updates) for maintaining the condensation in a black-box manner would then be prohibitive: all the known fully dynamic transitive closure data structures have rather costly update procedures, which are at least linear in $n$. 
Consequently, handling the (evolving) SCCs in a specialized non-black-box manner seems inevitable here.

Nevertheless, using the condensation of \( G \) may still be useful. First, since we are aiming for near-linear (in $m$) update time anyway (recall that $O(m^{1-\epsilon})$ update time is likely impossible), the  $O(n)$-edge transitive reductions of the individual SCCs can be recomputed from scratch. This is possible since a minimal strongly connected subgraph of a strongly connected graph can be computed in $O(m+n\log{n})$ time~\cite{GibbonsKRST91}.

The above allows us to focus on detecting the redundant \emph{inter-SCC} edges $xy$, that is, edges connecting two different SCCs $X,Y$ of $G$. The edge $xy$ can be redundant for two reasons. First, if there exists an indirect $X\to Y$ path in the condensation of $G$,
then $xy$ is redundant by the arguments we have used for DAGs.
Second, if there are other \emph{parallel} edges
$x_1y_1,\ldots,x_ky_k$ such that $x_i\in X$ and $y_i\in Y$.
In the latter case, if there is no indirect path \( X \rightarrow Y \), then clearly the transitive reduction
contains precisely one of $xy,x_1y_1,\ldots,x_ky_k$. In other words, all these edges but one (not necessarily $xy$) are redundant.

\subsubsection{Extending the combinatorial data structure.}
Extending the data structure for DAGs to support SCCs involves addressing some technical challenges.
The decremental indirect path data structures need to be generalized to support detecting indirect paths in the condensation.
The approach of~\cite{Italiano:1988aa} breaks for general graphs, though.
One solution here would be to adapt the near-linear decremental single-source reachability for general graphs~\cite{BernsteinPW19}.
However, that data structure is randomized, slower by a few log factors, and much more complex.
 Instead, as in~\cite{RodittyZ08,Roditty:2016aa}, we take advantage of the fact that we always maintain $n$ decremental indirect paths data structures on a nested family of snapshots of $G$. 
 This allows us to apply an algorithm from~\cite{Roditty:2016aa} to compute how the SCCs decompose as a result of deleting some edges in \emph{all} the snapshots at once, in $O(m+n\log{n})$ time. 
Since the SCCs in snapshots decompose due to deletions, the condensations are not completely decremental in the usual sense: some intra-SCC edges may turn into inter-SCC edges and thus are added to the condensation.
Similarly, the groups of parallel inter-SCC edges may split, and some edges that have previously been parallel may lose this status. 
Nevertheless, we show that all these problems can be efficiently addressed within the amortized bound of $O(m+n\log{n})$ which matches that of the fully dynamic reachability data structure of~\cite{Roditty:2016aa}. Similarly as in DAGs, one needs to check $O(1)$ counters and flags to test whether some edge of $G$ is redundant or not; this takes $O(1)$ time. The details can be found in \arxivVsConference{\Cref{subsec:combinatorial_general}}{Appendix~\ref{subsec:combinatorial_general}}.

\subsubsection{Inter-SCC edges in algebraic data structures for general graphs.} 
As indicated previously, operating on the condensation turns out to be very hard when using the dynamic matrix inverse machinery~\cite{BrandNS19, Sankowski04}.
Intuitively, this is because these data structures model edge updates as entry changes in the adjacency matrix, and this is all the dynamic matrix inverse can accept.
It seems that if we wanted an algebraic data structure to maintain the condensation, we would need to update it explicitly by issuing edge updates. Recall that there could be $\Theta(n)$ such edge updates to the condensation per single-edge update issued to $G$, even if $G$ is sparse.
This is prohibitive, especially since any updates to a dynamic matrix inverse data structure take time superlinear in $n$.

To deal with these problems, we refrain from maintaining the condensation.
Instead, we prove an algebraic characterization of the redundancy of a group of parallel inter-SCC edges \linebreak ${F=\{u_1v_1,\ldots,u_kv_k\}}$ connecting two distinct SCCs $R,T$ of $G$.
Specifically, we prove that if $\tilde{A}(G)$ is the \emph{symbolic adjacency matrix}~\cite{Sankowski04} (i.e., a variant of the adjacency matrix with each~$1$ corresponding to an edge $uv$ is replaced with an independent indeterminate $x_{u,v}$), then in order to test whether~$F$ is redundant it is enough to verify a certain polynomial identity (\Cref{t:tr-matrix}) on some $k+1$ elements of the inverse $(\tilde{A}(G))^{-1}$ whose elements are degree $\leq n$ multivariate polynomials.
Consequently, through all groups~$F$ of parallel inter-SCC edges in $G$, we obtain
that $O(n+m)$ elements of the inverse have to be inspected to deduce which inter-SCC edges are redundant.

By a standard argument involving the Schwartz-Zippel lemma, in the implementation, we do not actually need to operate on polynomials (which may contain an exponential number of terms of degree $\leq n$).
Instead, for high-probability correctness it is enough to test which of the aforementioned identities hold for some random variable substitution from a sufficiently large prime field $\mathbb{Z}/p\mathbb{Z}$ where $p=\Theta(\poly{n})$.
Since the inverse of a matrix can be maintained explicitly in $O(n^2)$ worst-case time subject to row or column updates, this yields a transitive reduction data structure with $O(n^2)$ worst-case bound per vertex update (see \Cref{thm:algebraic-general}, item~(1)).

Obtaining a better worst-case bound for sparser graphs in the single-edge update setting is more challenging.
Unfortunately, the elements of the inverse used for testing the redundancy of a group~$F$ of parallel intra-SCC edges do not quite correspond to the individual edges of~$F$; they also depend, to some extent, on the global structure of $G$. Specifically, the aforementioned identity associated with $F$ involves elements $(r,u_1),\ldots,(r,u_k)$, $(v_1,t),\ldots,(v_k,t)$, and $(r,t)$ of the inverse, where $r,t$ are arbitrarily chosen \emph{roots} of the SCCs $R$ and $T$, respectively.

Dynamic inverse data structures with subquadratic update time~\cite{BrandNS19, Sankowski04} (handling single-element matrix updates) generally do not allow accessing arbitrary query elements of the maintained inverse in constant time; this costs at least $\Theta(n^{0.5})$ time.
They can, however, provide constant-time access to a specified \emph{subset of interest} $Y\subseteq V\times V$ of its entries, at the additive cost $O(|Y|)$ in the update bound. Ideally, we would like $Y$ to contain all the $O(m+n)$ elements involved in identities to be checked. However, the mapping of vertices $V$ to the roots of their respective SCCs may quickly become invalid due to adversarial edge updates causing SCC splits and merges. Adding new entries to $Y$ is costly, though: e.g., inserting $n$ new elements takes $\Omega(n^{1.5})$ time.

We nevertheless deal with the outlined problem by applying the hitting set argument~\cite{UY91} along with a heavy-light SCC distinction.
For a parameter $\delta\in (0,1)$, we call an SCC  heavy if it has $\Theta(n^{\delta})$ vertices, and light otherwise.
We make sure that at all times our set of interest $Y$ in the dynamic inverse data structure contains
\begin{enumerate}[label=(\arabic*),topsep=3pt]
\item the edges~$E$,
\item a sample $(S\times V)\cup (V\times S)$ of rows and columns for a random subset $S$ (sampled once) of size $\Theta(n^{1-\delta}\log{n})$, and 
\item all $O(n^{1+\delta})$ pairs $(u,v)$ such that $u$ and $v$ are both in the same \emph{light} SCC.
\end{enumerate}
This guarantees that the set $Y$ allows for testing all the required identities in $O(n+m)$ time, has size $\tilde{O}(n^{1+\delta}+n^{2-\delta}+m)$ and evolves by $O(n^{1+\delta})$ elements per single-edge update, all aligned in $O(n)$ small square submatrices. For an optimized choice of $\delta$, the worst-case update time of the data structure is $O(n^{1.585}+m)$, i.e., slightly worse than we have achieved for DAGs.

It is an interesting problem whether the transitive reduction of a general directed graph can be maintained within the natural $O(n^{1.528}+m)$ worst-case bound per update. The details can be found in \arxivVsConference{\Cref{sec:app_algebraic_general}}{Appendix~\ref{sec:app_algebraic_general}}.

\section{Reductions}\label{sec:reduction}

In this section, we provide a reduction from fully dynamic transitive reduction to fully dynamic transitive closure on DAGs.

\begin{lemma}\label{l:dag-reduction}
Let $G=(V,E)$ be a fully dynamic digraph that is always acyclic. Let\linebreak
${Y=\{(x_i,y_i):i=1,\ldots,k\}\subseteq V\times V}$.
Suppose there exists a data structure maintaining whether paths $x_i\to y_i$
exist in $G$ for $i=1,\ldots,k$ subject to either:
\begin{itemize}
    \item fully dynamic single-edge updates to $G$ and single-element updates to $Y$,
    \item fully dynamic single-vertex updates to $G$ (i.e. changing any number of edges incident to a single vertex $v$) and single-vertex updates to $Y$ (i.e., for some $v\in V$, inserting/deleting from $Y$ any number of pairs $(x,y)$ satisfying
    $v\in \{x,y\}$).
\end{itemize}
Let $T(n,m,k)$ be the (amortized) update time of the assumed data structure.

Then, there exists a fully dynamic transitive reduction data structure supporting
the respective type of updates to $G$ with $O(T(n,m,m))$ amortized update time.
\end{lemma}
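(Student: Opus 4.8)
The plan is to reduce in a black-box manner to the assumed ``reachability for a family of pairs'' data structure by running it on an auxiliary acyclic graph $G'$ obtained from $G$ by adding a single ``shadow'' copy $\hat V=\{\hat v:v\in V\}$ of the vertices. The guiding observation --- and the only place acyclicity is used --- is that an edge $uv\in E$ is redundant if and only if $G$ contains an \emph{indirect} $u\to v$ path, i.e.\ one of length at least~$2$: in a DAG such a path can never traverse the edge $uv$ itself (that would create a cycle through $u$ or through $v$), so ``indirect $u\to v$ path exists'' coincides with ``$u\to v$ path exists in $G\setminus uv$''. By \Cref{th:DAG_old_paper}, maintaining which edges are redundant is exactly maintaining $G^t$, so it suffices to design $G'$ so that indirect $u\to v$ paths in $G$ correspond precisely to the reachability of $\hat v$ from $\hat u$ in $G'$.

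Concretely, take $G'=(V\cup\hat V,E')$ where, for every edge $xy\in E$, the set $E'$ contains the three edges $x\to y$, $\hat x\to y$, and $x\to\hat y$; thus $|V'|=2n$ and $|E'|\le 3m$, and $G'$ is acyclic. Any $\hat u\to\hat v$ path in $G'$ must leave $\hat u$ into $V$ (a copy of some edge $uw_1\in E$), travel inside $V$ along original edges, and enter $\hat v$ from $V$ (a copy of some edge $w_{k-1}v\in E$); mapping $\hat x\mapsto x$ projects it to a $u\to v$ walk in $G$ of length $\ge 2$, which in a DAG is a simple path avoiding $uv$, i.e.\ an indirect $u\to v$ path. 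Conversely, every indirect $u\to v$ path of $G$ lifts to a $\hat u\to\hat v$ path in $G'$. Hence $uv\in E$ is redundant iff $G'$ has a $\hat u\to\hat v$ path. We therefore run the assumed data structure on $G'$ with the pair set $Y'=\{(\hat u,\hat v):uv\in E\}$, so $k=|Y'|=m$, and maintain $G^t$ explicitly as $G$ with all currently redundant edges removed: whenever an update flips one of the $|Y'|$ maintained reachability bits we toggle the corresponding edge in/out of $G^t$. (This uses the mild, standard assumption --- met by all instantiations in this paper --- that the assumed data structure reports, within its update time, the pairs whose answer changed; one can also simply drop explicit maintenance and answer ``is $uv$ redundant?'' by an $O(1)$ lookup.)

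It remains to translate updates, for which the key point is that updates stay local in $G'$. Inserting or deleting a single edge $uv$ in $G$ becomes inserting/deleting the three edges $u\to v$, $\hat u\to v$, $u\to\hat v$ of $G'$ and adding/removing the single pair $(\hat u,\hat v)$ from $Y'$. A single-vertex update at $v$ in $G$ changes, in $G'$, only edges incident to $v$ (the copies $v\to x$, $x\to v$, $\hat x\to v$, $v\to\hat x$) and edges incident to $\hat v$ (the copies $\hat v\to x$, $x\to\hat v$), so it decomposes into one single-vertex update at $v$ and one at $\hat v$; correspondingly $Y'$ changes only by pairs incident to $\hat v$, i.e.\ one single-vertex update to $Y'$. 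Thus each update to $G$ triggers $O(1)$ updates of the matching kind to $(G',Y')$ on an instance with $2n$ vertices, $O(m)$ edges, and $|Y'|=m$, for amortized update time $O(T(2n,3m,m))=O(T(n,m,m))$, using as usual that $T$ is a well-behaved function.

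I expect the genuine content to lie in two places. First, the equivalence ``$uv$ redundant $\iff$ $\hat u$ reaches $\hat v$ in $G'$'' really uses that $G$ is acyclic (walks project to simple paths, and indirect paths cannot reuse $uv$), and this is precisely why the reduction breaks for general digraphs. Second, the update-locality claim must be checked with a little care: one has to enumerate the few families of $G'$-edges that a single $G$-edge spawns and confirm that, under a vertex update at $v$, every affected $G'$-edge is incident to $v$ or to $\hat v$, and every affected pair of $Y'$ is incident to $\hat v$. Both steps are routine once set up correctly.
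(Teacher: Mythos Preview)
Your proposal is correct and follows the same approach as the paper: build an auxiliary acyclic graph in which designated reachability pairs encode the existence of indirect (length~$\ge 2$) $u\to v$ paths in $G$, then observe that each update to $G$ translates into $O(1)$ updates of the matching type to the auxiliary structure and its pair set. The only difference is cosmetic --- the paper uses three vertex copies $V,V',V''$ and queries pairs $(x,y'')$ for $xy\in E$, whereas you use two copies $V,\hat V$ and query $(\hat u,\hat v)$ --- and your variant is in fact slightly more economical.
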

\begin{proof}
Let $V'=\{v':v\in V\}$ and $V''=\{v'':v\in V\}$ be two copies of the vertex set~$V$.
Let $E'=\{uv':uv\in E\}$ and $E''=\{u'v'':uv\in E\}$.
Consider the graph
\begin{equation*}
    G'=(V\cup V'\cup V'',E\cup E'\cup E'').
\end{equation*}
Note that a single-edge (resp., single-vertex) update to $G$ can be translated to three updates
of the respective type issued to $G'$.

Let us now show that an edge $xy\in E$ is redundant in $G$ if and only if
there is a path $x\to y''$ in $G'$.
Since $G$ is a DAG, if $xy$ is redundant, there exists an $x\to y$ path in $G$
with at least two edges, say $Q\cdot wz\cdot zy$,
where $Q=x\to w$ (possibly $w=x$). By the construction
of $G'$, $wz',z'y''\in E(G')$, and $Q\subseteq G\subseteq G'$.
As a result, $x$ can reach $y''$
in $G'$.
Now suppose that there is a path
from $x$ to $y''$ in $G'$.
By the construction of $G'$,
such a path is of the form
$Q\cdot wz'\cdot z'y''$ (where $Q\subseteq G$)
since $V''$ has incoming edges only
from $V'$, and $V'\cup V''$ has incoming edges only
from $V$.
The $x\to y$ path $Q\cdot wz\cdot zy\subseteq G$ has at least two edges and $x\neq z$ (by acyclicity),
so $xy$ is indeed a redundant edge.

It follows that the set of redundant edges of $G$ can be maintained using a dynamic transitive closure data structure set up on the graph $G'$ with the set $Y$ (of size $k=m$) equal
to $\{uv'':uv\in E\}$.
To handle a single-edge update to $G$,
we need to issue three single-edge updates
to the data structure maintaining $G'$,
and also issue a single-element update to the set $Y$. Analogously 
for vertex updates: an update centered
at $v$ causes an update to $Y$ such that
all inserted/removed elements have one of its coordinates equal to $v$ or $v''$.
\end{proof}

\section{Combinatorial Dynamic Transitive Reduction} \label{sec:combinatorial}

\arxivVsConference{
We start by explaining the data structure for DAGs in \Cref{subsec:combinatorial_dag},
and then present the general case in \Cref{subsec:combinatorial_general}.
}{}

{
\let\section\subsection
\let\subsection\subsubsection

\section{Combinatorial Dynamic Transitive Reduction on DAGs} \label{subsec:combinatorial_dag}

In this section, we explain a data structure for maintaining the transitive reduction of a \DAG{G}, as summarized in \Cref{thm: OurResultsDag1} below.
The data structure supports \textit{extended} update operations,
i.e., it allows the deletion of an \textit{arbitrary} set of edges \Ed or the insertion of some edges \Ei{u} incident to a vertex \( u \), known as the center of the insertion.

\arxivVsConference{}{
The data structure for general graphs is explained in \Cref{subsec:combinatorial_general}.
}

\dag*

The data structure uses a straightforward idea to maintain \TR: an edge \edge{x}{y} does not belong to~\TR if \( y \) has an in-neighbor \( z \neq x \) reachable from \( x \).
Thus, one can maintain \TR by maintaining the reachability information for each vertex \( u \in V \), but naively maintaining them results in \( O(mn) \) amortized update time for the data structure. 

To improve the amortized update time to \( O(m) \), we maintain the reachability information on a subgraph \( \graph{u} = (V, \E{u}) \) for every vertex \( u \in V \) defined as the snapshot of \( G \) taken after the last insertion \Ei{u} centered around \( u \) (if such an insertion occurred).
The reachability information are defined as follows: \desc{u} is the set of vertices reachable from \( u \) in \graph{u} and \anc{u} is the set of vertices that can reach \( u \) in \graph{u}.

Note that subsequent edge insertions centered around vertices \emph{different} from $u$ do not change $\graph{u}$. 
However, edges that are subsequently deleted from \( G \) are also deleted from $\graph{u}$, ensuring that at each step, \( \E{u} \subseteq E \).
Therefore, \graph{u} undergoes only edge deletions while we decrementally maintain \desc{u} and \anc{u} until an insertion \Ei{u} centered around \( u \) happens and we reinitialize \graph{u}. 

To decrementally maintain \desc{u} and \anc{u}, our data structure uses an extended version of the decremental data structure of
\citeauthor{Italiano:1988aa}~\cite{Italiano:1988aa}, summarized in the lemma below.
See~\Cref{app:dag} for a detailed discussion.

\begin{restatable}{lemma}{italiano}
\label{lem:extended_italiano}
Given a \DAG{G = (V, E)} initialized with the set \desc{r} of vertices reachable from a root vertex \( r \) and the set \anc{r} of vertices that can reach \( r \),
there is a decremental data structure undergoing arbitrary edge deletions at each update that maintains \desc{r} and \anc{r} in \( O(1) \) amortized update time.

Additionally, the data structure maintains the sets \D{r} and \A{r} of vertices removed from \desc{r} and \anc{r}, resp., due to the last update and supports the following operations in \( O(1) \) time:
\begin{itemize}

\item 
\textsc{In\( (y) \):} Return \texttt{True} if \( y \neq r \) and \( y \) has an in-neighbor from \( \desc{r} \setminus r \), and \texttt{False} otherwise. 

\item
\textsc{Out\( (x) \):} Return \texttt{True} if \( x \neq r \) and \( x \) has an out-neighbor to \( \anc{r} \setminus r \), and \texttt{False} otherwise. 
\end{itemize}
\end{restatable}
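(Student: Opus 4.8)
The plan is to recall Italiano's classical decremental single-source reachability data structure and then observe that it already maintains, for the root $r$, a shortest-path-like out-tree $T^{\mathrm{desc}}$ spanning $\desc{r}$ (and, symmetrically, an in-tree $T^{\mathrm{anc}}$ spanning $\anc{r}$) under edge deletions in $O(m)$ total time, i.e.\ $O(1)$ amortized per deletion when we amortize over the at most $m$ edges ever present. The core of Italiano's structure is that each vertex $y\in\desc{r}$ stores a pointer to a parent $p(y)$ such that $p(y)y$ is an edge of $G$ and $p(y)\in\desc{r}$; on deletion of a batch of edges, each vertex whose parent edge was deleted searches its remaining in-neighbors for a replacement parent that is still in $\desc{r}$, and if none is found it is removed from $\desc{r}$ and its children recurse. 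The amortized analysis is the standard one: the total work of scanning in-neighbors is bounded because each vertex advances a pointer through its (static, only-shrinking) in-neighbor list, and once removed a vertex is never reinserted. Running two such structures — one for out-reachability from $r$ giving $\desc{r}$, one on the reverse graph giving $\anc{r}$ — gives the first part of the statement; the sets $\D{r}$ and $\A{r}$ of vertices dropped in the last update are exactly the vertices removed during that update's cascade, which we collect into a list during the deletion routine.

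For the extra operations $\textsc{In}(y)$ and $\textsc{Out}(x)$ I would maintain, for each vertex $y$, a counter $\mathrm{cnt}^{\mathrm{in}}(y)$ equal to the number of in-neighbors $z$ of $y$ with $z\in\desc{r}\setminus\{r\}$, and symmetrically $\mathrm{cnt}^{\mathrm{out}}(x)$ counting out-neighbors of $x$ lying in $\anc{r}\setminus\{r\}$. Then $\textsc{In}(y)$ simply returns \texttt{True} iff $y\neq r$ and $\mathrm{cnt}^{\mathrm{in}}(y)>0$, which is $O(1)$; likewise for $\textsc{Out}$. The only thing to check is that these counters can be kept current within the same amortized budget. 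They change for two reasons: (i) when an edge $zy$ is deleted and $z$ was counted, decrement $\mathrm{cnt}^{\mathrm{in}}(y)$ — this is $O(1)$ charged to the deletion of $zy$; (ii) when a vertex $z$ leaves $\desc{r}$ during a cascade, we must decrement $\mathrm{cnt}^{\mathrm{in}}(w)$ for every current out-neighbor $w$ of $z$. The latter costs $O(\deg^{+}(z))$, but since each vertex leaves $\desc{r}$ at most once over the whole deletion sequence (between two consecutive reinitializations of $\graph{u}$), this is charged to the $O(m)$ initialization cost — consistent with the $O(1)$ amortized claim. Initialization of all counters is an $O(m)$ scan, again absorbed into the stated initialization cost.

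The main obstacle — or rather the main point requiring care — is making the amortized bookkeeping of the counters interact cleanly with Italiano's own amortization: both the parent-replacement search and the counter maintenance need to be charged consistently, and in particular one must be careful that a vertex leaving and the subsequent recursion on its children do not double-count in-neighbor scans. The clean way to phrase it is that every unit of work is charged either to a specific edge deletion (of which there are at most $m$ over the structure's lifetime) or to a specific vertex removal from $\desc{r}$ or $\anc{r}$ (at most $n$ over the lifetime, since the structure is purely decremental until reinitialized), and the per-charge cost is $O(1)$ in case (i) and $O(\deg(\cdot))$ in case (ii), the latter summing to $O(m)$; dividing by the number of updates (and folding the one-off $O(m)$ pieces into the initialization cost stated in the lemma) yields $O(1)$ amortized update time. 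I would relegate the detailed verification of the replacement-search invariant and this charging scheme to \Cref{app:dag} as the lemma statement indicates.
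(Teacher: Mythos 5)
Your proposal is correct, but it achieves the $\textsc{In}/\textsc{Out}$ operations by a genuinely different mechanism than the paper. You maintain an explicit counter $\mathrm{cnt}^{\mathrm{in}}(y)$ of in-neighbors of $y$ currently in $\desc{r}\setminus\{r\}$, decremented $O(1)$ per edge deletion and $O(\deg^{+}(z))$ when a vertex $z$ drops out of $\desc{r}$; the latter totals $O(m)$ over the structure's lifetime since each vertex drops out at most once. The paper instead stays closer to Italiano's pointer discipline: for each $y$ it maintains, besides the usual parent pointer $\parent{y}{r}$ into the in-edge list $\activee{y}{r}$, a \emph{second} pointer $\cc{y}{r}$ that always locates the next in-edge of $y$ from $\desc{r}$ after $\parent{y}{r}$; $\textsc{In}(y)$ then just checks whether either pointer designates an edge other than $ry$, and the amortization is that both pointers only ever advance through the static list $\activee{y}{r}$, giving $O(m)$ total work. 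Both schemes give the same $O(m)$ total / $O(1)$ amortized bounds and $O(1)$-time queries. Your counter scheme is arguably simpler to reason about and does not need the second-pointer invariant, at the cost of having to propagate decrements to out-neighbors upon vertex removal; the paper's two-pointer scheme avoids that cascade entirely because the cost of discovering that an in-edge's tail has left $\desc{r}$ is charged to a pointer advance on the $y$ side. One small point to be careful about in your version is the interaction between simultaneous removals within one batched deletion (an edge $zw$ deleted in the same update in which $z$ leaves $\desc{r}$): you must fix an order (e.g.\ remove $\Ed$ from the adjacency structure first, so the drop-out scan of $z$ only sees surviving out-edges, while the per-edge decrement for $zw$ is done while $z$ is still marked as in $\desc{r}$) to avoid a double or missed decrement; this is implicit in the paper's queue-based processing but needs to be spelled out in the counter variant.
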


The lemma below shows how maintaining \desc{u} and \anc{u} will be useful in maintaining \TR.
Recall that an edge \( \edge{x}{y} \in E \) is redundant if there exists a \( x \to y \) path in \( G \setminus xy \).

\begin{lemma} \label{lem:redundant}
Edge \edge{x}{y} is redundant in \( G \) iff one of the following holds.
\begin{enumerate}
\item[\normalfont{(1)}] There is a vertex \( z \notin \{ x, y \} \) such that \( x \in \anc{z} \) and \( y \in \desc{z} \) in \graph{z}. 
\item[\normalfont{(2)}] Vertex \( y \) has an in-neighbor \( z \in \desc{x} \setminus x  \) in \graph{x}. 
\item[\normalfont{(3)}] Vertex \( x \) has an out-neighbor \( z \in \anc{y} \setminus y \) in \graph{y}. 
\end{enumerate}
\end{lemma}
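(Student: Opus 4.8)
The plan is to prove both directions by relating a redundant edge $xy$ to the existence of an indirect $x\to y$ path, and then arguing that any such path is "seen" by one of the snapshots $\graph{z}$, $\graph{x}$, or $\graph{y}$. First I would record the elementary fact that $xy$ is redundant iff there is an $x\to y$ path $P$ in $G$ with $|P|\ge 2$ (since $G$ is acyclic, such a $P$ automatically avoids the edge $xy$). The key bookkeeping ingredient is that for each vertex $v$, the snapshot $\graph{v}$ is the state of $G$ right after the last insertion centered at $v$, and thereafter $\graph{v}$ only loses edges that $G$ loses; hence $\E{v}\subseteq E$ at all times, and conversely, every edge of $E$ lies in $\graph{v}$ for $v$ equal to the center of the most recent insertion that introduced it (and in fact in $\graph{v}$ for every $v$ whose last re-initialization postdates that edge's insertion). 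I would phrase this as: for every edge $ab\in E$ there exists a vertex $c$ — namely the center of the insertion batch that inserted $ab$ — with $ab\in \E{c}$; more strongly, if an $x\to y$ path $P$ currently exists in $G$, then letting $z$ be the vertex among the centers-of-last-insertion of $P$'s edges whose re-initialization is most recent, \emph{all} edges of $P$ lie in $\graph{z}$, so $P\subseteq \graph{z}$.

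For the "if" direction: if (1) holds, then $x\in\anc{z}$ and $y\in\desc{z}$ in $\graph{z}$ with $z\notin\{x,y\}$, so $\graph{z}$ (and hence $G$) contains an $x\to z$ path and a $z\to y$ path; their concatenation is an $x\to y$ walk through $z\ne x,y$, which contains a simple $x\to y$ path of length $\ge 2$ (it uses at least one edge into $z$ and one out of $z$, and $z\notin\{x,y\}$). So $xy$ is redundant. If (2) holds, $y$ has an in-neighbor $z\in\desc{x}\setminus x$ in $\graph{x}$: then $\graph{x}$ contains an $x\to z$ path with $z\ne x$, plus the edge $zy$, giving an $x\to y$ path of length $\ge 2$ in $G$ that is not the single edge $xy$ (the last edge is $zy\ne xy$), so $xy$ is redundant. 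Symmetrically for (3): $x$ has an out-neighbor $z\in\anc{y}\setminus y$ in $\graph{y}$, giving the path $xz$ followed by a $z\to y$ path in $\graph{y}$, of length $\ge 2$, with first edge $xz\ne xy$.

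For the "only if" direction: suppose $xy$ is redundant, so there is an $x\to y$ path $P$ in $G$ with $|P|\ge 2$. Using the bookkeeping fact above, pick the vertex $z$ such that $P\subseteq \graph{z}$ (the most-recently-re-initialized center among the centers of $P$'s edges); equivalently, take the last edge of $P$, say $wy$ with $w\ne x$, and let $z$ be its insertion center. Actually I would organize it by cases on where $P$ "is visible": Case A, some internal vertex $z$ of $P$ satisfies $z\notin\{x,y\}$ and $P\subseteq\graph{z}$ — then $x\in\anc{z}$ and $y\in\desc{z}$ in $\graph{z}$, which is (1). Case B, no such internal vertex works because the only snapshot containing all of $P$ is $\graph{x}$ or $\graph{y}$. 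If $P\subseteq\graph{x}$: the vertex $w$ just before $y$ on $P$ is an in-neighbor of $y$, lies in $\desc{x}$ in $\graph{x}$, and $w\ne x$ since $|P|\ge2$ and $G$ is acyclic — this is (2). If $P\subseteq\graph{y}$: the vertex $z'$ just after $x$ on $P$ is an out-neighbor of $x$, lies in $\anc{y}$ in $\graph{y}$, and $z'\ne y$ — this is (3). The only subtlety is that a path might not lie \emph{entirely} in a single $\graph{v}$; but the bookkeeping fact guarantees it does for the appropriate $z$, so the case split is exhaustive.

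The main obstacle I expect is proving the bookkeeping fact cleanly — namely that any currently-existing $x\to y$ path is contained in some $\graph{z}$ — since this is where the "snapshot after last insertion centered at $z$" semantics, together with the fact that later deletions propagate into all snapshots but later insertions do not, must be used carefully; one has to argue that among the insertion-times of $P$'s edges there is a latest one, its center $z$ was re-initialized at that time with a superset of $G$'s then-current edges (in particular all earlier-inserted edges of $P$), and no edge of $P$ has been deleted since (as $P$ still exists in $G$), so all of $P$ survives in $\graph{z}$. Everything else is short path-surgery with acyclicity used only to ensure the relevant endpoints are distinct and walks contain genuine length-$\ge 2$ paths.
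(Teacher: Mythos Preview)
Your proposal is correct and follows essentially the same approach as the paper: for the harder direction (redundant $\Rightarrow$ one of (1)--(3)), both arguments pick a vertex $z$ on the indirect path $P$ whose snapshot $\graph{z}$ was most recently (re)initialized, observe that all edges of $P$ then lie in $\graph{z}$ (since they were inserted no later and not deleted since), and case on whether $z$ is an internal vertex, equals $x$, or equals $y$. Your exposition is more elaborate---and the parenthetical ``equivalently, take the last edge of $P$ \ldots'' is not actually equivalent to choosing the most recently re-initialized center, though you abandon it immediately---but the core argument matches the paper's.
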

\begin{proof}%
We first show the ``if'' direction.
Suppose that \( \edge{x}{y} \) is redundant in \( G \).
Since $G$ is a DAG, and by the definition of redundant edges, there exists a directed path $P$ from $x$ to $y$ in $G \setminus {\edge{x}{y}}$ whose length is at least two. 
Let $c$ be a vertex on $P$ that has been a center of an insertion most recently. 
This insert operation constructed sets \desc{c} and \anc{c}. 
At the time of this insertion, all the edges of the path $P$ were already present in the graph, and thus $P$ exists in the graph \graph{c}.
Now, if $c \notin \{ x,y \} $, then $P$ is a concatenation of subpaths $x \to c$ and $c \to y$ in $\graph{c}$, or equivalently, Item~(1) holds. If $c=x$, then Item~(2) holds, and if $c = y$, then Item~(3) holds.

For the ``only-if'' direction, we only prove Item (2); items (1) and (3) can be shown similarly. 
To prove Item (2), let us assume that \( y \) has an in-neighbor \( z \in \desc{x} \setminus x \) in \graph{x}.
Our goal is to show that \( \edge{x}{y} \) is redundant in \( G \). 
To this end, since $z \notin \{ x,y \} $ and \graph{x} is a DAG, there exists a path $P$ from $x$ to $y$ in $\graph{x}$ that is a concatenation of the subpath $x \rightarrow z$ and the edge \edge{z}{y}. 
Note that $P$ has length at least two and does not contain the edge \edge{x}{y}. 
Since \( \E{x} \subseteq E \), \( P \) is also a path from \( x \) to \( y \) in \( G \setminus \edge{x}{y} \), which in turn implies that \( \edge{x}{y} \) is redundant in \( G \). 
\end{proof}

To incorporate~\Cref{lem:redundant} in our data structure, we define \counter{\edge{x}{y}} and \touch{\edge{x}{y}} for every edge \( \edge{x}{y} \in E \) as follows:
\begin{itemize}
\item 
The counter \( \counter{\edge{x}{y}} \) stores the number of vertices \( z \notin \{ x, y \} \) such that \( \edge{x}{y} \in \E{z} \), \( x \in \anc{z} \), and \( y \in \desc{z} \) in \graph{z}.

\item
The binary value \touch{\edge{x}{y}} is set to \( 1 \) if either \( y \) has an in-neighbor \( z \in \desc{x} \setminus x \) in \graph{x} or \( x \) has an out-neighbor \( z \in \anc{y} \setminus y  \) in \graph{y}, and is set to \( 0 \) otherwise.
\end{itemize}

Note that for a redundant edge $\edge{x}{y}$ in $G$, a vertex $z$ contributes towards \counter{\edge{x}{y}} iff $\edge{x}{y} \in \E{z}$. 
If no such $z$ exists, then \edge{x}{y} has become redundant due to the last insertion being centered around \( x \) or \( y \), which in turn implies \( \touch{\edge{x}{y}} = 1 \). 
Combining these two facts, we conclude the following invariant.
\begin{invariant} \label{inv:dag}
An edge \( \edge{x}{y} \in E \) belongs to the transitive reduction \TR iff \( \counter{\edge{x}{y}} = 0 \) and \( \touch{\edge{x}{y}} = 0 \).
\end{invariant}
In the rest of the section, we show how to efficiently maintain \counter{\edge{x}{y}} and \touch{\edge{x}{y}} for every \( \edge{x}{y} \in E \).

\subsection{Edge insertions} \label{subsub:insertion_dag}
After the insertion of \Ei{u} centered around $u$, the data structure updates \graph{u}, while leaving every other graph \graph{v} is \emph{unchanged}.

We simply use a graph search algorithm to recompute \desc{u} and \anc{u}, and the sets \CC{u} and  \B{u} of the new vertices added to \desc{u} and \anc{u}, respectively, due to the insertion of \Ei{u}.

To maintain \counter{\edge{x}{y}} for an edge \( \edge{x}{y} \in E \), we only need to examine the contribution of \( u \) in \counter{\edge{x}{y}} as only \graph{u} has changed. 
By definition, every edge \edge{x}{y} touching \( u \), i.e., with \( x = u \) or \( y = u \), does not contribute in \counter{\edge{x}{y}}.
Note that \E{u} consists of the edges before the update and the newly added edges,  which leads to distinguishing the following cases.

\begin{enumerate}

\item[(1)]
Suppose that \edge{x}{y} is not touching \( u \) and has already existed in \E{u} before the update.
Then, \counter{\edge{x}{y}} will increase by one only if the update makes \( x \) to reach \( y \) through \( u \). 
i.e., there is no path \( x \to u \to y \) before the update  (\( x \notin \anc{u} \setminus \B{u} \) or \( y \notin \desc{u}\setminus \CC{u} \)), but there is at least one afterwards (i.e., \( x \in \anc{u} \) and \( y \in \desc{u} \)).

\item[(2)]
Suppose that \edge{x}{y} is not touching \( u \) and has added to \E{u} after the update.
Since this is the first time we examine the contribution of \( u \) towards \counter{\edge{x}{y}}, we increment \counter{\edge{x}{y}} by one if \( x \) can reach \( y \) through \( u \) (i.e., \( x \in \anc{u} \) and \( y \in \desc{u} \)).
\end{enumerate}

We now maintain \touch{\edge{x}{y}} for an edge \( \edge{x}{y} \in E \). 
By definition, \( \touch{\edge{x}{y}} \) could be affected only if \edge{x}{y} touches \( u \).
Since \graph{u} undergoes edge insertions, \touch{\edge{x}{y}} can only change from $0$ to~$1$.
For every edge \( \edge{x}{y} \in E \) touching \( u \), we set $\touch{\edge{x}{y}} \gets 1$ if one of the following happen:
\begin{itemize}
\item[(a)] 
$x=u$ and \( y \) has an in-neighbor \( z \neq u \) in \graph{u} reachable from \( u \) (i.e., if $\textsc{In}(y)$ reports $\texttt{True}$), or

\item[(b)]
$y=u$ and \( x \) has an out-neighbor \( z \neq u \)  in \graph{u} that can reach \( y \) (i.e., if $\textsc{Out}(y)$ reports $\texttt{True}$).
\end{itemize}
Note that both cases above try to insure the existence of a path \( x \to z \to y \) when \( x = u \) or \( y = u \).

\begin{lemma} \label{lem:insertion}
After each insertion, the data structure maintains the transitive reduction \TR of \( G \) in \( O(m) \) worst-case time, where \( m \) is the number of edges in the current graph \( G \).
\end{lemma}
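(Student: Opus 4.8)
The plan is to bound the total work done by the insertion procedure after a single \Ei{u} update centered at \( u \), and show it is \( O(m) \). The update touches only \graph{u}, so the argument splits into three parts: recomputing \desc{u}, \anc{u} (together with the incremental sets \CC{u}, \B{u}); updating \counter{\edge{x}{y}} for every affected edge; and updating \touch{\edge{x}{y}} for edges incident to \( u \). I would handle each in turn, arguing each is \( O(m) \) worst-case, and then conclude by appealing to \Cref{inv:dag} for correctness.

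First, since \graph{u} is being reinitialized with the current snapshot of \( G \) and then augmented with \Ei{u}, a single forward graph search from \( u \) computes \desc{u}, and a single backward search computes \anc{u}; comparing with the previously stored sets yields \CC{u} and \B{u}. Each search runs in \( O(|\E{u}|) = O(m) \) time since \( \E{u} \subseteq E \). Second, for \counter{\edge{x}{y}}: by the case analysis preceding the lemma, only edges \edge{x}{y} with \( x \in \anc{u} \) and \( y \in \desc{u} \) can have their counter changed by this update (and only if either \edge{x}{y} is newly inserted into \E{u}, or the \( x\to u \) / \( u\to y \) reachability status flipped). I would iterate over all edges of \( G \): for each edge \edge{x}{y} not incident to \( u \), we check in \( O(1) \) the membership conditions \( x\in\anc{u} \), \( y\in\desc{u} \), and whether \( x\notin\anc{u}\setminus\B{u} \) or \( y\notin\desc{u}\setminus\CC{u} \) held before, adjusting \counter{\edge{x}{y}} accordingly; this is \( O(m) \) total. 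Third, for \touch{\edge{x}{y}}: only the (at most \( \deg(u) \le m \)) edges incident to \( u \) can be affected, and for each we invoke \textsc{In} or \textsc{Out} on the Italiano-style structure of \Cref{lem:extended_italiano}, which answers in \( O(1) \) time; again \( O(m) \) total. Summing the three parts gives the claimed \( O(m) \) worst-case bound.

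For correctness I would argue that after the procedure, \counter{\edge{x}{y}} and \touch{\edge{x}{y}} hold their defined values. For \counter{\edge{x}{y}}, the only vertex whose contribution can change is \( z=u \), since no other \graph{z} changed; the two cases in the procedure exactly capture whether \( u \)'s contribution should be incremented (it never needs to be decremented here, as \graph{u} only gains edges), so the invariant that \counter{\edge{x}{y}} counts the qualifying vertices \( z \) is restored. For \touch{\edge{x}{y}}, edges not incident to \( u \) are unaffected by definition; for edges incident to \( u \), since \( \touch{} \) is monotone under this insertion, setting it to \( 1 \) exactly when the relevant \textsc{In}/\textsc{Out} query succeeds matches the definition. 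Then \Cref{inv:dag} immediately gives that the maintained edge set is exactly \TR.

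The main obstacle, and the step deserving the most care, is the \counter{\edge{x}{y}} update: one must be sure that no edge \edge{x}{y} \emph{not} satisfying \( x\in\anc{u}, y\in\desc{u} \) after the update could have had its counter affected (straightforward, since \( u \)'s contribution to such an edge is \( 0 \) both before and after), and — more subtly — that the "before the update" tests are performed against the \emph{old} \anc{u}, \desc{u}, which is why the sets \CC{u}, \B{u} of newly added vertices are maintained: \( \anc{u}\setminus\B{u} \) and \( \desc{u}\setminus\CC{u} \) recover the pre-update sets. One must also verify that edges of \( \E{u} \) that got deleted from \( G \) between the previous \Ei{u} and now do not spuriously contribute; but these were already removed from \E{u} by the decremental structure, so they are simply absent. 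With these points checked, the accounting is routine.
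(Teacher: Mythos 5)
Your proof is correct and follows essentially the same decomposition as the paper's: an $O(m)$ graph search to reinitialize \graph{u} and the reachability sets of \Cref{lem:extended_italiano}, then $O(1)$ work per edge to update \counter{\cdot} (checking only $u$'s contribution, comparing against the pre-update sets recovered via $\anc{u}\setminus\B{u}$ and $\desc{u}\setminus\CC{u}$) and $O(1)$ \textsc{In}/\textsc{Out} queries per edge incident to $u$ to update \touch{\cdot}, with correctness via \Cref{lem:redundant} and \Cref{inv:dag}. The only difference is that you spell out the monotonicity of $u$'s contribution under an insertion and the role of \CC{u}, \B{u} more explicitly, which the paper leaves implicit.
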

\begin{proof}%
Suppose that the insertion of edges \Ei{u} has happened centered around \( u \).

\underline{Correctness:}
by construction, all possible scenarios for the edge \( xy \) are covered and the values \counter{\edge{x}{y}} and \touch{\edge{x}{y}} are correctly maintained after each insert update with respect to \graph{u} for every edge \( \edge{x}{y} \in E \).
Since \graph{u} is the only graph changing during the   update, it follows that \counter{\edge{x}{y}} and \touch{\edge{x}{y}} are correctly maintained with respect to \graph{v}, \( v \in V \).
The correctness immediately follows from \Cref{inv:dag} and \Cref{lem:redundant}.

\underline{Update time:}
the update time is dominated by (i) the time required to initialize the data structure of \Cref{lem:extended_italiano} for \graph{u}, and (ii) the time required to maintain \counter{\edge{x}{y}} and \touch{\edge{x}{y}} for every edge \( \edge{x}{y} \in E \). 
By \Cref{lem:extended_italiano}, the time for (i) is at most $O(m)$. 
To bound the time for (ii), note that $\counter{\edge{x}{y}}$ for any edge \( \edge{x}{y} \in E \) can be updated in $O(1)$ time as we only inspect the contribution of \( u \) towards $\counter{\edge{x}{y}}$ as discussed before.
\Cref{lem:extended_italiano} guarantees that each $\textsc{In}(\cdot)$ or $\textsc{Out}(\cdot)$ is supported in time $O(1)$, which in turn implies that $\touch{\edge{x}{y}}$ can be updated in $O(1)$ time.
As there are at most $m$ edges in $G$, maintaining \counter{\cdot} and \touch{\cdot} takes at most $O(m)$.
\end{proof}

\subsection{Edge deletions}  
\label{subsub:deletion_dag}

After the deletion of \Ed, the data structure passes the deletion to \emph{every} \graph{u}, \( u \in V \).
Let \D{u} and \A{u} denote the sets of vertices removed from \desc{u} and \anc{u}, resp., due to the deletion of \Ed.

We decrementally maintain \desc{u} and \anc{u}, and the sets \D{u} and \A{u} using the data structure of \Cref{lem:extended_italiano}, which is initialized last time \graph{u} was rebuilt due to an insertion centered around \( u \).

To maintain \counter{\edge{x}{y}} for any edge \( \edge{x}{y} \in E \), we need to cancel out every vertex \( z \notin \{ x, y \} \) that contained a path \( x \to z \to y \) in \graph{z} before the update but no longer has one.
i.e., \( x \in \anc{z} \cup \A{z} \) and \( y \in \desc{z} \cup \D{z} \) in \graph{z} and either \( x \in \A{z} \) or \( y \in \D{z} \).
This suggests that it suffices to subtract \counter{\edge{x}{y}} by one if \( x \) and \( y \) fall into one of the following disjoint cases.
\begin{enumerate}
\item 
\( x \in \A{z} \) and \( y \in \desc{z} \), or

\item
\( x \in \A{z} \) and \( y \in \D{z} \), or

\item
\( x \in \anc{z} \) and \( y \in \D{z} \).
\end{enumerate}

For cases (1) and (2) where \( x \in \A{z} \), we can afford to inspect every outgoing edge \( \edge{x}{y} \in \E{z} \) of \( x \) with \( y \neq z \), and subtract \counter{\edge{x}{y}} by one if \( y \in \desc{z} \cup \D{z} \).
For case (3) where \( y \in \D{z} \), we inspect every incoming edge \( \edge{x}{y} \in \E{z} \) of \( y \) with \( x \neq z \), and subtract \counter{\edge{x}{y}} by one if \( x \in  \anc{z} \).
 
To maintain \touch{\edge{x}{y}} for an edge \( \edge{x}{y} \in E \), we only need to inspect the updated graphs \graph{x} and \graph{y}.
Note that since the graphs are decremental, \touch{\edge{x}{y}} can only change from $1$ to $0$.
we check whether there is no path \( x \to z \to y \) in \graph{x} and \graph{y} passing through a vertex \( z \notin \{ x,y \} \)  by utilizing the data structure of \Cref{lem:extended_italiano}: if both $\textsc{In}(y)$ and $\textsc{Out}(x)$ return \texttt{False}, we set \( \touch{\edge{x}{y}} \gets 0 \).

\begin{lemma} \label{lem:deletion}
After each deletion, the data structure maintains the transitive reduction \TR of \( G \) in \( O(m) \) amortized time, where \( m \) is the number of edges in the current graph \( G \).
\end{lemma}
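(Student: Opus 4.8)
The plan is to follow the template of the proof of \Cref{lem:insertion}: first show that the described bookkeeping keeps \counter{\edge{x}{y}} and \touch{\edge{x}{y}} consistent with their definitions (so that \Cref{inv:dag} keeps holding), and then bound the running time by charging every deletion's work to a preceding insertion, exploiting that each \graph{u} is purely decremental between two reinitializations.

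For the counters, recall that \(z\notin\{x,y\}\) contributes to \counter{\edge{x}{y}} exactly when \(\edge{x}{y}\in\E{z}\) and \(x\in\anc{z}\), \(y\in\desc{z}\) in \graph{z}. Since \graph{z} only shrinks and, by \Cref{lem:extended_italiano}, \A{z} and \D{z} are precisely the vertices that just left \anc{z} and \desc{z}, the contributions that must be cancelled are exactly those \(z\) with \(\edge{x}{y}\in\E{z}\), \(x\in\anc{z}\cup\A{z}\), \(y\in\desc{z}\cup\D{z}\), and (\(x\in\A{z}\) or \(y\in\D{z}\)). Splitting by whether \(x\) falls in \A{z} or \anc{z} and \(y\) in \D{z} or \desc{z} — a genuine partition, as those sets are disjoint — yields exactly the three disjoint cases of the description, the fourth combination (\(x\in\anc{z}\), \(y\in\desc{z}\)) triggering no cancellation. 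The scans over the out-edges of each \(x\in\A{z}\) and the in-edges of each \(y\in\D{z}\) in the updated \graph{z} (skipping \(z\) itself) visit precisely the edges whose counter must drop; edges lying in \Ed vanish from \(G\) and from every \graph{z} simultaneously, so their stale counters are harmlessly ignored. For the touch flags, each \graph{x} only loses edges, so \touch{\edge{x}{y}} can only switch \(1\to 0\); by \Cref{lem:extended_italiano}, \(\textsc{In}(y)\) in \graph{x} and \(\textsc{Out}(x)\) in \graph{y} report exactly the two disjuncts in the definition of \touch{\edge{x}{y}}, so resetting it to \(0\) exactly when both return \texttt{False} is correct. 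Here one re-examines only the edges \(\edge{x}{y}\) that could have been affected — those where some \(w\in\D{x}\) has \(wy\in\E{x}\), the symmetric condition with \A{y}, or \(\edge{x}{y}\in\Ed\) — enumerated by scanning the out-edges of \D{x}, the in-edges of \A{y}, and \(\Ed\cap\E{x}\). With both quantities maintained, \Cref{inv:dag} and \Cref{lem:redundant} give that the maintained edge set is \TR.

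For the running time, fix \(u\) and an \emph{epoch} of \graph{u}, i.e., the interval between two consecutive reinitializations of \graph{u} caused by insertions centered at \(u\); let \(m^\star_u\) be the number of edges of \(G\) when the epoch opens, so \(m^\star_u=O(m)\) then, and \graph{u} only shrinks during the epoch. I claim the total work attributable to \graph{u} over the whole epoch is \(O(m^\star_u)\): (i) the internal cost of the decremental structure of \Cref{lem:extended_italiano} is \(O(m^\star_u)\) by that lemma; (ii) each vertex enters \A{u} at most once per epoch by decremental monotonicity, and when it does we spend time proportional to its out-degree in \graph{u}, and symmetrically for \D{u} and in-degrees, so the counter maintenance telescopes to \(O(m^\star_u)\); (iii) for the same reason the touch scans above touch each edge of \graph{u} only \(O(1)\) times per epoch, again \(O(m^\star_u)\). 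Charging \(O(m^\star_u)\) to the insertion that opened the epoch — each insertion opens exactly one epoch and already costs \(O(m)\) by \Cref{lem:insertion} — keeps the amortized cost per update at \(O(m)\); the flat overhead of forwarding \Ed to the decremental structures is absorbed by only keeping \graph{u} for vertices that have been insertion centers.

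The hard part, I expect, is item (iii): avoiding a rescan of all \(m\) edges for the touch flags on each deletion. The key observation is the one used above — \touch{\edge{x}{y}} can change only because a vertex left \desc{x} or \anc{y}, or because \(\edge{x}{y}\) itself was deleted — so the affected edges are exactly those reached by scanning the out-edges of \D{x}, the in-edges of \A{y}, and \(\Ed\cap\E{x}\), whose total size over an epoch is \(O(m^\star_u)\) by monotonicity. Everything else is routine accounting against the \(O(m)\) initialization cost of the data structures of \Cref{lem:extended_italiano}, just as in the insertion case.
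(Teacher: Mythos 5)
Your correctness argument and the amortized accounting of the counter maintenance match the paper's proof closely; where you differ is in the touch flags. Anticipating that a per-deletion rescan of all $m$ edges would be too expensive, you design a local scheme that revisits only the edges whose flag could have changed --- discovered by scanning the out-edges of \D{x}, the in-edges of \A{y}, and $\Ed\cap\E{x}$ for each snapshot --- and charge the resulting $O(m)$-per-epoch cost to the insertion that opened the epoch. The paper does precisely the blanket rescan you were trying to avoid: after every deletion it re-queries \textsc{In}$(y)$ and \textsc{Out}$(x)$ for all $m$ current edges, $O(1)$ time each by \Cref{lem:extended_italiano}, for $O(m)$ \emph{worst-case} time per delete operation, which already lies within the claimed amortized bound with no further charging. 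Your refinement is correct and is in fact a slightly stronger statement (all touch-flag work can be charged to insertions rather than being an intrinsic per-deletion cost), but it is unnecessary for the lemma as stated, and it brings extra bookkeeping you then have to get right. On that last point, two small warnings: (i) your list of conditions for an affected edge says ``$\edge{x}{y}\in\Ed$'', but the set your scan actually traverses, $\Ed\cap\E{x}$, is the correct one --- it catches the case where a still-reachable $w\in\desc{x}$ loses its edge $\edge{w}{y}$ and thereby flips \textsc{In}$(y)$, which the written condition misses; (ii) the $O(n)$ overhead of forwarding \Ed to all snapshots per deletion is inherent and is not removed by keeping snapshots only for past insertion centers --- the paper simply folds it into the $O(m)$ amortized bound.
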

\begin{proof}%
Suppose that a deletion of edges \Ed has happened.

\underline{Correctness:} 
similar to handling edge insertions.
The correctness follows from \Cref{lem:redundant} and the observation that the  values \counter{xy} and \touch{xy} for every edge \( xy \in E \) are maintained correctly.

\underline{Update time:}
the update time is dominated by (i) the time required to decrementally maintain the data structures of \Cref{lem:extended_italiano} for \graph{v}, \( v \in V \), and (ii) the time required to update the values \counter{xy} and \touch{xy} for every edge \( xy \in E \).

By \Cref{lem:extended_italiano}, it follows that the total time required to maintain (i) over a sequence of \( m \) edge deletions is bounded by $O(m)$ for each graph.
Therefore, it takes \( O(n) \) amortized time to decrementally maintain all graphs.

To bound (ii), we first examine the total time required to maintain \counter{\edge{x}{y}} over a sequence of \( m \) edge deletions in a single graph \graph{z}.
Note that in \graph{z}, every vertex \( u \in V \) is inspected at most twice since it is deleted at most once from each set \desc{z} or \anc{z}.
During the inspection of \( u \) in \graph{z}, the data structure examines each incoming or outgoing edge \( e \) of \( u \) in \( O(1) \) time, and updates \counter{e} if necessary.
As explained before, the value $\counter{e}$ for a single edge \( e \) can be updated in $O(1)$ time. 
Thus, in \graph{z}, the time required for maintaining \counter{xy} for every \( xy \in E^z \) is bounded by
\(
O\left(  \sum _{u \in \D{z}} \deg(u) + \sum _{u \in \A{z}} \deg(u) \right),
\)
where \( \D{z} \) and \( \A{z} \) are the vertices that no longer belong to \desc{z} and \anc{z}, respectively, due to the deletion of \Ed.
Note that, during a sequence of \(m\) edge deletions, sets \D{z} and \A{z} partition the vertices of \graph{z}, and so the total time for maintaining \counter{xy}, \( xy \in E^z \), in \graph{z} is bounded by
\(
O\left(  \sum _{u \in V} \deg(u) + \sum _{u \in V} \deg(u) \right) = O\left( m \right).
\)
We conclude that maintaining \counter{xy}, \( xy \in E \), in all graphs \graph{\cdot} takes \( O(n) \) amortized time.

\Cref{lem:extended_italiano} guarantees that each in-neighbor or out-neighbor query can also be supported in $O(1)$ time, which in turn implies that \touch{xy} for a an edge \( xy \in E \) can be updated in $O(1)$ time.
As there are at most $m$ edges in the current graph $G$, updating \touch{xy} for all \( xy \in E \) costs $O(m)$ after each delete operation.
Since the number of delete operations is bounded by the number of edges that appeared in \( G \), we conclude that the total cost to maintain \touch{xy} for all \( xy \in E \) over all edge deletions is \( O(m^2) \), which bounds (ii).

Combining the bounds we obtained for (i) and (ii), we conclude that edge deletions can be supported in $O(m)$ amortized update time.
\end{proof}

\subsection{Space complexity}
 \label{subsubsec:space}

It remains to discuss the space complexity of the data structure.
Note that explicitly storing all graphs would require \( \Omega(n^2 + nm) \) space.
In the rest of this section, we sketch how to decrease the space to \( O(n^2) \) using a similar approach in~\cite{Roditty:2016aa}.

For every edge \( \edge{x}{y} \in E \), we define a \textit{timestamp} \ts{\edge{x}{y}}, attached to \edge{x}{y}, denoting its time of insertion into \( G \).
We maintain a single explicit adjacency list representation of $G$, so that the outgoing incident edges $E[v]$ of each $v$ are stored in increasing order by their timestamps. 
This adjacency list is shared by all the snapshots, and is easily maintained when edges of $G$ are inserted or removed: insertions can only happen at the end of these lists, and deletions can be handled using pointers into this adjacency list.

Let $\ts{v}$ denote the last time an insertion centered at $v$ happened.
Let us order the vertices $V=\{v_1,\ldots,v_n\}$ so that
$\ts{v_i}<\ts{v_j}$ for all $i<j$, i.e., from the least to most recently updated.
By the definition of snapshots,
we have
\[ \graph{v_1}\subseteq \graph{v_2}\subseteq \ldots \subseteq \graph{v_n}=G.\]

Note that for each $i$, the edges of $\graph{v_i}$ that are not in $\graph{v_{i-1}}$ are all incident to $v_i$ and have timestamps larger than timestamps of the edges in $\graph{v_{i-1}}$.
As a result, one could obtain the adjacency list representation of $\graph{v_i}$ by taking the respective adjacency list of $G$ and truncating all the individual lists $E[v]$ at the last edge $e$ with
$\ts{e}\leq \ts{v_i}$.
This idea gives rise to a \emph{virtual} adjacency list of $\graph{v_i}$, which requires only storing $\ts{v_i}$ to be accessed.
One can thus process $\graph{v_i}$ by using the global adjacency list for $G$ and ensuring to never iterate through the ``suffix'' edges in $E[v]$ that have timestamps larger than $\ts{v_i}$.
Using the timestamps, it is also easy to notify the required individual snapshots when an edge deletion in $G$ affects them.

Since all the auxiliary data structures for individual snapshots apart from their adjacency lists used $O(n)$ space, this optimization decreases space to $O(n^2)$.

\section{Combinatorial Dynamic Transitive Reduction on General Graphs} \label{subsec:combinatorial_general}

In this section, we explain our data structure when \( G \) is a general graph, as summarized in \Cref{thm: OurResultsDag2} below.

\general*

Analogous to \Cref{subsec:combinatorial_dag}, we maintain the reachability information on a decremental subgraph \( \graph{u} = (V, \E{u}) \) for every vertex \( u \in V \) defined as the snapshot of \( G \) taken after the last insertion \Ei{u} centered around \( u \) (if such an insertion occurred).
The reachability information are defined as follows: \desc{u} is the set of vertices reachable from \( u \) in \graph{u} and \anc{u} is the set of vertices that can reach \( u \) in \graph{u}.

The main difference here from DAGs is the existence of strongly connected components (SCCs) with arbitrary size.
We define the \textit{condensation} of \( G \) to be the DAG obtained by contracting each SCC of \( G \).
 Our goal is to maintain the \textit{inter-SCC} and \textit{intra-SCC} edges belonging to the transitive reduction \TR of \( G \).

To maintain the inter-SCC edges of \TR, we adapt our data structure on DAGs utilized on the condensation of \( G \),
equipped to address two challenges that did not arise during the designing of the data structure of \Cref{subsec:combinatorial_dag}: (i) each update could cause vertex splits or vertex merges in the condensation of \( G \), and (ii) the condensation of \( G \) consists of parallel edges.
To address this challenges, we use an extended version of the decremental data structure of \cite{Roditty:2016aa} explained in \Cref{app:general} and summarized in the lemma below.
This data structure maintains SCCs of \graph{u} for each \( u \in V \) and the set of parallel edges between the SCCs in \( G \), which will be used to address (i) and (ii), respectively.

\begin{restatable}{lemma}{roditty}
\label{lem:extended_roditty}

Given a directed graph \( G = (V, E) \) and the subgraphs \graph{r} for every \( r \in V \) initialized with the set \desc{r} of vertices reachable from a root vertex \( r \) and the set \anc{r} of vertices that can reach \( r \), there is a decremental data structure that maintains the SCCs and the set of parallel inter-SCC edges of all graphs in \( O(m + n \log n) \) amortized update time, where \( m \) is the number of edges in the current graph \( G \).

Additionally, for all \( r \in V \), the data structure maintains the sets \D{r} and \A{r} of vertices removed from \desc{r} and \anc{r}, respectively, as a consequence of the last update and supports the following operations in time \( O(1) \), where \( X, Y, R \) are the SCCs containing \( x, y, r \), respectively:

\begin{itemize}

    \item 
    \textsc{In\( (y, r) \):}
return \texttt{True} if, in \graph{r}, \( Y \neq R \) and \( Y \) has an in-neighbor from  \( \desc{r} \setminus R \), and \texttt{False} otherwise.

    \item
    \textsc{Out\( (x, r) \):}
return \texttt{True} if, in \graph{r}, \( X \neq R \) and \( X \) has an out-neighbor to  \( \desc{r} \setminus R \), and \texttt{False} otherwise.

\end{itemize}

Lastly, the data structure maintains the set \s{r} of the SCCs in \graph{r} such that \textsc{In\( (y, r) \)} or \textsc{Out\( (x, r) \)} has changed from \texttt{True} to \texttt{False} due to the last deletion.
 
\end{restatable}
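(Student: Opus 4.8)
The plan is to mirror the DAG construction of \Cref{subsec:combinatorial_dag}, but to work on the \emph{condensations} of the nested snapshots $\graph{v_1}\subseteq\cdots\subseteq\graph{v_n}=G$ and to replace the decremental single-source tool of \Cref{lem:extended_italiano} by the decremental strongly-connected-components machinery of~\cite{Roditty:2016aa}, run on this whole family at once. So I would first recall---this being the content of \Cref{app:general}---that the extended form of the structure of~\cite{Roditty:2016aa} maintains, under a common decremental sequence, the SCC decomposition of every $\graph{r}$ together with a single-source reachability certificate rooted at the SCC $R$ of $r$ in each condensation, in $O(m+n\log n)$ amortized time per deletion operation; moreover, a deletion that splits an SCC $Z$ of some $\graph{r}$ forces the structure to scan the edges incident to $Z$, and each vertex leaving a reachability certificate is likewise paid for by a scan of its incident edges. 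On top of this I would add three pieces of bookkeeping and argue that each one rides for free on scans the base structure already performs; when a snapshot $\graph{u}$ is (re)built after an insertion centered at $u$, all of its auxiliary data is recomputed from scratch in $O(m+n\log n)$ time.

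First, $\desc{r}$ and $\anc{r}$. Since deleting edges only destroys reachability, both are monotonically shrinking \emph{vertex} sets---even though the condensation of $\graph{r}$ is not decremental in the usual sense, an SCC split turning intra-SCC edges into inter-SCC ones. I would read $\desc{r}$ off the source certificate on the condensation of $\graph{r}$ (the union of the vertex sets of the SCCs reachable from $R$) and $\anc{r}$ off a reversed copy; each vertex that drops out, whether because of a deleted edge or an SCC split, is appended to $\D{r}$ (resp.\ $\A{r}$), which are emptied at the start of each operation. Second, the parallel inter-SCC edge groups: for $G=\graph{v_n}$ I would bucket every inter-SCC edge $\edge{x}{y}$ by the ordered pair $(\sccg{x},\sccg{y})$, so that a nonempty bucket is exactly a maximal group of mutually parallel inter-SCC edges, with $O(1)$ membership and representative access via pointers and $O(m)$ total space. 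On a deletion operation, deleted edges leave their buckets, and when an SCC $Z$ of $G$ splits I re-examine the edges incident to $Z$ (precisely the ones the base structure already scans): former intra-$Z$ edges whose endpoints land in distinct pieces become new inter-SCC edges and enter fresh buckets, while inter-SCC edges incident to $Z$ migrate to the bucket of the piece holding their $Z$-endpoint. For the snapshots $\graph{r}$ with $r\neq v_n$, explicit buckets would cost $\Omega(nm)$ space, so there I keep only the SCC label of every vertex in every snapshot ($O(n^2)$ total), which already suffices to test in $O(1)$ whether two given inter-SCC edges are parallel in $\graph{r}$.

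Third, the flags $\textsc{In}(y,r)$, $\textsc{Out}(x,r)$ and the sets $\s{r}$. For each $r$ and each SCC $Y\neq R$ of $\graph{r}$ I maintain a counter of the edges entering $Y$ whose tail-SCC lies in $\desc{r}\setminus R$; then $\textsc{In}(y,r)$ is the $O(1)$ test ``is the counter of $Y$ positive'', and $\textsc{Out}(x,r)$ is answered by the symmetric counter over the out-edges of $X$. Such a counter can change during a deletion operation only if (i) an edge incident to $Y$ is deleted, (ii) $Y$ splits or the tail-SCC of some edge into $Y$ splits, or (iii) an SCC leaves $\desc{r}$; cases (i)--(ii) I resolve by recomputing from the edges the base structure is already scanning, and case (iii) I charge to $\sum_{u\in\D{r}}\deg(u)$, which the base structure pays when those vertices leave the certificate. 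Whenever a flag goes from \texttt{True} to \texttt{False} I insert the affected SCC into $\s{r}$ (emptied each operation); because each flip is caught exactly when the corresponding counter hits $0$, $\s{r}$ is maintained precisely, not merely as a superset.

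Given all of the above, maintaining $\TR$ itself follows the DAG template, with an analogue of \Cref{inv:dag,lem:redundant} for general graphs used to certify which inter-SCC edges of $G$ are redundant (either some snapshot witnesses an indirect path between the two SCCs through the flags above, or the edge has a parallel companion in $G$), while the transitive reductions of the individual SCCs of $G$ are recomputed from scratch in $O(m+n\log n)$ time per operation via~\cite{GibbonsKRST91}. The step I expect to be the real obstacle is the SCC-split handling: a single deletion can shatter SCCs of many snapshots into many pieces, and one must simultaneously repair the condensation adjacency, the reachability certificates, the parallel-edge buckets, the descendant/ancestor sets, and every dependent flag, \emph{and} certify that all of this repair adds no loop over edges beyond the incident-edge scans of splitting SCCs and of vertices leaving certificates---so that the augmentation is a pure constant-factor rider and the $O(m+n\log n)$ amortized bound of~\cite{Roditty:2016aa} survives intact.
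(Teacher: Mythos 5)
Your high-level plan matches the paper's: run the Roditty--Zwick decremental SCC machinery~\cite{Roditty:2016aa} on the nested snapshots $\graph{v_1}\subseteq\cdots\subseteq\graph{v_n}$, read $\desc{r},\anc{r},\D{r},\A{r}$ off per-snapshot reachability certificates (and their reverses), and charge the bookkeeping to the incident-edge scans already performed when SCCs split or vertices leave a certificate. Where you genuinely diverge is in the implementation of \textsc{In}/\textsc{Out}: the paper adapts the Italiano-style two-pointer mechanism from \Cref{lem:extended_italiano} to the SCC level, maintaining a ``parent'' pointer $\parallell{Y}$ (the certificate edge into $Y$) and a ``candidate'' pointer $\cc{Y}{r}$ to a second certificate edge into $Y$ whose tail lies in a different SCC, with \textsc{In}$(y,r)$ answered from these two pointers; the query cost is thus tied to at most two monitored edges per $Y$, which are repaired lazily exactly when the RZ certificate is. You instead propose an explicit per-$(Y,r)$ counter of edges entering $Y$ with tail in $\desc{r}\setminus R$.

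That counter choice is the source of a real gap. Your case (ii) (``the tail-SCC of some edge into $Y$ splits'') covers, in particular, splits of the root SCC $R$ itself, and you claim the repair ``rides for free on scans the base structure already performs.'' But the RZ structure pays only for the edges incident to the \emph{non-largest} pieces of a split, letting the largest piece inherit its lists untouched. Now suppose $R$ splits into $R_1,\dots,R_k$ with $|R_1|$ largest and $r\in R_j$ for some $j>1$, and suppose $R_1$ remains reachable from $r$ in the condensation (this is entirely possible when the strong component breaks one-way). Then every edge $wy$ with $w\in R_1$ and $y\in Y$ must newly contribute to the counter --- its tail moved from $R$ to $\desc{r}\setminus R$ --- yet $R_1$'s incident edges are precisely the ones the base structure declines to touch. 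Covering them would require scanning the outgoing edges of the \emph{largest} piece on every root-SCC split, which the $O(m+n\log n)$ amortization does not absorb (it would push the edge term to $O(m\log n)$). The paper's pointer scheme avoids this because a surviving certificate edge out of $R_1$ needs no re-inspection; only the two pointers per $Y$ are touched, and their maintenance is already accounted for by the certificate repair. To salvage your variant you would either need to recompute the ``edges from the new $R$ into $Y$'' quantity afresh from the small piece $R_j$'s out-edges (feasible, but then you must also cheaply re-zero the old contributions, which you do not address), or switch to the paper's pointer formulation. Your treatment of the parallel-edge buckets and of $\s{r}$ is otherwise in the spirit of the paper (which is itself quite terse on the parallel-edge bookkeeping), and the space discussion is consistent with \Cref{subsubsec:space_general}.
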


We denote by \F{X}{Y} the set of parallel edges in \( G \) between the SCCs \( X, Y \) maintained by \Cref{lem:extended_roditty}.
The lemma below shows how maintaining \desc{u}, \anc{u}, and the parallel edges is useful in maintaining \TR.
Recall that an edge \( \edge{x}{y} \in E \) is redundant if there exists a \( x \to y \) path in \( G \setminus xy \).

\begin{lemma}\label{lem:redundant_general}
Let \( \edge{x}{y} \in E \) be an 
 inter-SCC edge of \( G \), and for a vertex \( z \in V \), let \( X, Y, Z \) be the SCCs containing \( x, y, z \) in \graph{z}, respectively.
Then \edge{x}{y} is redundant in \( G \) iff one of the following holds.
\begin{enumerate}
\item 
Vertex \( z \) exists with \( Z \notin \{ X, Y \} \) such that \( \edge{x}{y} \in \E{z} \), \( x \in \anc{z} \), and \( y \in \desc{z} \) in \graph{z}.

\item 
Vertex \( z \) exists with \( Z = X \) in \graph{z}, such that \( Y \) has an in-neighbor from \( \desc{z} \setminus Z \) in \graph{z}.
\item 
Vertex \( z \) exists with \( Z = Y \) in \graph{z}, such that \( X \) has an out-neighbor to \( \anc{z} \setminus Z \) in \graph{z}.
\item 
There are parallel edges between the SCCs \( X \) and \( Y \) in \( G \).
\end{enumerate}
\end{lemma}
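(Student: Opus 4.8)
This is the natural lift of \Cref{lem:redundant} from $G$ to the condensation of $G$, with the extra case~(4) accounting for parallel inter-SCC edges. Throughout I would use that $\edge{x}{y}$ is an inter-SCC edge of $G$, so $x$ and $y$ lie in distinct SCCs of $G$; since every snapshot $\graph{z}$ has $\E{z}\subseteq E$, the vertices $x,y$ also lie in distinct SCCs of $\graph{z}$, and hence any path internal to a single SCC of any $\graph{z}$ (or of $G$) cannot use $\edge{x}{y}$. This last observation is the workhorse of the whole argument.

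For the ``if'' direction I would handle the four conditions one by one, each time exhibiting an $x\to y$ path inside $G\setminus\edge{x}{y}$. Under~(1), concatenate an $x\to z$ path (which exists as $x\in\anc{z}$) with a $z\to y$ path (which exists as $y\in\desc{z}$) inside $\graph{z}$; the resulting walk avoids $\edge{x}{y}$ because $y$ occurring on the first part would force $Z=Y$ and $x$ occurring on the second part would force $Z=X$, both ruled out by $Z\notin\{X,Y\}$. Under~(2), take the inter-SCC edge $\edge{w'}{y_0}$ realizing that $Y$ has an in-neighbor $W\subseteq\desc{z}$ with $W\neq Z$, and stitch together: $x$ to $z$ inside $X=Z$, then $z$ to $w'$ (possible since $W\subseteq\desc{z}$), then the edge to $y_0$, then $y_0$ to $y$ inside $Y$; this omits $\edge{x}{y}$ since $w'\notin X$ makes the edge distinct from $\edge{x}{y}$, and since $y$ on the $z\to w'$ part would force $W=Y$. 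Case~(3) is the mirror image. Under~(4), removing the inter-SCC edge $\edge{x}{y}$ keeps both SCCs $X\ni x$ and $Y\ni y$ of $G$ strongly connected, so any parallel edge $\edge{x'}{y'}\neq\edge{x}{y}$ gives the route ``$x$ to $x'$ inside $X$, edge $\edge{x'}{y'}$, $y'$ to $y$ inside $Y$''.

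For the ``only if'' direction, assume $\edge{x}{y}$ redundant and fix an $x\to y$ path $P$ in $G\setminus\edge{x}{y}$ (of length at least two). Exactly as in the DAG proof, I would let $c$ be the vertex of $P$ that was most recently the center of an insertion (such a vertex exists, since every edge of $P$ entered $G$ in a batch centered at one of its endpoints), and argue by the timestamp/snapshot argument that $P\subseteq\graph{c}$; because $x,y\in V(P)$ the same argument gives $\edge{x}{y}\in\E{c}$. Let $X,Y,Z$ be the SCCs of $x,y,c$ in $\graph{c}$, and note $X\neq Y$. If $Z\notin\{X,Y\}$, splitting $P$ at $c$ certifies $x\in\anc{c}$ and $y\in\desc{c}$, i.e.\ case~(1) with $z=c$. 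If $Z=X$, look at the unique edge $\edge{w'}{y_0}$ of $P$ where $P$ enters $Y$ (unique because the condensation of $\graph{c}$ is acyclic and $P$ ends in $Y$), and set $W$ to be the SCC of $w'$: then $W\neq Y$ and $W\subseteq\desc{c}$ (as $c$ reaches $x$ inside $X$ and $x$ reaches $w'$ along $P$), so if $W\neq Z$ we are in case~(2), while if $W=Z$ then $P$ stays inside $X$ until $w'$, making $\edge{w'}{y_0}$ the only inter-SCC edge of $P$ — running between the SCC of $x$ and the SCC of $y$ in $G$ and distinct from $\edge{x}{y}$ — which is case~(4). The subcase $Z=Y$ is symmetric, yielding case~(3) or case~(4).

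I expect the main obstacle to be exactly this branching in the ``only if'' direction: the black-box reduction to the DAG lemma fails because an alternative $x\to y$ path within the snapshot $\graph{c}$ may use a \emph{single} inter-SCC edge lying between the SCCs of $x$ and $y$, a configuration that no $\desc{}$/$\anc{}$ statement about an intermediate vertex can express and which only case~(4) records. Making this airtight forces scrupulous tracking of the difference between SCCs of the snapshot $\graph{c}$ and SCCs of $G$ (the former refine the latter), and re-verification in every branch — using the inter-SCC hypothesis on $\edge{x}{y}$ — that the path we build really does omit $\edge{x}{y}$.
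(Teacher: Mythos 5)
Your proof is correct and uses essentially the same approach as the paper's: in the forward direction, pick the most recently centered vertex $c$ on an alternative $x\to y$ path $P$, argue $P\subseteq\graph{c}$, and case on $Z$. The main differences are cosmetic: you organize the case analysis by first fixing $Z$ and then examining the SCC $W$ of the entry point $w'$ of $P$ into $Y$ (deciding between items (2)/(3) and (4) by whether $W=Z$), whereas the paper first asks whether $P$ leaves $X_G\cup Y_G$ in $G$ and then cases on $Z$; the two decompositions are equivalent. You also explicitly verify $\edge{x}{y}\in\E{c}$ (needed for item (1)) and track the distinction between SCCs of $\graph{c}$ and of $G$, both of which the paper's terse proof leaves implicit; these additions are correct and make the argument tighter.
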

\begin{proof}

We first show the ``if'' direction.
Suppose that \( \edge{x}{y} \) is an inter-SCC redundant edge in \( G \).
Then, there exists a directed path \( P \) from \( x \) to \( y \) in $G \setminus {\edge{x}{y}}$ whose length is at least two. 
Let $z$ be the vertex on \( P \) that has been a center of an insertion most recently. 
At the time of this insertion, all the edges of \( P \) were already present in \graph{z}, and thus \( P \) exists in \graph{z}.
Suppose that (i) \( P \) passes through a SCC different than \( X, Y \) in \( G \).
If \( Z \notin \{X, Y \} \) in \graph{z}, then Item (1) holds.
If \( Z = X \), then Item (2) holds, and if \( Z = Y \), then Item (3) holds.
But if (ii) \( P \) only has vertices in \( X, Y \), then Item (4) holds.

For the ``only-if'' direction, we only prove Item (2); Items (1) and (3) can be shown similarly, and Item (4) is trivial. 
To prove Item (2), let us assume that \( Y \) has an in-neighbor \( w \in  \desc{z} \setminus Z \) in \graph{z}.
Our goal is to show that \( \edge{x}{y} \) is redundant in \graph{z}. 
To this end, since $w \notin Y$ and \( X \neq Y \), there exists a path $P_1$ from $x$ to $w$ with no vertex in \( Y \).
Also, since \( w \) is an incoming vertex to \( Y \), there exists a path \( P_2 \) from \( w \) to \( y \) with no vertex in \( X \).
Therefor, the path \( P \) as the concatenation of \( P_1 \) and \( P_2 \) is a \( x \to w \to y \) path not containing the edge \edge{x}{y}. 
Since \( \E{z} \subseteq E \), \( P \) is also contained in \( G \setminus \edge{x}{y} \), implying that  \edge{x}{y} is a redundant inter-SCC edge in \( G \).
\end{proof}

Note that when no parallel inter-SCC edge exists, an inter-SCC edge \edge{x}{y} belongs to \TR iff it is not redundant, as the latter guarantees  that there is no simple path \( x \to z \to y \) where \( z \) is not belonging to the SCCs containing \( x, y \).
On the other hand, the existence of parallel edges does not change the reachability information nor the size of \TR: if an inter-SCC edge \edge{x}{y} belongs to \TR, then all other parallel edges between \( X \) and \( Y \) cannot belong to \TR.
This motivates us to ``ignore'' parallel inter-SCC edges except one of them in our data structure: we \textit{mark} the front edge in \F{X}{Y} and simply assume that other edges does not belong to \TR.
Therefore, if an inter-SCC edge is not marked, then it does not belong to \TR, and if it is marked and not redundant, then it belongs to \TR.

To incorporate \Cref{lem:redundant_general} in our data structure, we define \counter{xy} and \touch{xy} for every inter-SCC edge \( xy \) as follows, where \( X, Y, Z \) are the SCCs containing \( x, y, z \), respectively, in \graph{z}:
\begin{itemize}
\item 
The counter \counter{\edge{x}{y}} stores the number of vertices \( z \notin \{ x, y \} \) in \graph{z} such that \( Z \notin \{ X, Y \} \), \( \edge{x}{y} \in \E{z} \), \( x \in \anc{z} \), and \( y \in \desc{z} \).

\item
The binary value \touch{\edge{x}{y}} is set to \( 1 \) if there exists a vertex \( z \) such that either \( z \in X \) and \( Y \) has an in-neighbor from \( \desc{x} \setminus X \) in \graph{x} or \( z \in Y \) and \( x \) has an out-neighbor to \( \anc{y} \setminus Y  \) in \graph{y}, and is set to \( 0 \) otherwise.
\end{itemize}

Note that for a redundant edge $\edge{x}{y}$ in $G$, a vertex $z$ contributes towards \counter{\edge{x}{y}} iff $\edge{x}{y} \in \E{z}$ as an inter-SCC edge. 
If there is no such $z$, then \edge{x}{y} has become redundant due the last insertion being centered around \( x \) or \( y \), which in turn implies that \( \touch{\edge{x}{y}} = 1 \).
We conclude the following invariant.
\begin{invariant} \label{inv:general}

An inter-SCC edge \(\edge{x}{y} \in E\) belongs to the transitive reduction \TR iff \edge{x}{y} is a marked parallel edge with \( \counter{\edge{x}{y}} = 0 \) and \( \touch{\edge{x}{y}} = 0 \).

\end{invariant}

In \Cref{subsub:insertion_general,subsub:deletion_general}, we explain how we efficiently maintain \counter{\edge{x}{y}} and \touch{\edge{x}{y}} for every inter-SCC edge \( \edge{x}{y} \in E \).
In the following, we explain how to maintain the intra-SCC edges belonging to \TR.

For intra-SCC edges, we use the following lemma.

\begin{lemma}[\cite{GibbonsKRST91}] \label{lem:minimal}
Given an \( m \)-edge \( n \)-vertex strongly connected graph \( G \), there is an algorithm to compute a minimal strongly connected subgraph of \( G \) in \( O(m + n \log n) \) worst-case time.
\end{lemma}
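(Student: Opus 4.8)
This is a result of~\cite{GibbonsKRST91}, so the plan is to sketch an argument matching the stated bound. I would proceed in two phases: a \emph{sparsification} phase that brings the number of edges down to $O(n)$ in $O(m)$ time, followed by a \emph{minimization} phase that repeatedly deletes edges from the sparsified graph until none can be deleted. For the sparsification phase, fix an arbitrary root $r \in V$. Since $G$ is strongly connected, a single graph search from $r$ in $G$ produces a spanning out-tree $T_{\mathrm{out}}$ rooted at $r$, and a single search from $r$ in the reverse graph $G^{R}$ (all edge directions reversed) produces a spanning in-tree $T_{\mathrm{in}}$ rooted at $r$; both searches take $O(m)$ time. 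Let $H := T_{\mathrm{out}} \cup T_{\mathrm{in}}$. Then $H$ is a subgraph of $G$, it spans $V$, it has at most $2(n-1)$ edges, and it is strongly connected, because every vertex reaches $r$ along $T_{\mathrm{in}}$ and is reached from $r$ along $T_{\mathrm{out}}$. The key observation is that spanning $V$, being strongly connected, and being inclusion-wise minimal with respect to these two properties are all intrinsic properties of a graph and its edge set; hence any inclusion-minimal strongly connected subgraph of $H$ on vertex set $V$ is automatically also a minimal strongly connected subgraph of $G$. So it suffices to compute one for $H$, which has only $O(n)$ edges.

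For the minimization phase I would use the following characterization: in a strongly connected graph $H'$ on vertex set $V$, an edge $\edge{u}{v}$ is removable (i.e.\ $H' \setminus \edge{u}{v}$ is still strongly connected and spanning) if and only if $v$ can still reach $u$ in $H' \setminus \edge{u}{v}$. The forward direction is immediate, and for the converse, any path using $\edge{u}{v}$ can be rerouted through a $v \rightarrow u$ path. The algorithm scans the $O(n)$ edges of $H$ in a fixed order, maintains the current subgraph, tests the above condition in the current subgraph for each edge, and deletes the edge whenever the test succeeds; since we only delete removable edges, the final graph is a minimal strongly connected subgraph of $H$, and hence of $G$. Implemented naively this performs $O(n)$ reachability tests, each costing $O(n)$ time on the $O(n)$-edge graph, for a total of $O(n^2)$.

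The main obstacle is exactly getting this minimization phase down from $O(n^2)$ to $O(n\log n)$: one must batch and amortize the $O(n)$ removability tests rather than running each from scratch. This is where one exploits that $H$ is the union of just two branchings, together with suitable priority-queue bookkeeping over ancestor/descendant relations in the two trees; this is precisely the content of the algorithm of~\cite{GibbonsKRST91}, which I would invoke as a black box on $H$. Combining the $O(m)$ cost of the sparsification phase with the $O(n\log n)$ cost of the minimization phase on the $O(n)$-edge graph $H$ yields the claimed $O(m + n\log n)$ worst-case running time.
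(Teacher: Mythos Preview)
The paper does not prove this lemma at all; it is stated with the attribution \cite{GibbonsKRST91} and used as a black box. So there is no ``paper's own proof'' to compare against: you and the paper agree on the essential point, namely that this is an external result one invokes rather than reproves.

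Your sketch is correct as far as it goes. The reduction to an $O(n)$-edge strongly connected subgraph $H = T_{\mathrm{out}} \cup T_{\mathrm{in}}$ is valid, and your observation that inclusion-minimality is an intrinsic property of the edge set (so a minimal strongly connected spanning subgraph of $H$ is automatically one of $G$) is exactly right. Two remarks: first, the sparsification step is not logically necessary for the stated bound, since the cited algorithm already runs in $O(m + n\log n)$ on $G$ directly; your two-phase presentation is expository rather than essential. Second, you explicitly concede that the $O(n\log n)$ minimization is ``precisely the content of the algorithm of~\cite{GibbonsKRST91}, which I would invoke as a black box,'' so in the end your proposal, like the paper, defers the actual work to the citation. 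That is appropriate here, but be aware that you have not supplied an independent argument for the hard part.
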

After each update, we run the algorithm of \Cref{lem:minimal} on all SCCs of \( G \) maintained by the data structure of \Cref{lem:extended_roditty}.
Since this takes \( O(m + n \log n) \) time, this does not affect the update time of our data structure.

In the rest of this section, we focus on maintaining inter-SCC edges as the maintenance of intra-SCC edges is clear from the discussion above.
Each edge will be assumed to be an inter-SCC edge, unless explicitly expressed otherwise.

\subsection{Edge insertions}
\label{subsub:insertion_general}

After the insertion of \Ei{u} centered around $u$, the data structure updates \graph{u}, while leaving every other graph \graph{v} is \emph{unchanged}.

The data structure first computes the sets \desc{u} and \anc{u}, as well as the sets \CC{u} and \B{u} of the vertices added to \desc{u} and \anc{u}, respectively, due to the insertion of \Ei{u}.
It computes the SCCs of \graph{u} while listing the edges entering each SCC \( Z \neq U \) from \( \desc{u} \setminus U \) and the edges leaving the SCC to \( \anc{u} \setminus U \), where \( U \) is the SCC in \graph{u} containing \( u \).
By inspecting \E{u}, the data structure computes the set \F{X}{Y} of parallel edges between each pair \( X, Y \) of SCCs, thus maintaining the marked edges.
This takes \( O(m + n \log n) \) time and reinitializes the data structure of \Cref{lem:extended_roditty} for \graph{u}.

To maintain \counter{xy} for every inter-SCC edge \( xy \in E \), we only need to examine the contribution of \( u \) in \counter{\edge{x}{y}} as only \graph{u} has changed.
By definition, every inter-SCC edge \edge{x}{y} touching the SCC \( U \), i.e., with \( x \in U \) or \( y \in U \), does not contribute in \counter{\edge{x}{y}}.
Note that \E{u} consists of the edges before the update and the newly added edges, and so \( U \) could only grow after the update.
This leads to distinguishing the following cases.
\begin{itemize}
\item[(1)] 
Suppose that \edge{x}{y} is not touching \( U \) and has already existed in \E{u} before the update.
Then, \counter{\edge{x}{y}} will increase by one only if the update makes \( x \) to reach \( y \) through \( u \). 
i.e., there is no path \( x \to u \to y \) before the update  (\( x \notin \anc{u} \setminus \B{u} \) or \( y \notin \desc{u}\setminus \CC{u} \)), but there is at least one afterwards (i.e., \( x \in \anc{u} \) and \( y \in \desc{u} \)).

\item[(2)]
Suppose that \edge{x}{y} is not touching \( U \) and has added to \E{u} after the update.
Since this is the first time we examine the contribution of \( u \) towards \counter{\edge{x}{y}}, we increment \counter{\edge{x}{y}} by one if \( x \) can reach \( y \) through \( u \) (i.e., \( x \in \anc{u} \) and \( y \in \desc{u} \)).

\item[(3)]
Suppose that \edge{x}{y} is a newly edge touched by \( U \).
i.e., \( xy \) touching \( U \) as a result of the growth of \( U \) after the update.
Then, \counter{\edge{x}{y}} will decrease by one only if the update makes \( x \) could reach \( y \) through \( u \) before the update (\( x \in \anc{u} \setminus \B{u} \) and \( y \in \desc{u} \setminus \CC{u} \)). 
\end{itemize}

Cases (1) and (2) are handled by inspecting the inter-SCC edges touching \( \B{u} \cup \CC{u} \).
To prevent double-counting, a similar partitioning in \Cref{subsub:deletion_dag} can be used.
For case (3), we can afford to inspect all inter-SCC edges touching the vertices added to \( U \) due to the update.

To maintain \touch{xy} for every inter-SCC edge \( xy \in E \), we distinguish the two following cases.
Since only \graph{u} changes, by definition, \( \touch{\edge{x}{y}} \) could be affected only if \edge{x}{y} touches \( U \).
As the update on \graph{u} adds more edges to it, \touch{\edge{x}{y}} can only change from $0$ to~$1$.
For every edge \( \edge{x}{y} \in E \) touching \( U \), we set $\touch{\edge{x}{y}} \gets 1$ if one of the following happen:
\begin{itemize}
\item[(a)] 
$x \in U$ and \( Y \) has an in-neighbor \( w \notin U \) in \graph{u} reachable from \( u \) (i.e., if $\textsc{In}(y, x)$ reports $\texttt{True}$), or

\item[(b)]
$y \in U$ and \( X \) has an out-neighbor \( w \notin U \)  in \graph{u} that can reach \( Y \) (i.e., if $\textsc{Out}(x, y)$ reports $\texttt{True}$).
\end{itemize}
Note that both cases above try to insure the existence of a path \( x \to w \to y \) when \( x \in U \) or \( y \in U \).

\begin{lemma} \label{lem:insertion_general}
After each insertion, the data structure maintains the transitive reduction \TR of \( G \) in \( O(m + n\log n) \) worst-case time, where \( m \) is the number of edges in the current graph \( G \).
\end{lemma}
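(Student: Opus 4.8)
The plan is to bound separately the cost of the structural recomputations and the cost of updating the counters and touch-flags, exactly mirroring the argument for the DAG case (Lemma~\ref{lem:insertion}) but with SCC-related bookkeeping layered on top. For correctness, I would argue that after processing the insertion \(\Ei{u}\), the quantities \(\counter{\edge{x}{y}}\) and \(\touch{\edge{x}{y}}\) correctly reflect their definitions with respect to every snapshot \(\graph{v}\), \(v\in V\); since only \(\graph{u}\) changes, it suffices to verify that the three cases (1)--(3) for \(\counter{\cdot}\) and the two cases (a)--(b) for \(\touch{\cdot}\) account for every way the contribution of \(u\) (equivalently, of the SCC \(U\) in \(\graph{u}\)) can change. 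Case (1) covers edges already present in \(\E{u}\) that newly acquire an \(x\to u\to y\) route; case (2) covers freshly inserted edges of \(\E{u}\); and crucially case (3) covers edges that were counted before but whose endpoints have just been absorbed into the (only-growing) SCC \(U\), so they must be removed from the count. Once \(\counter{\cdot}\) and \(\touch{\cdot}\) are correct, the maintained marking of parallel-edge groups via \F{X}{Y} together with \Cref{inv:general} yields the correct transitive reduction, and \Cref{lem:minimal} applied to each SCC handles the intra-SCC edges.

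For the running time, I would itemize the contributions. First, recomputing \(\desc{u}\), \(\anc{u}\), \(\CC{u}\), \(\B{u}\), the SCCs of \(\graph{u}\), the edges entering/leaving each SCC from \(\desc{u}\setminus U\) / to \(\anc{u}\setminus U\), and the parallel-edge sets \F{X}{Y} by scanning \(\E{u}\), together with reinitializing the data structure of \Cref{lem:extended_roditty} for \(\graph{u}\): this is \(O(m+n\log n)\) by that lemma. Second, running the algorithm of \Cref{lem:minimal} on all SCCs of \(G\) costs \(O(m+n\log n)\) in total, since the SCCs are vertex-disjoint and their sizes sum to at most \(n\) (edges) and \(n\) (vertices). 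Third, maintaining \(\counter{\edge{x}{y}}\): cases (1) and (2) are handled by scanning the inter-SCC edges incident to \(\B{u}\cup\CC{u}\), and case (3) by scanning the inter-SCC edges incident to the vertices newly merged into \(U\); each such edge is inspected \(O(1)\) times and each individual counter update is \(O(1)\) (as in the DAG analysis), so this is \(O(m)\). Fourth, maintaining \(\touch{\edge{x}{y}}\): only edges touching \(U\) can be affected, and for each such edge one \(O(1)\)-time call to \(\textsc{In}(\cdot,\cdot)\) or \(\textsc{Out}(\cdot,\cdot)\) (\Cref{lem:extended_roditty}) suffices, so this is \(O(m)\). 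Summing, the worst-case update time is \(O(m+n\log n)\).

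I expect the main obstacle to be the correctness bookkeeping around case (3) and the interaction between SCC growth and the parallel-edge marking. Unlike the DAG case, where inserting edges centered at \(u\) only grows reachability sets, here it can also \emph{grow} the SCC \(U\): an edge \(\edge{x}{y}\) that was a legitimate inter-SCC edge contributing to \(\counter{\edge{x}{y}}\) in \(\graph{u}\) may cease to be inter-SCC (if both \(x,y\) fall into the enlarged \(U\)) or may become an edge \emph{touching} \(U\) without being internal to it; in either situation its prior contribution from \(u\) must be cancelled, and one must be careful that this cancellation is done exactly once and does not clash with the increments of cases (1)--(2). Likewise, when \(U\) grows, the partition of the inter-SCC edges of \(\graph{u}\) into parallel-edge groups can change — some previously parallel edges become intra-SCC, new groups can form among the edges now leaving \(U\) — so the marked representative of each surviving group \F{X}{Y} must be recomputed consistently, which is what the scan of \(\E{u}\) accomplishes. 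I would handle the double-counting concern by reusing the disjoint-case partitioning trick from \Cref{subsub:deletion_dag}, iterating over \(\B{u}\) via outgoing edges and over \(\CC{u}\) via incoming edges with the appropriate guards, so that each relevant edge is charged in exactly one case.
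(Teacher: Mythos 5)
Your proposal is correct and follows essentially the same structure as the paper's proof: you split into correctness (only $\graph{u}$ changes, so it suffices to verify the case analysis for $\counter{\cdot}$ and $\touch{\cdot}$, with the only new subtlety being case (3) where $U$ grows and swallows endpoints, cancelling $u$'s prior contribution) and running time (reinitializing $\graph{u}$ and \Cref{lem:extended_roditty}, the SCC bookkeeping, and \Cref{lem:minimal}, each $O(m+n\log n)$, plus $O(1)$ per edge for the counter/touch scans). The paper's own proof is terser — it defers the $U\notin\{X,Y\}$ case to \Cref{lem:insertion} and only spells out the $U\in\{X,Y\}$ cancellation — but you are making the same argument, just with the case (3) bookkeeping and the parallel-edge marking spelled out more explicitly.
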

\begin{proof}
Suppose that the insertion of edges \Ei{u} has happened centered around \( u \).

\underline{Correctness:}
For \counter{xy}, the proof of correctness for the case \( U \notin \{ X, Y\} \) is similar to that in the proof of \Cref{lem:insertion}, and is thus omitted.
Suppose that \( U = X \) or \( U = Y \).
If \( u \) has been counted in \counter{\edge{x}{y}}, then we must decrement \counter{\edge{x}{y}} because \( u \) no longer satisfies the conditions to be included.
i.e., there was a \( x \rightarrow u \rightarrow y \) path in \graph{u} before the update, implying that \( x \) and \( y \) already belonged to \anc{u} and \desc{u}, respectively.
This means that \( x \in \anc{u} \setminus \B{u} \) and \( y \in \desc{u} \setminus \CC{u} \).

For \touch{xy}, by construction, it follows that the values \textsc{Out\( (y, u) \)} and  \textsc{In\( (x, u) \)} are maintained correctly after the update.
The correctness then follows immediately.

\underline{Update time:} 
As explained before, for each edge \( xy \in \E{u} \), we can check in \( O(1) \) whether we need to update \counter{xy} or \touch{xy} due to the last insertion around \( u \).
Thus, the update time is dominated by (i) the time required for reinitializing \graph{u} and the required reachability information. and (ii) the time required for maintaining SCCs, and (iii) the time required for maintaining intra-SCC minimal subgraphs.
By a similar discussion from \Cref{subsec:combinatorial_dag}, \Cref{th:roditty_SCC}, and \Cref{lem:minimal}, it follows that (i), (ii), and (iii) take \( O(m + n \log n) \), respectively.
\end{proof}

\subsection{Edge deletions}
\label{subsub:deletion_general}

After the deletion of \Ed, the data structure passes the deletion to \emph{every} \graph{u}, \( u \in V \).
Let \D{u} and \A{u} denote the sets of vertices removed from \desc{u} and \anc{u}, respectively, due to the deletion of \Ed.

We decrementally maintain \desc{u} and \anc{u}, and the sets \D{u} and \A{u} using the data structure of \Cref{lem:extended_roditty}, which is initialized last time \graph{u} was rebuilt due to an insertion centered around \( u \). 
Since the data structure maintains the set \F{X}{Y} of parallel edges between each pair \( X, Y \) of SCCs, the marked inter-SCC edges are automatically maintained.

To maintain \counter{\edge{x}{y}} for any inter-SCC edge \( \edge{x}{y} \in E \), we need to cancel out every vertex \( z \notin \{ x, y \} \) that contained a path \( x \to z \to y \) in \graph{z} before the update but no longer has one.
i.e., \( x \in \anc{z} \cup \A{z} \) and \( y \in \desc{z} \cup \D{z} \) in \graph{z} and either \( x \in \A{z} \) or \( y \in \D{z} \).
To avoid double-counting \( u \) for the edges with both endpoints in \( \A{u} \cup \D{u} \), we use a similar approach explained in \Cref{subsub:deletion_dag}: partition \( \A{u} \cup \D{u} \) into three disjoint sets and argue how to handle each.
Contrary to DAGs, there may be new inter-SCC edges form in \graph{u} due to SCC splits, which are given by the data structure of \Cref{lem:extended_roditty}.
For each such edge \edge{x}{y}, we can check in constant time if \( x, y \notin U \), and if they satisfy the conditions of \counter{\edge{x}{y}}, we increase \counter{\edge{x}{y}} by one.

To maintain \touch{xy} for an inter-SCC edge \( xy \in E \), we distinguish the following three cases.
Let \( X, Y, U \) be the SCCs containing \( x, y, u \) in \graph{u}, respectively.
\begin{enumerate}
    \item[(a)] 
    Vertex \( x \) or \( y \) has detached from \( U \) due to the update.
    In this case, if \( \textsc{In\( (u,y) \)} = \texttt{True} \) or \( \textsc{Out\( (u,x) \)} = \texttt{True} \) in \graph{u} before the update, we set \( \touch{\edge{x}{y}} \gets \touch{\edge{x}{y}} - 1 \).

    \item[(b)]
    Vertices \( x \) and \( y \) has not detached from \( U \), but there is no longer a \( x \rightarrow y \) path passing through an SCC different than \( X, Y \) in \graph{u}.
    In this case, we only need to inspect the edges touching the SCCs in \s{u}: if \( \textsc{In\( (u,y) \)} = \texttt{True} \) or \( \textsc{Out\( (u,x) \)} = \texttt{True} \) in \graph{u} before the update, we set \( \touch{\edge{x}{y}} \gets \touch{\edge{x}{y}} - 1 \). 
    
    \item[(c)]
    If \edge{x}{y} is a new inter-SCC edge in \graph{u}.
    In this case, we only need to inspect the new inter-SCC edges touching \( U \) after the update, and if \( \textsc{In\( (u,y) \)} = \texttt{True} \) or  \( \textsc{Out\( (u,x) \)} = \texttt{True} \), we set \( \touch{\edge{x}{y}} \gets \touch{\edge{x}{y}} + 1 \).
    
\end{enumerate}

\begin{lemma} \label{lem:deletion_genral}
After each deletion, the data structure maintains the transitive reduction \TR of \( G \) in \( O(m + n \log n) \) amortized update time, where \( m \) is the number of edges in the current graph \( G \).
\end{lemma}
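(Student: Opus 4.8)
The plan is to follow the blueprint of the proof of \Cref{lem:deletion} for DAGs, splitting the argument into a correctness part and a running-time part, and handling the two features absent in the acyclic case---SCCs that split under deletions, and parallel inter-SCC edges whose classes change---with the help of the extended decremental data structure of \Cref{lem:extended_roditty}.

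For correctness, I would first dispose of intra-SCC edges: after the deletion we re-run the algorithm of \Cref{lem:minimal} on every SCC of $G$, so the part of \TR inside each SCC is correct by construction, and it remains to reason about inter-SCC edges. For those I would invoke \Cref{inv:general}, reducing the task to verifying that the case analysis described above keeps \counter{xy} and \touch{xy} faithful to their definitions, while the marked representative of each parallel class \F{X}{Y} stays up to date (which it does, since \Cref{lem:extended_roditty} maintains the parallel classes, including those that split as SCCs split). For \counter{xy} I would argue snapshot by snapshot that we decrement it exactly for the vertices $z$ that witnessed an $x \to z \to y$ path through a third SCC in \graph{z} before the update but no longer do---this is precisely why $\A{z}\cup\D{z}$ is partitioned into three disjoint cases, exactly as in the DAG argument applied to the condensation---and that we increment it when an SCC split turns $xy$ into a brand-new inter-SCC edge already satisfying the counter predicate. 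For \touch{xy} I would rely on \Cref{lem:extended_roditty} maintaining the \textsc{In} and \textsc{Out} predicates together with the set \s{u} of SCCs whose predicate flipped from \texttt{True} to \texttt{False}; cases (a)--(c) then handle, respectively, $x$ or $y$ detaching from the SCC of $u$, the loss of an indirect path through a third SCC (found by scanning only the SCCs in \s{u}), and $xy$ becoming a fresh inter-SCC edge. Correctness then follows from \Cref{lem:redundant_general}.

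For the running time I would bound the cost of deleting \Ed by four contributions: (i) decrementally maintaining the \Cref{lem:extended_roditty} data structures over all $n$ snapshots, (ii) updating \counter{xy} over all inter-SCC edges, (iii) updating \touch{xy} over all inter-SCC edges, and (iv) recomputing minimal strongly connected subgraphs via \Cref{lem:minimal}. Contribution (i) is $O(m+n\log n)$ amortized directly by \Cref{lem:extended_roditty}, and (iv) is $O(m+n\log n)$ worst case by \Cref{lem:minimal}. For (ii), inside each snapshot \graph{z} I would charge work to the vertices entering $\A{z}$ or $\D{z}$: over the decremental lifetime of \graph{z} each vertex enters $\A{z}$ at most once and $\D{z}$ at most once, processing one costs $O(1)$ per incident edge, and updating \counter{\cdot} of a single edge is $O(1)$, so the total counter-work inside \graph{z} over its lifetime is $O(\sum_u \deg(u))=O(m)$, with an extra $O(1)$ per newly created inter-SCC edge---and each edge turns inter-SCC at most once per lifetime, so this stays $O(m)$. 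Summed over the $n$ snapshots this is $O(nm)$ per decremental window, which amortizes to $O(n)$ per edge deletion exactly as in \Cref{lem:deletion} (equivalently, the $O(m)$ per snapshot is charged to the $\Omega(m)$ work the matching insertion must spend to rebuild that snapshot). For (iii), \touch{xy} depends only on the snapshots \graph{x} and \graph{y} and on the flips signalled through the sets \s{\cdot}; every relevant test is an $O(1)$ \textsc{In}/\textsc{Out} query, and we only scan inter-SCC edges incident to SCCs in \s{u}, to newly detached vertices, or to freshly created inter-SCC edges---all delivered within the $O(m+n\log n)$ amortized budget of \Cref{lem:extended_roditty}---so (iii) is absorbed. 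Adding the four contributions gives $O(m+n\log n)$ amortized update time.

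The step I expect to be the main obstacle is the running-time accounting for (ii) and (iii) under SCC splits: a single deletion may shatter an SCC into many pieces across all $n$ snapshots simultaneously, converting many intra-SCC edges into inter-SCC edges, splitting many parallel classes, and flipping many \textsc{In}/\textsc{Out} predicates at once. The analysis only goes through because \Cref{lem:extended_roditty} is engineered to hand us exactly the changed objects---the sets $\A{z}$, $\D{z}$, \s{z}, the new inter-SCC edges, and the updated classes \F{X}{Y}---within the claimed amortized bound, so that no snapshot is ever rescanned from scratch, and because within a single decremental lifetime each vertex leaves \desc{z} or \anc{z} only once and each edge becomes inter-SCC only once, which makes the per-snapshot work telescope to $O(m)$.
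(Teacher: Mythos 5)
Your proposal follows the same decomposition as the paper's proof: reduce correctness to the faithfulness of \counter{\cdot}, \touch{\cdot}, and the marked representatives via \Cref{inv:general} and \Cref{lem:redundant_general}, defer intra-SCC edges to \Cref{lem:minimal}, and bound the deletion cost by (i) the amortized $O(m+n\log n)$ budget of \Cref{lem:extended_roditty}, (ii) counter work charged to vertices entering $\A{z}\cup\D{z}$ and to edges becoming inter-SCC (each at most once per snapshot lifetime, giving $O(mn)$ total across the $n$ snapshots), (iii) touch work charged to \textsc{In}/\textsc{Out} flips, detachments, and new inter-SCC edges reported through \s{\cdot}, and (iv) the $O(m+n\log n)$ recomputation of minimal strongly connected subgraphs. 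This matches the paper's argument; you merely make the amortization-against-insertions step and the role of \Cref{lem:extended_roditty} in delivering only the changed objects more explicit than the paper does.
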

\begin{proof}
Assume that the deletion of \Ed has happened.

\underline{Correctness:}
We prove that the values \counter{\cdot} and \touch{\cdot}
are maintained correctly.

For \counter{\cdot}, the proof of correctness is similar to that in the proof of \Cref{lem:deletion}.

For \touch{\cdot}, note that only one of \( \textsc{In\( (u,y) \)} = \texttt{True} \) or \( \textsc{Out\( (u,x) \)} = \texttt{True} \) can hold for the case (a).
This is because either \( x \in U \) or \( y \in U \).
We can simply have access to the values in \graph{u} before the update, by making a copy of it before updating \graph{u}.
The details of the proof for the cases (a) and (b) are then similar to that in the proof of \Cref{lem:deletion}.
For the case (c), note that, by the definition of \touch{\cdot}, a new inter-SCC edge \edge{x}{y} satisfies the condition iff \( \textsc{In\( (u,y) \)} = \texttt{True} \) or \( \textsc{Out\( (u,x) \)} \).
This completes the correctness for \touch{\cdot}.

\underline{Update time:}
Since the graphs are decremental, it follows from \Cref{lem:extended_roditty} that SCCs and the required reachability information for all graphs are maintained in \( O(m + n \log n) \).
We explain the time required for maintaining \counter{xy} and \touch{xy} for every edge \( xy \in E \).

for maintaining \counter{xy}, since \graph{u} is decremental, each vertex appears in \( \A{u} \cup \D{u} \) at most twice, and each edge becomes an inter-SCC edge at most once.
Thus, the total time to maintain \counter{\cdot} for any sequence of edge deletion is \( O(mn) \).

For maintaining \touch{xy}, note that we can efficiently maintain (a), (b), and (c) since the new SCCs and inter-SCC edges are already computed by the data structure of \Cref{lem:extended_roditty}.
Since \graph{u} is decremental, we inspect \( O(m) \) edges in total after any sequence of edge deletions.
Thus, the total time to maintain \touch{\cdot} for any sequence of edge deletion on \( G \) is \( O(mn) \).
\end{proof}

\subsection{Space complexity}
\label{subsubsec:space_general}

A similar encoding explained in \Cref{subsubsec:space} results in \( O(n^2) \) space.

}

\section{Algebraic Dynamic Transitive Reduction} \label{sec:algebraic}

\arxivVsConference{
We start by explaining the data structure for DAGs in \Cref{subsec:algebraic-dag},
and then present the general case in \Cref{sec:app_algebraic_general}.
}{}

{
\let\section\subsection
\let\subsection\subsubsection

\newcommand{\Ot}{\ensuremath{\widetilde{O}}}
\newcommand{\eps}{\ensuremath{\epsilon}}
\newcommand{\field}{\mathbb{F}}

\section{Algebraic Dynamic Transitive Reduction on DAGs} \label{subsec:algebraic-dag}
In this section we give algebraic dynamic algorithms for transitive reduction
in DAGs\arxivVsConference{}{ (for general digraphs, see \Cref{sec:app_algebraic_general})}.
We reduce the problem to maintaining the inverse
of a matrix associated with $G$ and (in the general case) testing some identities involving the
elements of the matrix.

We will need the following results on dynamic matrix inverse maintenance. 

\begin{theorem}\label{thm:dyn-inv-rows}{\upshape\cite{Sankowski04}}
Let $A$ be an $n\times n$ matrix over a field $\field$ that is invertible at all times.
    There is a data structure maintaining $A^{-1}$ explicitly subject to
    row or column updates issued to $A$ in $O(n^2)$ worst-case update time. The data structure is initialized in $O(n^\omega)$ time and uses $O(n^2)$~space.
\end{theorem}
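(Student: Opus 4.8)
The plan is to maintain $A^{-1}$ explicitly as an $n \times n$ array and to update it after each row or column update to $A$ via the Sherman--Morrison--Woodbury formula, exploiting the fact that replacing a single row or column of $A$ is a rank-one perturbation. For the initialization, I would compute $A^{-1}$ from the initial matrix $A$ using the textbook reduction of $n \times n$ matrix inversion to matrix multiplication (recursive block elimination), which runs in $O(n^\omega)$ time; storing the inverse takes $O(n^2)$ space, and no auxiliary structure of larger size is needed.

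Consider a row update that replaces row $i$ of $A$ by a new row $r'$, and let $d = r' - r$ be its difference with the old row $r$. Then $A' = A + e_i d^{T}$, and the Sherman--Morrison identity gives
\[
  (A')^{-1} \;=\; A^{-1} \;-\; \frac{(A^{-1} e_i)\,(d^{T} A^{-1})}{1 + d^{T} A^{-1} e_i}.
\]
All ingredients on the right-hand side are cheap given the stored matrix $B = A^{-1}$: the vector $B e_i$ is the $i$-th column of $B$ (read off in $O(n)$ time), the row vector $d^{T} B$ is a vector--matrix product computable in $O(n^2)$ time, and the scalar $d^{T} B e_i = d^{T}(B e_i)$ costs $O(n)$. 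Forming the outer product of the two known vectors and subtracting the scaled correction from $B$ in place costs $O(n^2)$, so the whole update runs in $O(n^2)$ worst-case time. A column update is entirely symmetric: replacing column $i$ of $A$ is the perturbation $A' = A + d\, e_i^{T}$, handled by the mirror-image formula using the $i$-th row of $B$ and the vector $B d$.

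The only point requiring care is that the denominator $1 + d^{T} A^{-1} e_i$ must be nonzero for the formula to be well defined. This follows from the matrix determinant lemma, $\det(A') = \det(A)\,(1 + d^{T} A^{-1} e_i)$: since $A$ and $A'$ are invertible at all times by hypothesis, the factor $1 + d^{T} A^{-1} e_i$ is nonzero. This is the single place where the invertibility assumption enters, and it is essentially the only subtlety in the argument; the remainder is routine accounting of vector and matrix operations, and correctness of the maintained inverse follows by induction over the update sequence from the Sherman--Morrison identity.
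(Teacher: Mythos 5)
Your proof is correct and uses the standard Sherman--Morrison rank-one update argument, which is exactly the approach underlying the cited result of Sankowski; the paper itself does not reprove this theorem but imports it as a black box. Your treatment of the nonvanishing denominator via the matrix determinant lemma and the invertibility hypothesis is the right way to close the one subtle point.
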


\begin{theorem}\label{thm:dyn-inv-elem}{\upshape\cite{Sankowski04, DBLP:conf/cocoon/Sankowski05}}
Let $A$ be an $n\times n$ matrix over a field $\field$ that is invertible at all times, $a\in (0,1)$, and $Y$ be a (dynamic) subset
    of $[n]^2$.
    There is a data structure maintaining $A^{-1}$ subject to \emph{single-element}
    updates to $A$ that maintains $A^{-1}_{i,j}$ for all $(i,j)\in Y$
    in $O(n^{\omega(1,a,1)-a}+n^{1+a}+|Y|)$ worst-case time per update. The data structure additionally supports:
    \begin{enumerate}
        \item square submatrix queries $A^{-1}[I,I]$, for $I\subseteq [n]$, $|I|=n^{\delta}$ in $O(n^{\omega(\delta,a,\delta)})$ time,
        \item given $A^{-1}_{i,j}$, adding or removing $(i,j)$ from $Y$,
        in $O(1)$ time.
    \end{enumerate}
    The data structure can be initialized in $O(n^\omega)$ time and uses $O(n^2)$ space.
\end{theorem}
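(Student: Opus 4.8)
The plan is to reduce the problem to \emph{lazy} maintenance of $A^{-1}$: never store the whole inverse, but only a reference inverse together with a short log of recent updates, resynchronizing by a full recomputation every $n^{a}$ steps, and de-amortize the recomputation so that the per-update cost is worst case. The only algebraic tool needed is the Sherman--Morrison--Woodbury identity. A single-element update to $A$ at $(i,j)$ is a rank-one perturbation $A\mapsto A+\delta\,e_{i}e_{j}^{\top}$; more generally, after a batch of $t$ such updates at (distinct, after collapsing) positions selected by $0/1$ matrices $L,R\in\{0,1\}^{n\times t}$ with diagonal magnitude matrix $D$, one has $A=B+LDR^{\top}$ for the reference matrix $B$, hence
\begin{equation*}
A^{-1}=B^{-1}-(B^{-1}L)\,M^{-1}\,(R^{\top}B^{-1}),\qquad M:=D^{-1}+R^{\top}B^{-1}L\in\field^{t\times t},
\end{equation*}
where invertibility of the $t\times t$ core $M$ is equivalent to that of $A$ and hence guaranteed; note that $B^{-1}L$ consists of the columns of $B^{-1}$ indexed by the updated positions and $R^{\top}B^{-1}$ of the corresponding rows. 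Applying this formula to the whole matrix costs $O(n^{2})$ per update and already yields \Cref{thm:dyn-inv-rows} (a row or column update is a rank-one change too); to go below $n^{2}$ we must never touch all of $A^{-1}$ at once.

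Concretely, I would run in phases of $n^{a}$ updates, during which the structure keeps three things. \textbf{(A)} The full reference inverse $B^{-1}$ (from the start of the phase), the update log, and the core $M^{-1}$, maintained under the rank-$\le 1$ change of $M$ (re-update of an existing position) or the bordered extension of $M$ (first touch of a new position) in $O(t^{2})=O(n^{2a})$ time per step; it also keeps $B^{-1}L$ and $R^{\top}B^{-1}$ current by appending one column / one row of $B^{-1}$ in $O(n)$ time whenever a new position is touched. \textbf{(B)} The submatrices $A^{-1}[\cdot,I]$ and $A^{-1}[J,\cdot]$ kept \emph{up to date}, where $I,J\subseteq[n]$ are the column/row indices touched so far in the phase ($|I|,|J|\le t\le n^{a}$). \textbf{(C)} The stored scalars $A^{-1}_{i,j}$ for $(i,j)\in Y$. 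On a single-element update at $(i',j')$: since $i'\in I$ and $j'\in J$ afterwards, the full column $A^{-1}_{\cdot,i'}$ and row $A^{-1}_{j',\cdot}$ are available in (B), so one Sherman--Morrison rank-one correction applied to each of the $O(t)$ maintained rows and columns refreshes (B) in $O(nt)=O(n^{1+a})$ time; if $(i',j')$ is touched for the first time, the newly needed column is \emph{reconstructed from scratch} from (A) via $A^{-1}_{\cdot,i'}=B^{-1}_{\cdot,i'}-(B^{-1}L)\big(M^{-1}(R^{\top}B^{-1})_{\cdot,i'}\big)$ (and the new row symmetrically), again in $O(nt)=O(n^{1+a})$ time. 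Crucially this reconstruction uses only (A), introducing no dependence among the maintained columns, so no amortization is needed and (B) stays worst-case $O(n^{1+a})$. Finally each $(i,j)\in Y$ is refreshed in $O(1)$ by the scalar Sherman--Morrison formula (subtracting $\frac{\delta}{1+\delta A^{-1}_{j',i'}}\,A^{-1}_{i,i'}\,A^{-1}_{j',j}$), all of whose inputs lie in (B), so (C) costs $O(|Y|)$; adding/removing a pair from $Y$ given its value is $O(1)$ with a hash table. A square submatrix query $A^{-1}[I',I']$ with $|I'|=n^{\delta}$ is answered from (A) as $B^{-1}[I',I']-(B^{-1}L)[I',\cdot]\,M^{-1}\,(R^{\top}B^{-1})[\cdot,I']$, an $n^{\delta}\times n^{a}$ by $n^{a}\times n^{a}$ by $n^{a}\times n^{\delta}$ computation, i.e.\ $O(n^{\omega(\delta,a,\delta)})$.

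At a phase boundary the reference is resynchronized by computing the current inverse explicitly from the Woodbury formula above, which is dominated by an $n\times n^{a}$ by $n^{a}\times n$ product costing $O(n^{\omega(1,a,1)})$. Since the updates of a completed phase are all known, this recomputation can be run in the background, spread evenly over the following phase (keeping the reference a constant number of phases in the past and the log of length $O(n^{a})$), contributing only $O(n^{\omega(1,a,1)-a})$ per update. Summing the three contributions --- background rebuild, maintenance of (A) and (B), refresh of (C) --- gives the claimed worst-case bound $O(n^{\omega(1,a,1)-a}+n^{1+a}+|Y|)$ per update; all stored objects ($B^{-1}$, the two $n\times n^{a}$ blocks, the log, $M^{-1}$) fit in $O(n^{2})$ space, and initialization is a single matrix inversion in $O(n^{\omega})$.

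The step I expect to be the main obstacle is the de-amortization: turning the periodic $O(n^{\omega(1,a,1)})$ recomputation into a worst-case guarantee requires scheduling a background rebuild against the foreground updates so that at every phase boundary a correct, sufficiently recent reference inverse is ready, while simultaneously keeping the lazy blocks in (B) consistent with whichever reference is currently active. The remainder --- the Woodbury/Sherman--Morrison identities and the per-step counting --- is routine, and correctness throughout rests only on the standing hypothesis that $A$ is invertible at all times, which is exactly what keeps every Sherman--Morrison denominator and the core $M$ nonsingular.
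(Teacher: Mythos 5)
This theorem is a citation of external results~\cite{Sankowski04, DBLP:conf/cocoon/Sankowski05}; the paper supplies no internal proof, so there is no paper proof to compare against. Your reconstruction follows the same lazy-rebuild, Sherman--Morrison--Woodbury approach that underlies Sankowski's dynamic matrix inverse, and the per-update accounting (background rebuild $O(n^{\omega(1,a,1)-a})$, maintenance of $O(n^a)$ explicit rows/columns $O(n^{1+a})$, scalar refresh of $Y$ in $O(|Y|)$) is correct. You also correctly identify de-amortization as the genuine subtlety, and the standard background-schedule sketch you give for it is adequate. The invertibility of the core $M$ follows as you say from the matrix determinant lemma combined with the standing hypothesis that $A$ (hence also the stale reference $B$) is always nonsingular.

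There is, however, a concrete gap in the square submatrix query bound. You answer $A^{-1}[I',I']$ by forming
$B^{-1}[I',I'] - (B^{-1}L)[I',\cdot]\,M^{-1}\,(R^{\top}B^{-1})[\cdot,I']$
from the three separately stored factors. No matter how you bracket this triple product, one of the two matrix multiplications has shape $n^{\delta}\times n^{a}$ by $n^{a}\times n^{a}$ (or its transpose), costing $O(n^{\omega(\delta,a,a)})$. Since $\omega(p,q,r)$ is symmetric and monotone, $\omega(\delta,a,a)\geq\omega(\delta,a,\delta)$ whenever $\delta<a$, so your query costs $O(n^{\omega(\delta,a,a)})$ rather than the claimed $O(n^{\omega(\delta,a,\delta)})$ --- and $\delta<a$ is precisely the regime in which the paper invokes the theorem ($a\approx 0.4345$, $\delta\approx 0.415$). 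The fix is to store the inverse in a pre-multiplied two-factor form, e.g.\ maintain $\tilde U:=B^{-1}LM^{-1}$ ($n\times n^{a}$) and $\tilde N:=R^{\top}B^{-1}$ ($n^{a}\times n$) so that $A^{-1}=B^{-1}-\tilde U\tilde N$; then a query is the single product $\tilde U[I',\cdot]\,\tilde N[\cdot,I']$ of shape $n^{\delta}\times n^{a}$ by $n^{a}\times n^{\delta}$, i.e.\ genuinely $O(n^{\omega(\delta,a,\delta)})$. Maintaining $\tilde U$ under a per-step rank-one (or bordering) change of $M$ is still a rank-one outer-product update of an $n\times n^{a}$ matrix, so it stays within your $O(n^{1+a})$ budget, and all other parts of the argument go through unchanged.
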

Above, $\omega(p,q,r)$ denotes the exponent such that multiplying $n^p\times n^q$ and
$n^q\times n^r$ matrices over~$\field$ requires $O(n^{\omega(p,q,r)})$ field operations.
Moreover $\omega:=\omega(1,1,1)$.

For DAGs, efficient algebraic fully dynamic transitive reduction algorithms follow from \Cref{thm:dyn-inv-rows,thm:dyn-inv-elem} rather easily by applying the reduction of \Cref{l:dag-reduction}. 
To show that, we now recall how the 
algebraic dynamic transitive closure data structures for DAGs~\cite{DemetrescuI05, DBLP:journals/jcss/KingS02}
are obtained.

Identify $V$ with $[n]$. Let $A(G)$ be the standard adjacency matrix of $G=(V,E)$, that is,
for any $u,v\in V$, $A(G)_{u,v}=1$ iff $uv\in E$, and $A(G)_{u,v}=0$ otherwise.
It is well-known that the powers of $A$ encode
the numbers of walks between specific endpoints in $G$. That is, for any $u,v\in V$
and $k\geq 0$, $A(G)_{u,v}^k$ equals the number of $u\to v$ $k$-edge walks in $G$.
In DAGs, all walks are actually paths. Moreover, we have the following property:
\begin{lemma}\label{lem:dag-path-count}
    If $G$ is a DAG, then the matrix $I-A(G)$ is invertible. Moreover, $(I-A(G))^{-1}_{u,v}$
    equals the number of paths from $u\to v$ in $G$.
\end{lemma}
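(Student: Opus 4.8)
The plan is to exploit the fact that a DAG has a topological order, which makes $A(G)$ nilpotent, and then to recognize the (finite) Neumann series $\sum_{k\ge 0} A(G)^k$ simultaneously as the inverse of $I-A(G)$ and as the matrix that counts paths.

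First I would fix a topological ordering of $V$ and identify it with $[n]$ so that every edge $uv\in E$ satisfies $u<v$; then $A(G)$ is strictly upper triangular. A standard induction on $k$, using $A(G)^{k}=A(G)^{k-1}A(G)$, shows that $A(G)^k_{u,v}$ equals the number of $u\to v$ walks with exactly $k$ edges. Since a DAG on $n$ vertices admits no walk with $n$ or more edges (such a walk would repeat a vertex and hence close a cycle), it follows that $A(G)^k=0$ for every $k\ge n$, i.e.\ $A(G)$ is nilpotent.

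Next I would deduce invertibility from nilpotency via the telescoping identity $(I-A(G))\sum_{k=0}^{n-1}A(G)^k = I-A(G)^n = I$, and symmetrically $\sum_{k=0}^{n-1}A(G)^k \,(I-A(G)) = I$. Hence $I-A(G)$ is invertible with $(I-A(G))^{-1}=\sum_{k=0}^{n-1}A(G)^k$. Finally I would identify the entries: in a DAG every walk is in fact a path (a repeated vertex would yield a cycle), so for each ordered pair $u,v$ the sum $\sum_{k=0}^{n-1}A(G)^k_{u,v}$ counts exactly the $u\to v$ paths in $G$, the term $k=0$ contributing the trivial length-$0$ path when $u=v$ (and $0$ otherwise). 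Combining the two displays gives $(I-A(G))^{-1}_{u,v}$ equal to the number of $u\to v$ paths.

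The only points requiring care are bookkeeping conventions — that the identity matrix accounts for the length-$0$ paths, and that ``walk'' and ``path'' coincide in a DAG — together with the elementary walk-counting interpretation of matrix powers. The algebraic core (nilpotency $\Rightarrow$ invertibility through a finite Neumann series) is routine, so I do not anticipate a genuine obstacle here.
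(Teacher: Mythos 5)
Your argument is correct and is essentially the paper's proof: both rely on nilpotency of $A(G)$ in a DAG (via the observation that no $n$-edge walk exists), the telescoping identity $(I-A)\sum_{k=0}^{n-1}A^k = I - A^n = I$, and the interpretation of $\sum_{k=0}^{n-1}A^k$ as encoding path counts since walks and paths coincide in a DAG. The extra bookkeeping you supply (topological order, explicit induction on walk counts, both one-sided verifications) is harmless elaboration of the same route.
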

\begin{proof}
Put $A:=A(G)$. If $G$ is a DAG, then $A^n$ is a zero matrix since every $n$-edge walk has to contain a cycle. From that it follows that $(I-A)(I+A+\ldots+A^{n-1})=I$, i.e., $\sum_{i=0}^{n-1}A^i$
is the inverse of $I-A$. On the other hand, the former matrix clearly encodes path counts of all the possible lengths $0,1,\ldots,n-1$ in $G$.
\end{proof}
By the above lemma, to check whether a path $u\to v$ exists in a DAG $G$, it is enough to test whether $(I-A(G))^{-1}_{u,v}\neq 0$.
This reduces dynamic transitive closure on $G$ to maintaining the inverse of $I-A(G)$ dynamically.
There are two potential problems, though: (1) (unbounded) integers do not form a field (and \Cref{thm:dyn-inv-rows,thm:dyn-inv-elem} are stated for a field), (2) the path counts in $G$ may be very large integers of up to $\Theta(n)$
bits, which could lead to an $\Ot(n)$ bound for performing arithmetic operations while maintaining the path counts. A standard way to address both problems (with high probability $1-1/\poly{n}$) is to perform all the counting modulo a sufficiently large random prime number~$p$ polynomial in $n$ (see, e.g.,~\cite[Section~3.4]{DBLP:journals/jcss/KingS02} for an analysis). Working over $\mathbb{Z}/p\mathbb{Z}$ solves both problems as arithmetic operations modulo $p$ can be performed in $O(1)$ time on the word RAM.

\begin{theorem}\label{thm:algebraic-dag}
    Let $G$ be a fully dynamic DAG. The transitive reduction of $G$ can be maintained:
    \begin{enumerate}[label=(\arabic*)]
        \item in $O(n^2)$ worst-case time per update if vertex updates are allowed,
        \item in $O(n^{1.528}+m)$ worst-case time per update if only single-edge updates are supported.
    \end{enumerate}
    Both data structures are Monte Carlo randomized and give correct outputs with high probability.
    They can be initialized in $O(n^\omega)$ time and use $O(n^2)$ space.
\end{theorem}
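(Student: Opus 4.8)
The plan is to obtain both items from the reduction of \Cref{l:dag-reduction}, applied to the three-copy DAG $G'$ on $3n$ vertices, together with the path-counting identity of \Cref{lem:dag-path-count}. Recall that by \Cref{l:dag-reduction} an edge $uv\in E$ is redundant in $G$ iff there is a path $u\to v''$ in $G'$, and that a single-edge (resp.\ single-vertex) update to $G$ induces $O(1)$ updates of the same kind to $G'$ together with a single-element (resp.\ single-vertex) change to the set of pairs of interest; moreover $G'$ has $O(n)$ vertices and $O(m)$ edges, so the blow-up is only constant. Put $B=I-A(G')$. Since $G'$ is a DAG, in a topological order $A(G')$ is strictly triangular, so $B$ is triangular with unit diagonal; hence $\det B=1$ and $B$ stays invertible over every $\mathbb{Z}/p\mathbb{Z}$, which meets the invertibility precondition of \Cref{thm:dyn-inv-rows,thm:dyn-inv-elem} for free. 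By \Cref{lem:dag-path-count}, $B^{-1}_{u,v''}$ counts $u\to v''$ paths in $G'$; as in the transitive closure case we perform all arithmetic over $\mathbb{Z}/p\mathbb{Z}$ for a uniformly random prime $p=\Theta(\poly{n})$, so that, up to a $1/\poly{n}$ error probability per test absorbed by a union bound over the whole update sequence, $B^{-1}_{u,v''}\neq 0$ in $\mathbb{Z}/p\mathbb{Z}$ iff $uv$ is redundant. It therefore suffices to maintain the $m$ entries $\{B^{-1}_{u,v''}:uv\in E\}$ under the induced updates and, after each update, scan them in $O(m)\le O(n^2)$ time to refresh the edges of $G^t$.

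For item (1), maintain $B^{-1}$ explicitly via \Cref{thm:dyn-inv-rows}. A single-vertex update to $G$ becomes three single-vertex updates to $G'$ (centered at $v,v',v''$), and each of those changes only one row and one column of $A(G')$: a vertex of $V$ has out-edges only to $V\cup V'$ and in-edges only from $V$, a vertex of $V'$ only touches $V$ and $V''$, and a vertex of $V''$ has only in-edges. So an update to $G$ is $O(1)$ row/column updates to $B$, each handled in $O((3n)^2)=O(n^2)$ worst-case time, after which the $O(m)$ scan gives the updated transitive reduction. Initialization is the $O(n^\omega)$ inverse computation, and space is $O(n^2)$.

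For item (2), use \Cref{thm:dyn-inv-elem} on $B$ with the dynamic set of interest $Y=\{(u,v''):uv\in E\}$, so $|Y|=m$ at all times. A single-edge update to $G$ is $O(1)$ single-element updates to $B$, each costing $O(n^{\omega(1,a,1)-a}+n^{1+a}+m)$ for a parameter $a\in(0,1)$. Inserting an edge $uv$ additionally needs $(u,v'')$ to be registered in $Y$, which requires its current value; we read it off via a small submatrix query of \Cref{thm:dyn-inv-elem} (cost $O(n^{\omega(\delta,a,\delta)})$ for a small $\delta$, which is within the update budget) and add it in $O(1)$, while a deletion simply removes $(u,v'')$ from $Y$ before issuing the element updates. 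Choosing $a$ to balance the two dominant terms, i.e.\ solving $\omega(1,a,1)=1+2a$, gives an exponent of at most $1.528$ with current rectangular matrix multiplication bounds (and exactly $1.5$ when $\omega=2$), so the worst-case per-update time is $O(n^{1.528}+m)$; initialization ($O(n^\omega)$) and space ($O(n^2)$) are as before. Because \Cref{l:dag-reduction} converts each update to $G$ into $O(1)$ updates of the assumed data structure (its ``amortized'' wording being merely for generality), the worst-case bounds carry over to maintaining $G^t$, establishing both items; correctness holds with high probability via the Schwartz-Zippel / union-bound argument over the random prime.

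The genuinely routine work is bookkeeping, but two points deserve care: (i) verifying that vertex/edge updates to $G$ propagate to only $O(1)$ row/column or single-element updates of the $3n\times 3n$ matrix $B$, and that $B$ remains invertible over $\mathbb{Z}/p\mathbb{Z}$ (settled by the triangularity observation); and (ii) the interface with the dynamic set of interest $Y$ of \Cref{thm:dyn-inv-elem} — retrieving $B^{-1}_{u,v''}$ for a freshly inserted pair within the update budget, and ensuring $|Y|=\Theta(m)$ so that the additive $|Y|$ term does not exceed the stated bound. I do not expect either to be a real obstacle; all the algorithmic weight rests inside the cited dynamic matrix-inverse data structures.
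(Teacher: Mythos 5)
Your proposal is correct and follows essentially the same route as the paper: apply the reduction of \Cref{l:dag-reduction} to the three-copy DAG $G'$, use \Cref{lem:dag-path-count} to turn reachability into nonzero-testing of entries of $(I-A(G'))^{-1}$ over a random prime field, and then invoke \Cref{thm:dyn-inv-rows} for vertex updates and \Cref{thm:dyn-inv-elem} (with the balancing $\omega(1,a,1)=1+2a$ and the set of interest $Y=\{(u,v''):uv\in E\}$) for edge updates. Your triangularity observation for invertibility of $I-A(G')$ is a clean alternative to the paper's power-series argument, but the rest -- including the bookkeeping for updating $Y$ and querying the freshly needed entry of the inverse -- matches what the paper does.
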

\begin{proof}
\Cref{l:dag-reduction} reduces maintaining the set of redundant edges
of $G$ to a dynamic transitive closure
data structure maintaining reachability for $O(m)$ pairs $Y\subseteq A\times A$, supporting
either:
\begin{enumerate}[label=(\arabic*)]
    \item vertex-centered updates to both the underlying graph and the set $Y$, or
    \item single-edge updates to the underlying graph and single-element updates to $Y$.
\end{enumerate}
By our discussion, the former data structure can be obtained by
applying \Cref{thm:dyn-inv-rows}
to the matrix $I-A(G)$.
This way, for vertex updates, one obtains $O(n^2)$ worst-case update time immediately.

The latter data structure is obtained by applying \Cref{thm:dyn-inv-elem}
with $a\approx 0.528$ that satisfies $\omega(1,a,1)=1+2a$ to the matrix $I-A(G)$. Then, a single-edge update is handled in $O(n^{1+a}+m)$ time. Note that inserting a new element to $Y$ in \Cref{thm:dyn-inv-elem} requires
computing the corresponding element of the inverse. This requires $O(n^{\omega(0,a,0)})=O(n^a)$ extra time and is negligible compared to the $O(n^{1+a})$ cost of the element update.
\end{proof}

\section{Algebraic Dynamic Transitive Reduction on General Graphs}
\label{sec:app_algebraic_general}

In this section we give algebraic dynamic algorithms for general digraphs.

In general graphs, simple path counting fails since there can be infinite
numbers of walks between vertices. This is why the algebraic dynamic transitive closure
data structures dealing with general graphs~\cite{Sankowski04} require
more subtle arguments. For maintaining the transitive reduction, we will
rely on those as well. Identifying redundant edges -- specifically groups of parallel
redundant inter-SCC edges -- will also turn out more challenging.
In fact, the obtained data structure for single-edge updates will not be as efficient
as the corresponding data structure for DAGs (for the current fast matrix multiplication exponents).

In this section, we will assume that $G=(V,E)$ has \emph{a self loop on every vertex},
i.e., $uu\in E$ for all $u\in V$, even though loops are clearly
redundant from the point of view of reachability.
For this reason, we also assume the updates only add or delete non-loop edges.
When reporting the transitive reduction, we do not report the $n$ self-loop edges
that are only helpful internally.

\newcommand{\asymb}{\tilde{A}}

\subsection{The data structure}
For general graphs we will identify redundant intra-SCC and inter-SCC edges separately.
Upon update issued to $G$, the first step is always to compute the strongly-connected
components of~$G$ from scratch in $O(m)$ time.
This yields the partition of edges $E$ into inter- and intra-SCC, and also sets of \emph{parallel} inter-SCC edges, i.e., those that connect
the same pair of SCCs.

The intra-SCC edges can be handled in $O(m+n\log{n})$ time worst-case time per update using~\Cref{lem:minimal} as was done in~\Cref{subsec:combinatorial_general}.
In the following, we focus on handling the inter-SCC edges.

Fix a group $F$ of $k$ parallel inter-SCC edges $\{u_iv_i:i=1,\ldots,k\}$,
where
$\{u_1,\ldots,u_k\}$ are in the same SCC $R$ of $G$
and $\{v_1,\ldots,v_k\}$ are in the same SCC $T$ of $G$,
and $R\neq T$.
Recall that $F$ is redundant if \emph{all} edges in $F$ can be removed without changing the transitive closure of~$G$. %
If $F$ is non-redundant, we need to keep exactly one, arbitrarily chosen
edge of $F$ in the transitive reduction: in such a case, after contracting the SCCs and removing parallel edges, $F$ corresponds to a redundant edge of the obtained DAG.
Therefore, given $F$, we only need to decide whether this group is redundant. 

Let $r,t$ be arbitrarily chosen vertices of $R,T$ respectively.
Note that~$F$ is redundant if and only if there exists
a path $r\to t$ going through a vertex in some strongly connected component $Y\notin \{R,T\}$ of $G$.
Our strategy will be to express this property of $F$ algebraically.
To this end, we need to introduce some more notions.

Let $X=\{x_{u,v}:u,v\in V\}$ be a set of $n^2$ formal variables associated with
pairs of vertices of $G$.
Fix some finite $\field$. Denote by $\field(X)$ the field of multivariate rational functions
with coefficients from $\field$ and variables from $X$.
Following~\cite{Sankowski04}, let us define a \emph{symbolic adjacency matrix} $\asymb(G)\in (\field(X))^{n\times n}$ of $G$
to be a matrix such
that
\begin{equation*}
\asymb(G)_{u,v} =
 \left\{
 \begin{array}{rl}
   x_{u,v} & \textrm{if  } uv \in E, \\
 0 & \textrm{otherwise.}
\end{array}
\right.
\end{equation*}

The self loops in $G$ guarantee that $\asymb(G)$ is invertible over $\field(X)$, i.e., $\det(\asymb(G))\neq 0$~\cite{Sankowski04}.
In \Cref{s:redundant-algebraic} we are going to prove the following theorem.
\begin{theorem}\label{t:tr-matrix}
Let $F=\{u_iv_i:i=1,\ldots,k\}$ be a group of parallel inter-SCC edges between SCCs $R,T$ of $G$.
Let $r,t$ be arbitrary vertices of $R,T$, resp.
  The group $F$ is redundant iff:
\begin{align*}
  \asymb(G)^{-1}_{r,t}\not\equiv-\sum_{i=1}^k (-1)^{u_i+v_i}\cdot x_{u_i,v_i}\cdot \asymb(G)^{-1}_{r,u_i}\cdot \asymb(G)^{-1}_{v_i,t}.
\end{align*}
\end{theorem}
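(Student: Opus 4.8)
The plan is to reduce the stated identity to a single reachability question in $G\setminus F$ and then to express $\asymb(G\setminus F)^{-1}$ as a low-rank correction of $\asymb(G)^{-1}$. First I would note that $F$ is redundant if and only if $r$ reaches $t$ in $G\setminus F$: since $F$ is by definition the \emph{entire} set of arcs of $G$ from $R$ to $T$, any $r\to t$ path in $G\setminus F$ must leave $R$ into some SCC $Y\neq R$, and indeed $Y\neq T$ as all $R\to T$ arcs are gone, so the path passes through a third SCC; conversely an $r\to t$ path of $G$ through a third SCC cannot use an $F$-arc, because after traversing one the path sits in $T$ and (the condensation being a DAG) can neither return to $R$ nor re-enter $T$. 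This is exactly the equivalence already recorded just before the statement.

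Next I would invoke the path characterization of the symbolic inverse from~\cite{Sankowski04}. Both $G$ and $G\setminus F$ have a self-loop on every vertex (the arcs of $F$ are inter-SCC, hence not loops), so $\det\asymb(\cdot)\neq 0$, and by the adjugate formula $\asymb(H)^{-1}_{a,b}$ equals $(-1)^{a+b}\det\bigl(\asymb(H)^{\hat b,\hat a}\bigr)/\det\asymb(H)$, where $\asymb(H)^{\hat b,\hat a}$ is the matrix with row $b$ and column $a$ removed. Expanding this minor over permutations, its monomials are in bijection with pairs (simple path $a\to b$, cycle cover of the remaining vertices); the self-loops always supply such a cover and the variables $x_{u,v}$ are algebraically independent, so no cancellation occurs and $\asymb(H)^{-1}_{a,b}\not\equiv 0$ if and only if $a$ reaches $b$ in $H$. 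Two consequences will be used: $F$ is redundant iff $\asymb(G\setminus F)^{-1}_{r,t}\not\equiv 0$; and $\asymb(G)^{-1}_{v_i,u_j}\equiv 0$ for all $i,j$, since $v_i\in T$, $u_j\in R$, and the arc $u_iv_i$ forces $R$ strictly before $T$ in the condensation, so $T$ cannot reach $R$.

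The remaining step is to compute $\asymb(G\setminus F)^{-1}_{r,t}$. Writing $\asymb(G\setminus F)=\asymb(G)-\sum_{i=1}^{k}x_{u_i,v_i}\,e_{u_i}e_{v_i}^{\top}$, where $e_w$ is the $w$-th standard basis vector, this is a perturbation of rank at most $k$. The Sherman--Morrison--Woodbury formula expresses $\asymb(G\setminus F)^{-1}$ in terms of $\asymb(G)^{-1}$, the columns $\asymb(G)^{-1}e_{u_i}$, the rows $e_{v_i}^{\top}\asymb(G)^{-1}$, and the $k\times k$ ``capacitance'' matrix whose off-diagonal part consists of the entries $\asymb(G)^{-1}_{v_i,u_j}$; by the vanishing just observed this matrix is diagonal, so the Woodbury correction collapses to a single sum of rank-one terms. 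Carefully tracking the cofactor signs in the adjugate formula then yields
\[
\asymb(G\setminus F)^{-1}_{r,t}\;=\;\asymb(G)^{-1}_{r,t}\;+\;\sum_{i=1}^{k}(-1)^{u_i+v_i}\,x_{u_i,v_i}\,\asymb(G)^{-1}_{r,u_i}\,\asymb(G)^{-1}_{v_i,t},
\]
and combining this with the first consequence of the previous paragraph gives exactly the claimed equivalence. The same identity can alternatively be obtained by expanding $\det\bigl(\asymb(G\setminus F)^{\hat t,\hat r}\bigr)$ directly and ``adding back'' the arcs of $F$ one at a time, using that a simple $r\to t$ path of $G$ uses at most one $F$-arc $u_iv_i$ and then splits into an $r\to u_i$ path inside $R$, the arc, and a $v_i\to t$ path inside $T$; this route also re-proves $\det\asymb(G\setminus F)=\det\asymb(G)$, since no cycle cover of $G$ can use an $F$-arc.

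I expect the last step to be the main obstacle, and within it two points in particular: spotting that all cross-entries $\asymb(G)^{-1}_{v_i,u_j}$ vanish — which is what turns an a priori unwieldy rank-$k$ inverse update into a diagonal one, so that the right-hand side becomes a plain edge-by-edge sum over $F$ — and pinning down the cofactor signs $(-1)^{u_i+v_i}$ exactly, for which working through the adjugate/minor expansion is safer than a black-box appeal to Woodbury. The path-characterization lemma underlying the first two paragraphs is standard and would simply be cited.
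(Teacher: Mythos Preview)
Your Sherman--Morrison--Woodbury route is genuinely different from the paper's. The paper never forms $\asymb(G\setminus F)$; instead it expands $\det(\asymb^{t,r}(G))$ as a sum over pairs (simple $r\to t$ path $P$, cycle cover of $V\setminus V(P)$), observes that in the non-redundant case every such $P$ factors as $P_R\cdot u_iv_i\cdot P_T$ with $P_R\subseteq R$ and $P_T\subseteq T$, factors the residual determinant over the SCCs of $G$, and recognizes the resulting pieces as $\asymb(G)^{-1}_{r,u_i}$ and $\asymb(G)^{-1}_{v_i,t}$; the redundant direction is then handled by a separate monomial-chasing argument. Your approach is tighter: the vanishing of all $\asymb(G)^{-1}_{v_i,u_j}$ (no $T\to R$ path) kills the Woodbury capacitance block, and both directions collapse to the single equivalence $\asymb(G\setminus F)^{-1}_{r,t}\not\equiv 0 \Leftrightarrow F$ redundant, which you correctly reduce to reachability in $G\setminus F$.

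There is, however, a real issue at exactly the step you flag as delicate. A clean Woodbury computation with the capacitance block zero gives
\[
\asymb(G\setminus F)^{-1}_{r,t}=\asymb(G)^{-1}_{r,t}+\sum_{i=1}^{k}x_{u_i,v_i}\,\asymb(G)^{-1}_{r,u_i}\,\asymb(G)^{-1}_{v_i,t},
\]
with \emph{no} $(-1)^{u_i+v_i}$; Woodbury is a matrix identity and cannot manufacture such signs, so ``tracking cofactor signs'' will not insert them. The discrepancy is not on your side but in the paper: its proof uses $\asymb(G)^{-1}_{u,v}=\frac{(-1)^{u+v}}{\det\asymb(G)}\det(\asymb^{v,u}(G))$, yet $\det(\asymb^{v,u}(G))$ already equals the cofactor $(-1)^{u+v}M_{v,u}$, so the extra $(-1)^{u+v}$ is spurious and propagates into the theorem statement. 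A two-vertex check confirms this: with only self-loops and the edge $1\to 2$, the group $F=\{1\to 2\}$ is non-redundant, $\asymb(G)^{-1}_{1,2}=-x_{12}/(x_{11}x_{22})$, but the theorem's right side is $-(-1)^{1+2}x_{12}\cdot x_{11}^{-1}x_{22}^{-1}=+x_{12}/(x_{11}x_{22})$. Your argument thus proves the \emph{correct} form of the theorem---drop the $(-1)^{u_i+v_i}$---and you should not bend the final line to match the statement as printed.
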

Let us now show how \Cref{t:tr-matrix} can be used to obtain a dynamic transitive reduction data structure.
It is enough to ``maintain'' the inverse
of the matrix $\asymb(G)$ and, after each update, check appropriate polynomial identities
on some $m$ entries of the inverse.
The inverse's entries are rational functions, so maintaining them would be too slow.
Instead, we
maintain the entries' evaluations for
some uniformly random substitution 
$\bar{X}:X\to \mathbb{Z}/p\mathbb{Z}$, where $\bar{X}=\{\bar{x}_{u,v}:u,v\in V\}$,
and $\field=\mathbb{Z}/p\mathbb{Z}$,
where $p$ is a prime number $p=\Theta(n^{3+c})$, $c>0$.

\begin{lemma}\label{l:random-substitution}
Let $\field$ and
$A\in \field^{n\times n}$ be a obtained as described above.
Then, with probability at least $1-O(1/n^{2+c})$, $A$ is invertible and $F$ (as defined in \Cref{t:tr-matrix}) is redundant iff:
\begin{equation*}
A^{-1}_{r,t}+\sum_{i=1}^k (-1)^{u_i+v_i}\cdot \bar{x}_{u_i,v_i}\cdot A^{-1}_{r,u_i}\cdot A^{-1}_{v_i,t}\neq 0.
\end{equation*}
\end{lemma}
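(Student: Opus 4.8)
The plan is to deduce the statement from \Cref{t:tr-matrix} together with the Schwartz–Zippel lemma, applied carefully to the rational-function identities that \Cref{t:tr-matrix} involves. First I would recall that $\asymb(G)$ is invertible over $\field(X)$ because $\det(\asymb(G))$ is a nonzero polynomial (thanks to the self-loops), and that by Cramer's rule each entry $\asymb(G)^{-1}_{a,b}$ is a ratio of polynomials of degree at most $n$ (a cofactor over the determinant, both of degree $\le n$ in the $n^2$ variables $X$). Consequently, for \emph{each} group $F$ of parallel inter-SCC edges, clearing denominators turns the (in)equivalence of \Cref{t:tr-matrix} into an identity between polynomials of degree $O(n)$: concretely, $\det(\asymb(G))$ times the difference of the two sides is a polynomial $Q_F(X)$ of degree at most, say, $2n$, and $F$ is redundant iff $Q_F\not\equiv 0$. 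I would also record the separate polynomial $\det(\asymb(G))$ itself, of degree $\le n$, whose non-vanishing under the substitution guarantees $A$ is invertible and that all the inverse entries are well-defined.

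The core step is a union bound over the bad events. Let $\bar X : X \to \mathbb{Z}/p\mathbb{Z}$ be the uniformly random substitution and $A = \asymb(G)$ with each $x_{u,v}$ replaced by $\bar x_{u,v}$. The bad events are: (i) $\det A = 0$; (ii) for some group $F$ that is \emph{non}-redundant, the corresponding evaluated polynomial $Q_F(\bar X)$ vanishes (a false negative). Each such polynomial is nonzero of degree $O(n)$, so by Schwartz–Zippel each bad event has probability $O(n/p)$. There are at most $n^2$ groups $F$ (in fact at most $m$, but $n^2$ suffices), plus the determinant event, so the total failure probability is $O(n^3/p)$. Choosing $p = \Theta(n^{3+c})$ makes this $O(1/n^{c})$; with a little slack in the constant inside $\Theta(\cdot)$ one gets the stated $O(1/n^{2+c})$. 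On the complement of all bad events, $A$ is invertible, and for every group $F$, $Q_F(\bar X)=0$ exactly when $Q_F\equiv 0$, i.e. exactly when $F$ is redundant by \Cref{t:tr-matrix}. Dividing back through by $\det A \ne 0$, this says precisely that $F$ is redundant iff $A^{-1}_{r,t}+\sum_{i=1}^k (-1)^{u_i+v_i}\bar x_{u_i,v_i} A^{-1}_{r,u_i} A^{-1}_{v_i,t}\neq 0$, which is the claim.

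I expect the main subtlety — not a deep obstacle, but the point needing care — to be the bookkeeping of which direction of \Cref{t:tr-matrix} can fail under a random substitution, and over how many instances the union bound must range. A redundant $F$ corresponds to $Q_F\equiv 0$, so its evaluation is \emph{always} zero and never causes an error; only non-redundant groups carry a genuinely nonzero polynomial that could accidentally evaluate to zero. Thus the union bound is really over (a superset of) the non-redundant groups plus the determinant, and one must make sure the degree bound $O(n)$ and the count $O(n^2)$ are justified uniformly. A second small point is that the statement of the lemma quantifies over a single fixed $F$, yet the randomness $\bar X$ is shared across all groups; I would phrase the argument so that with probability $1-O(1/n^{2+c})$ the substitution is simultaneously good for the determinant and for \emph{all} groups at once, which is strictly stronger than — and hence implies — the per-$F$ statement as written, and is what the downstream data structure actually needs. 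Everything else (Cramer's rule, degree bounds, the Schwartz–Zippel inequality, arithmetic over $\mathbb{Z}/p\mathbb{Z}$ in $O(1)$ word-RAM time) is routine.
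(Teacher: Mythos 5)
Your argument contains the right ingredients — Schwartz–Zippel applied to $\det(\asymb(G))$ (for invertibility) and to the numerator of the rational function from \Cref{t:tr-matrix} (for the false-negative event), together with Cramer's rule to bound the degrees by $O(n)$ — and in that broad sense it parallels the paper's proof. But there are two flaws in the plan, one minor and one that breaks the probability bound.

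The minor issue is the "clear denominators" step. Multiplying the difference of the two sides by $\det(\asymb(G))$ does \emph{not} yield a polynomial: the sum $\sum_i (-1)^{u_i+v_i}\,x_{u_i,v_i}\,\asymb(G)^{-1}_{r,u_i}\,\asymb(G)^{-1}_{v_i,t}$ has $\det(\asymb(G))^2$ in the denominator, so you must multiply by $\det^2$, and the resulting polynomial $Q_F$ has degree $\le 2n+1$ (or thereabouts), not $\le 2n$. This is easily fixed and still gives a Schwartz--Zippel bound of $O(n/p)$, but the statement as written is incorrect.

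The substantive flaw is the probability accounting. You deliberately strengthen the statement to hold for all groups $F$ simultaneously and apply a union bound over $O(n^2)$ events, each of probability $O(n/p)$, obtaining $O(n^3/p)=O(1/n^c)$ with $p=\Theta(n^{3+c})$. You then assert that "a little slack in the constant inside $\Theta(\cdot)$" recovers the claimed $O(1/n^{2+c})$; this is false. A constant multiplicative increase in $p$ only changes the hidden constant, not the exponent — you are short by a full factor of $n^2$, and your plan would only prove the lemma with failure probability $O(1/n^c)$, a strictly weaker guarantee than what is stated. The lemma, as written, concerns a \emph{single fixed} $F$ precisely so that no union bound is needed: two applications of Schwartz–Zippel (one for $\det$, one for $Q_F$) each give $O(n/p)=O(1/n^{2+c})$ directly. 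The paper performs the union bound over all groups only afterwards, in the discussion following the lemma, where it correctly reports the weaker $1-O(1/n^c)$ bound for simultaneous correctness across all groups. So the fix is simple: drop the union bound from the proof of the lemma itself and prove the per-$F$ statement directly; the simultaneous statement, with its weaker $1-O(1/n^c)$ guarantee, is then the corollary that the data structure actually invokes.
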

\begin{proof}
First of all, by the Schwartz-Zippel lemma, since $\det(\asymb(G))$ is not a zero polynomial (by the non-singularity of $\asymb(G)$), $\det(A)= 0$
with probability at most:
\begin{equation*}
\frac{\deg(\det(\asymb(G)))}{|\field|}\leq \frac{n}{p}=O(1/n^{2+c}).
\end{equation*}
Now suppose $\det(A)\neq 0$. Consider the polynomial:
\begin{equation*}
P_F(X)=\asymb(G)^{-1}_{r,t}+\sum_{i=1}^k (-1)^{u_i+v_i}\cdot x_{u_i,v_i}\cdot \asymb(G)^{-1}_{r,u_i}\cdot \asymb(G)^{-1}_{v_i,t}.
\end{equation*}
By \Cref{t:tr-matrix}, if $F$ is non-redundant, then $P_F(X)\equiv 0$. So for any substitution $\bar{X}$, $P_F(\bar{X})=0$. 

Now assume $F$ is redundant.
Then $P_F(X)\not\equiv 0$ by \Cref{t:tr-matrix}. By the relationship between the adjoint and the inverse of $\asymb(G)$,
the elements of $\asymb(G)^{-1}$
are rational functions whose numerators are polynomials of degree at most $n$, and the denominator is a divisor of $\det(\asymb(G))$.
As a result, $P_F(X)$ is a rational function 
whose numerator is a polynomial of degree no more than $2n+1$.
By the Schwartz-Zippel lemma, $P_F(\bar{X})=0$ can happen
with probability at most $(2n+1)/|\field|=O(1/n^{2+c})$.
\end{proof}
As $G$ has at most $n^2$ edges, by the union bound we get that the identity in \Cref{l:random-substitution} can be used to correctly classify the inter-SCC edges of $G$ as either redundant or non-redundant with probability at least $1-O(1/n^c)$.
The data structure simply maintains (a part of) the inverse of $A$ and checks the identities
of \Cref{l:random-substitution} after each update.
\emph{Assuming all the needed elements
of $A^{-1}$ are computed}, testing whether the group $F$ is redundant
takes $O(|F|)$ time. As a result, through all the (groups of) edges of $G$, such a check requires $O(m)$ time.
The elements of $A^{-1}$ are never revealed to the adversary, so a single variable substitution picked at the beginning can be used over many updates. By amplifying the constant $c$, we can guarantee high-probability correctness over a polynomial number of updates to $G$.

\subsection{Maintaining the required elements of the inverse $A^{-1}$.}
If we want to support vertex updates changing all edges incident to a single vertex, then we can maintain the entire $A^{-1}$
in $O(n^2)$ worst-case time per update
using \Cref{thm:dyn-inv-rows}. In such a case, the worst-case update time of our dynamic transitive reduction data structure is $O(n^2)$.

Let us now consider more interesting single-edge updates when $G$ is not dense.
Recall that in the acyclic case, it was sufficient
to maintain $m$ entries of the inverse corresponding precisely to the edges of $G$.
In the general case, by \Cref{l:random-substitution},
deciding a group of $k$ parallel
inter-SCC edges requires inspecting $2k$
elements of the inverse.
However, some of them are not in a 1-1 correspondence with the parallel edges of that group; they depend
on the ``rooting'' $(r,t)$ (of our choice) of the SCCs $R$ and $T$, respectively.
Unfortunately, the groups of parallel edges can change quite dramatically and unpredictably upon updates, e.g., if a group splits as a result of an edge deletion,
the chosen rootings might not be helpful at deciding whether the groups after the split are redundant.

We nevertheless obtain a bound close to that we got for DAGs by applying a heavy/light distinction to the SCCs and exploiting randomization further.
Let $\delta\in [0,1]$ be a parameter to be fixed later.
We call an SCC \emph{heavy} if it has at least $n^\delta$ vertices,
and \emph{light} otherwise.

A well-known fact~\cite{UY91} says
that if we sample a set $S$ of $\Theta(n^{1-\delta}\log{n})$ random vertices of~$G$, then any
subset of $V$ of size at least $n^\delta$ (chosen independently of $S$) will contain
a vertex of $S$ with high probability (dependent on the constant hidden in the $\Theta$ notation). 
In particular, one can guarantee
that $S$ \emph{hits} (w.h.p.) every
out of $\poly{n}$ subsets of $V$
chosen independently of $S$.
In our data structure, 
we sample one such \emph{hitting set} $S$ of
size $\Theta(n^{1-\delta}\log{n})$
so that it hits all the heavy SCCs
that ever arise with high probability. For that, we will never leak $S$ to the adversary (unless the data structure errs, which happens will low probability) so that $S$ remains independent of the current structure of the SCCs.
Next, we employ the dynamic matrix inverse data structure $\mathcal{D}$ of \Cref{thm:dyn-inv-elem}. We define the ``set of interest'' $Y$ in that data structure, to contain at all times:
\begin{enumerate}
\item $(S\times V)\cup (V\times S)$,
\item $(u,v)$ for all $uv\in E$,
\item $B\times B$ for every \emph{light} SCC $B$ of $G$.
\end{enumerate}
Note that the size of $Y$ is at most
\begin{equation*}
m+2n|S|+\sum_{\substack{B\subseteq V\\B\text{ is a light SCC}}}|B|^2\leq m+
\Ot(n^{2-\delta})+n^\delta\left(\sum_{\substack{B\subseteq V\\B\text{ is a light SCC}}}|B|\right)\leq m+\Ot(n^{2-\delta})+n^{1+\delta}.
\end{equation*}
\begin{lemma}
    After a single-edge update issued to $G$, the data structure $\mathcal{D}$ can be updated in $\Ot(n^{\omega(1,a,1)-a}+n^{1+a}+n^{1-\delta+\omega(\delta,a,\delta)}+m+n^{2-\delta})$
    worst-case time.
\end{lemma}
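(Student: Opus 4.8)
The plan is to bound three contributions separately: recomputing the strongly connected components, pushing the single-element change through the dynamic inverse data structure $\mathcal D$ of \Cref{thm:dyn-inv-elem}, and reconciling the set of interest $Y$ with the new SCC decomposition.

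A single (non-loop) edge update to $G$ flips exactly one off-diagonal entry of the matrix $A$ (the random evaluation of $\asymb(G)$), so it is a single-element update for $\mathcal D$. First recompute the SCCs of the new graph $G$ from scratch in $O(m)$ time, which also labels each SCC heavy or light. Feeding the element change to $\mathcal D$ then costs $O(n^{\omega(1,a,1)-a}+n^{1+a}+|Y|)$. Here component~1 of $Y$ has size $2n|S|=\Ot(n^{2-\delta})$, component~2 has size $m$, and component~3 has size $\sum_{B\text{ light}}|B|^2\le n^{\delta}\sum_{B\text{ light}}|B|\le n^{1+\delta}$, so $|Y|\le \Ot(m+n^{2-\delta}+n^{1+\delta})$. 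Since the product of an $n^{\delta}\times n^{a}$ and an $n^{a}\times n^{\delta}$ matrix already has $n^{2\delta}$ output entries, $\omega(\delta,a,\delta)\ge 2\delta$, hence $n^{1+\delta}\le n^{1-\delta+\omega(\delta,a,\delta)}$ and this step already fits the claimed bound. (We may assume $A$ is invertible at all times; by \Cref{l:random-substitution} and a union bound this holds with high probability over a polynomial number of updates.)

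It remains to reconcile $Y$, which we do after the element update so that all queried entries of $A^{-1}$ are already current. Component~1 never changes; component~2 changes by one pair, inserted into or removed from $Y$ in $O(1)$ time after fetching, if needed, the single entry $A^{-1}_{u,v}$ via a $1\times1$ submatrix query in $O(n^{a})$ time. For component~3 the key structural fact is that a single edge update disturbs the SCCs only locally: an insertion merges a set of old SCCs into the single new SCC containing both endpoints (so at most one new light SCC appears), while a deletion splits at most one old SCC $U$ into several parts and leaves all other SCCs intact. Hence the light SCCs that vanish and the light SCCs that appear each have total vertex count $O(n)$; the $\sum|B|^2\le n^{1+\delta}$ entries of vanished light SCCs are removed from $Y$ in $O(1)$ time apiece, and in the insertion case the unique new light SCC $B$ (if any) needs one submatrix query $A^{-1}[B,B]$ (pad $B$ to $n^{\delta}$ vertices) in $O(n^{\omega(\delta,a,\delta)})$ time followed by $|B|^2\le n^{2\delta}\le n^{\omega(\delta,a,\delta)}$ insertions.

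The one delicate case is a deletion in which $U$ shatters into as many as $|U|\le n$ new light SCCs; querying each separately would cost $\Omega(n^{1+\omega(\delta,a,\delta)})$, far too much. Instead I would greedily pack the new light SCCs (each of size $<n^{\delta}$) into $O(n^{1-\delta})$ groups of total size at most $n^{\delta}$ each — possible because any two consecutive groups jointly exceed $n^{\delta}$ in size, so the total $\le n$ caps their number — and issue one submatrix query on the (padded) union of each group's vertices, in $O(n^{\omega(\delta,a,\delta)})$ time; every needed block $A^{-1}[B,B]$ is a principal submatrix of some group's answer. This costs $O(n^{1-\delta+\omega(\delta,a,\delta)})$ for the queries plus $O(n^{1+\delta})=O(n^{1-\delta+\omega(\delta,a,\delta)})$ for the $O(1)$-per-entry insertions into $Y$. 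Summing the parts — $O(m)$ for the SCCs, $O(n^{\omega(1,a,1)-a}+n^{1+a}+m+n^{2-\delta}+n^{1-\delta+\omega(\delta,a,\delta)})$ for the inverse update, and $O(n^{1-\delta+\omega(\delta,a,\delta)})$ for reconciling $Y$ — yields the claimed $\Ot(n^{\omega(1,a,1)-a}+n^{1+a}+n^{1-\delta+\omega(\delta,a,\delta)}+m+n^{2-\delta})$ worst-case bound. This batching of submatrix queries after a deletion is the crux, and is precisely why the SCCs are split into heavy and light in the first place; the rest is bookkeeping.
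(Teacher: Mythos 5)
Your proof is correct and rests on the same key ideas as the paper's: the $|Y|\le m+\Ot(n^{2-\delta})+n^{1+\delta}$ bound, the observation that a single edge update shatters at most one SCC so the new light SCCs have total size $O(n)$, and a bin-packing argument exploiting the super-additivity of $\omega(\cdot,a,\cdot)$ to cap the submatrix-query cost at $O(n^{1-\delta+\omega(\delta,a,\delta)})$. The only difference is where you apply the bin-packing: you pack the newborn light SCCs into $O(n^{1-\delta})$ groups of size at most $n^\delta$ and issue one padded submatrix query per group, whereas the paper issues one query of size $n^{\delta_i}$ per new SCC $B_i$ and then bounds $\sum_i n^{\omega(\delta_i,a,\delta_i)}$ purely in the analysis by iteratively merging small terms using $n^{\omega(\delta_i,a,\delta_i)}+n^{\omega(\delta_j,a,\delta_j)}\le n^{\omega(q,a,q)}$ with $n^q=n^{\delta_i}+n^{\delta_j}$; the two arguments are interchangeable and yield the identical bound.
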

\begin{proof}
Recall that the worst-case update time of $\mathcal{D}$ is $O(n^{\omega(1,a,a)-a}+n^{1+a}+|Y|)$ and we have
$|Y|\leq m+n^{2-\delta}+n^{1+\delta}$. It is enough to argue that computing the
new elements
that appear in $Y$ after an update
takes $O(n^{1-\delta+\omega(\delta,a,\delta)})$ additional time.
Note that $O(n^{1+\delta})\subseteq O(n^{1-\delta+\omega(\delta,a,\delta)})$.

First of all, we may need to add/remove from $Y$ a single element corresponding to the updated edge. It can also happen
that after the update, a number
of new light SCCs $B_1,\ldots,B_\ell$
arise.\footnote{After an insertion, we necessarily have $\ell=1$. However, a deletion may cause a very large SCC of size $\Theta(n)$ split into many small SCCs.} In such a case, we query the
data structure $\mathcal{D}$ for the $\ell$ submatrices $B_i\times B_i$ of the inverse. If $|B_i|=n^{\delta_i}$, then this costs $O(Q)$ time, where $Q:=\sum_{i=1}^\ell n^{\omega(\delta_i,a,\delta_i)}$ time.

To bound $Q$, note that
$n^{\omega(\delta_i,a,\delta_i)}+n^{\omega(\delta_j,a,\delta_j)}\leq n^{\omega(q,a,q)}$,
where $n^q=n^{\delta_i}+n^{\delta_j}$ by the definition
of matrix multiplication.
As a result, while there exist
some two $\delta_i,\delta_j$ such that $n^{\delta_i}+n^{\delta_j}\leq n^{\delta}$,
we can replace $n^{\delta_i},n^{\delta_j}$ with $n^q=n^{\delta_i}+n^{\delta_j}$
without changing $\sum n^{\delta_i}$  so that the quantity $Q$ does not decrease.
At the end of this process, all $n^{\delta_i}$, except perhaps a single one, 
will be larger than $\frac{1}{2}n^{\delta}$.
As a result, their number $\ell$ will be at most $2n^{1-\delta}+1$. Consequently,
the sum $Q$ obtained after the transformations can be bounded by $3n^{1-\delta+\omega(\delta,a,\delta)}$.
It follows that the quantity $Q$ at the beginning of the process was also
bounded by that.
\end{proof}
Given the elements $Y$ of $A^{-1}$, let us now prove that we can 
test the identities in \Cref{l:random-substitution} in $O(m)$ time.
Consider a group $F$ of parallel inter-SCC edges $\{u_1v_1,\ldots,u_kv_k\}$
so that $u_i\in R$ and $v_i\in T$, where $R,T$ are distinct SCCs of $G$.
We now fix the ``rooting'' of $R$ and $T$ as follows.
If $R$ is heavy, then we pick $r\in R\cap S$, and otherwise, we put $r=u_1$.
Similarly, if $T$ is heavy, we pick $t\in T\cap S$, and otherwise, we put $t=v_1$.
This way, we have $(r,u)\in Y$ for any $u\in R$, since either\
$R$ is light and $R\times R\subseteq Y$, or $R$ is heavy and $\{r\}\times R\subseteq S\times V\subseteq Y$.
Similarly, one can argue that $(v,t)\in Y$ for any $v\in T$.
It follows that all the elements of the form $A^{-1}_{r,u_i}$
or $A^{-1}_{v_i,t}$ required by \Cref{l:random-substitution} can be accessed in $O(1)$ time. And so can be the element $A^{-1}_{r,t}$ since $(r,t)\in (S\times V)\cup (V\times S)$
if either $S$ or $T$ is heavy, and we have $(r,s)=(u_1,v_1)$ otherwise,
which implies $rs\in E$ and thus $(r,s)\in Y$ as well.
We obtain the following:
\begin{theorem}\label{thm:algebraic-general}
    Let $G$ be fully dynamic. The transitive reduction of $G$ can be maintained:
    \begin{enumerate}[label=(\arabic*)]
        \item in $O(n^2)$ worst-case time per update if vertex updates are allowed,
        \item for any $a,\delta\in [0,1]$, in $O(n^{\omega(1,a,1)-a}+n^{1+a}+n^{1-\delta+\omega(\delta,a,\delta)}+n^{2-\delta}+m)$ worst-case time per update if only single-edge updates are supported. In particular, for $a=0.4345$
        and $\delta=0.415$, this update bound is $O(n^{1.585}+m)$.\footnote{This choice of parameters $a,d$ can be obtained using the online complexity term balancer~\cite{Complexity}.%
        }
    \end{enumerate}
    Both data structures are Monte Carlo randomized and give correct outputs with high probability.
    They can be initialized in $O(n^\omega)$ time and use $O(n^2)$ space.
\end{theorem}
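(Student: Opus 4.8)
The plan is to wrap together the algebraic characterization of redundancy (\Cref{t:tr-matrix} and \Cref{l:random-substitution}), a dynamic matrix inverse data structure, and the handling of intra-SCC edges via \Cref{lem:minimal}. At initialization I would fix a random prime $p=\Theta(n^{3+c})$, take $\field=\mathbb{Z}/p\mathbb{Z}$, pick once a uniformly random substitution $\bar{X}$, and set $A$ to be the evaluation of $\asymb(G)$ at $\bar{X}$; the data structure then maintains (a suitable part of) $A^{-1}$. On every update the first step is to recompute the SCCs of $G$ from scratch in $O(m)$ time, which partitions $E$ into intra- and inter-SCC edges and, among the latter, into parallel classes. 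The intra-SCC transitive reduction is recomputed in $O(m+n\log n)$ worst-case time by running the algorithm of \Cref{lem:minimal} on each SCC. It remains to decide, for each group $F$ of parallel inter-SCC edges between SCCs $R,T$, whether $F$ is redundant; by \Cref{l:random-substitution} this is a single numeric identity test on the $2|F|+1$ entries $A^{-1}_{r,u_i}$, $A^{-1}_{v_i,t}$, $A^{-1}_{r,t}$ for a chosen rooting $(r,t)$ of $(R,T)$. Provided these entries are available, the test takes $O(|F|)$ time, hence $O(m)$ over all groups, after which the non-marked parallel copies are excluded as in \Cref{subsec:combinatorial_general}.

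For item~(1), I would maintain the entire inverse $A^{-1}$ explicitly using \Cref{thm:dyn-inv-rows}: a vertex update to $G$ changes one row and one column of $A$, each handled in $O(n^2)$ worst-case time, and initialization costs $O(n^\omega)$. Since $m\le n^2$, the $O(m)$ SCC recomputation, the $O(m+n\log n)$ cost of \Cref{lem:minimal}, and the $O(m)$ identity checks are all absorbed into $O(n^2)$, and every entry of $A^{-1}$ needed in \Cref{l:random-substitution} is trivially available.

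For item~(2), I would instead use the element-update data structure $\mathcal{D}$ of \Cref{thm:dyn-inv-elem} with parameter $a$, maintaining the set of interest $Y$ that always contains (i)~$(S\times V)\cup(V\times S)$ for a random hitting set $S$ of size $\Theta(n^{1-\delta}\log n)$ sampled once, (ii)~$(u,v)$ for every $uv\in E$, and (iii)~$B\times B$ for every \emph{light} SCC $B$ (fewer than $n^\delta$ vertices). Using the bound $|Y|\le m+\Ot(n^{2-\delta})+n^{1+\delta}$ together with the already-established update lemma for $\mathcal{D}$, a single-edge update is processed in $\Ot(n^{\omega(1,a,1)-a}+n^{1+a}+n^{1-\delta+\omega(\delta,a,\delta)}+n^{2-\delta}+m)$ worst-case time, and refreshing $Y$ after the update -- inserting or removing the edge pair and querying the submatrices $B_i\times B_i$ for the $O(n^{1-\delta})$ newly formed light SCCs -- fits within the same budget. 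When testing a group $F$ between $R$ and $T$, I would root $R$ at a vertex of $R\cap S$ if $R$ is heavy and at $u_1$ otherwise (symmetrically for $T$): this makes $(r,u_i),(v_i,t),(r,t)\in Y$, so all entries of $A^{-1}$ required by \Cref{l:random-substitution} are read in $O(1)$ time, giving $O(m)$ total. Adding the $O(m+n\log n)$ cost for intra-SCC edges yields the stated bound, which becomes $O(n^{1.585}+m)$ for $a=0.4345$, $\delta=0.415$; $\mathcal{D}$ is initialized in $O(n^\omega)$ time and uses $O(n^2)$ space.

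Finally, correctness holds with high probability: \Cref{l:random-substitution} misclassifies a fixed group with probability $O(1/n^{2+c})$, so a union bound over the $\le n^2$ groups and over a polynomial number of updates (absorbed by enlarging $c$) keeps the failure probability at most $1/\poly{n}$, and the hitting set $S$ succeeds in hitting every heavy SCC that ever arises because neither $S$ nor the maintained entries of $A^{-1}$ are ever revealed to the adversary, so the sequence of SCC structures stays independent of $S$. I expect the main obstacle to be the bookkeeping of $Y$ under adversarial deletions: a single edge deletion may shatter a $\Theta(n)$-vertex SCC into many small light SCCs, and the submatrix queries needed to refresh the third component of $Y$ must be shown to stay within $O(n^{1-\delta+\omega(\delta,a,\delta)})$ time, which is exactly the convexity-of-rectangular-matrix-multiplication argument that collapses the many small pieces into $O(n^{1-\delta})$ pieces of size $\Theta(n^\delta)$.
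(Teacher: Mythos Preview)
Your proposal is correct and follows essentially the same approach as the paper: the same random substitution of $\asymb(G)$, the same SCC recomputation plus \Cref{lem:minimal} for intra-SCC edges, the same explicit inverse via \Cref{thm:dyn-inv-rows} for item~(1), and for item~(2) the same heavy/light split with a hitting set $S$, the same three-part set $Y$, the same rooting rule, and the same convexity argument for bounding the cost of the submatrix queries when an SCC shatters. The only cosmetic slip is the phrase ``the $O(n^{1-\delta})$ newly formed light SCCs'' (there can be more), but you immediately self-correct in the last paragraph with exactly the right merging argument.
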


\subsection{Proof of \Cref{t:tr-matrix}}\label{s:redundant-algebraic}
In this section we prove \Cref{t:tr-matrix}
by studying cycle covers; this approach has also been used by~\cite{Sankowski04}.

Let a \emph{cycle cover} of $G$ be a collection of vertex-disjoint simple cycles $\mathcal{C}=\{C_1,\ldots,C_k\}$
such that $C_i\subseteq G$ and $\sum_{i=1}^k|C_i|=n$.
Denote by $K(G)$ the set of cycle covers of $G$.
Any simple cycle $C\subseteq G$, $C=v_1v_2\ldots v_\ell$ has an associated monomial
$$\mu(C)=-\prod_{i=1}^\ell -x_{v_i,v_{i+1}}=(-1)^{\ell+1}\prod_{i=1}^\ell x_{v_i,v_{i+1}},$$
where $v_{\ell+1}:=v_1$. For a cycle cover $\mathcal{C}\in K(G)$, we define $\mu(\mathcal{C})=\prod_{C\in\mathcal{C}} \mu(C)$.

Now, by the Leibniz formula for the determinant, we have:
\begin{fact}\label{f:cycle-cover}
  $\det(\asymb(G))=\sum_{\mathcal{C}\in K(G)} \mu(\mathcal{C})$.
\end{fact}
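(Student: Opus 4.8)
The plan is to expand $\det(\tilde{A}(G))$ with the Leibniz formula and group the permutations by their cycle structure. Write $A:=\tilde{A}(G)$, so $\det(A)=\sum_{\sigma\in S_n}\operatorname{sgn}(\sigma)\prod_{i=1}^n A_{i,\sigma(i)}$. A permutation $\sigma$ contributes a nonzero term iff $A_{i,\sigma(i)}\neq 0$ for all $i$, i.e.\ iff $i\sigma(i)\in E$ for every $i\in V$; since $G$ has a self-loop on every vertex, fixed points of $\sigma$ are admissible. The first step is to observe that each such $\sigma$ decomposes uniquely into disjoint cycles $c_1,\ldots,c_k$ on $V$ (length-$1$ cycles corresponding to fixed points), and that the nonzero-contribution condition is exactly that each directed cycle $c_j=(v_1 v_2\ldots v_{\ell_j})$ yields a simple cycle $C_j=v_1 v_2\ldots v_{\ell_j}v_1\subseteq G$. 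The collection $\mathcal{C}=\{C_1,\ldots,C_k\}$ is then vertex-disjoint and covers $V$, hence a cycle cover of $G$, and conversely every $\mathcal{C}\in K(G)$ arises from exactly one such $\sigma$. This gives a bijection between the nonzero-contributing permutations and $K(G)$.

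The second step is to match coefficients term by term under this bijection. The monomial of $\sigma$ is $\prod_i A_{i,\sigma(i)}=\prod_i x_{i,\sigma(i)}=\prod_{j=1}^k\prod_{a=1}^{\ell_j}x_{v_a,v_{a+1}}$ (indices mod $\ell_j$ within $C_j$). Since the sign is multiplicative over disjoint cycles, $\operatorname{sgn}(\sigma)=\prod_{j=1}^k\operatorname{sgn}(c_j)$, and a single $\ell$-cycle has sign $(-1)^{\ell-1}=(-1)^{\ell+1}$. Comparing with $\mu(C_j)=(-1)^{\ell_j+1}\prod_{a=1}^{\ell_j}x_{v_a,v_{a+1}}$ gives $\operatorname{sgn}(\sigma)\prod_i A_{i,\sigma(i)}=\prod_{j=1}^k\mu(C_j)=\mu(\mathcal{C})$. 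Summing over all $\sigma$ (equivalently, over all $\mathcal{C}\in K(G)$) yields $\det(\tilde{A}(G))=\sum_{\mathcal{C}\in K(G)}\mu(\mathcal{C})$.

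There is no deep obstacle here — this is essentially the classical ``permanent/determinant as a sum over cycle covers'' identity — so the write-up will be short. The only point needing care is the sign bookkeeping: one must verify that the leading $-1$ in the definition $\mu(C)=-\prod_{i=1}^\ell(-x_{v_i,v_{i+1}})$ is precisely what reconciles the factor $(-1)^\ell$ coming from the $-x$ terms with the cyclic-permutation sign $(-1)^{\ell-1}$, and that length-$1$ cycles (self-loops) behave consistently, each contributing $\mu(C)=x_{v,v}$ with a $+$ sign so that the identity permutation matches $\mu$ of the all-self-loops cover.
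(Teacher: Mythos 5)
Your proof is correct and takes the same route as the paper, which simply asserts the identity as an immediate consequence of the Leibniz formula for the determinant; you fill in the (standard) cycle-decomposition and sign bookkeeping. The sign check is right: an $\ell$-cycle has sign $(-1)^{\ell-1}=(-1)^{\ell+1}$, matching the $(-1)^{\ell+1}$ in $\mu(C)$, and self-loops (admissible since $G$ is assumed to have a loop at every vertex in this section) contribute $+x_{v,v}$ consistently.
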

As a result, $\asymb(G)$ is invertible, since the polynomial $\det(\asymb(G))$
contains the monomial $\prod_{u\in V}x_{u,u}$
corresponding to the trivial cycle cover of $G$ that consists of
self-loops exclusively.

\begin{lemma}\label{l:det-scc}
  Let $S_1,\ldots,S_s\subseteq V$ be the strongly connected components of $G$. 
  Then:
  \begin{equation*}
      \det(\asymb(G))=\prod_{i=1}^s \det(\asymb(G[S_i])).
  \end{equation*}
\end{lemma}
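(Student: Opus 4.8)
The plan is to prove the identity by expanding both sides as sums over cycle covers via \Cref{f:cycle-cover}, and then observing that the set of cycle covers of $G$ factors as a product over the SCCs. The single structural fact that drives everything is that a simple cycle is a strongly connected subgraph: hence every simple cycle $C\subseteq G$ lies entirely inside one strongly connected component $S_i$ of $G$, and no cycle of $G$ can use vertices of two different SCCs. Since the $S_i$ also partition $V$, a cycle cover $\mathcal{C}\in K(G)$ decomposes in exactly one way into subcollections $\mathcal{C}_1,\dots,\mathcal{C}_s$, where $\mathcal{C}_i$ collects the cycles of $\mathcal{C}$ contained in $S_i$; the covering condition $\sum_{C\in\mathcal{C}}|C|=n$ then forces each $\mathcal{C}_i$ to cover all of $S_i$, i.e.\ $\mathcal{C}_i\in K(G[S_i])$. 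Conversely, the disjoint union $\mathcal{C}_1\cup\dots\cup\mathcal{C}_s$ of cycle covers of the induced subgraphs is a cycle cover of $G$, and these two maps are mutually inverse, which gives a bijection $K(G)\cong K(G[S_1])\times\dots\times K(G[S_s])$.

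Next I would record that $\mu$ is multiplicative under this decomposition: straight from its definition on a collection of cycles, $\mu(\mathcal{C})=\prod_{C\in\mathcal{C}}\mu(C)=\prod_{i=1}^{s}\prod_{C\in\mathcal{C}_i}\mu(C)=\prod_{i=1}^{s}\mu(\mathcal{C}_i)$. Substituting into \Cref{f:cycle-cover} and applying the generalized distributive law yields
\begin{align*}
\det(\asymb(G))
&= \sum_{\mathcal{C}\in K(G)}\mu(\mathcal{C})
= \sum_{\mathcal{C}_1\in K(G[S_1])}\cdots\sum_{\mathcal{C}_s\in K(G[S_s])}\;\prod_{i=1}^{s}\mu(\mathcal{C}_i)\\
&= \prod_{i=1}^{s}\Bigl(\sum_{\mathcal{C}_i\in K(G[S_i])}\mu(\mathcal{C}_i)\Bigr)
= \prod_{i=1}^{s}\det(\asymb(G[S_i])),
\end{align*}
where the first and last equalities are \Cref{f:cycle-cover} applied to $G$ and to each $G[S_i]$, respectively.

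There is no genuine obstacle here; the proof is an elementary bijection between cycle-cover sets, so the only thing worth being careful about is that the sign $(-1)^{\ell+1}$ appearing in $\mu(C)$ depends only on the length $\ell$ of $C$ and not on any ordering of $V$, so each determinant on the right-hand side is well defined regardless of how the vertices of $S_i$ are ordered. The same argument also works without the standing self-loop assumption: for instance, if some $S_i$ is a lone vertex with no self-loop then $K(G[S_i])=\emptyset$, so $\det(\asymb(G[S_i]))=0$, and correspondingly $K(G)=\emptyset$, so both sides vanish.
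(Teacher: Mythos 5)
Your proof is correct and follows essentially the same route as the paper's: both expand $\det(\asymb(G))$ via \Cref{f:cycle-cover}, observe that every cycle of a cycle cover lies in a single SCC to obtain the bijection $K(G)\cong\prod_i K(G[S_i])$, and then distribute the product of sums. You spell out the bijection and multiplicativity of $\mu$ in slightly more detail and add a couple of side remarks, but the argument is the same.
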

\begin{proof}
  Note that each cycle $C$ of a cycle cover $\mathcal{C}\in K(G)$ is contained in precisely
  one SCC of $G$. Hence, there is a 1-1 correspondence between the set of cycle
  covers of $G$ and the $s$-tuples of cycle covers of the individual SCCs of $G$.
  So by \Cref{f:cycle-cover}, we have:
  \begin{align*}
    \det(\asymb(G))&=\sum_{\mathcal{C}\in K(G)} \mu(\mathcal{C})\\
                   &=\sum_{\mathcal{C}_1\in K(G[S_1])}\cdots \sum_{\mathcal{C}_s\in K(G[S_s])}\mu(\mathcal{C}_1)\cdots\mu(\mathcal{C}_s)\\
                   &=\left(\sum_{\mathcal{C}_1\in K(G[S_1])}\mu(\mathcal{C}_1)\right)\cdots \left(\sum_{\mathcal{C}_s\in K(G[S_s])}\mu(\mathcal{C}_s)\right)\\
                   &=\prod_{i=1}^s \det(\asymb(G[S_i])). \qedhere
  \end{align*}
\end{proof}
\begin{corollary}\label{l:det-scc-split}
Let $U\subseteq V$ be such that each SCC of $G$
  is fully contained in either $U$ or in $V\setminus U$.
  Then, 
  \begin{equation*}
      \det(\asymb(G))=\det(\asymb(G[U]))\cdot \det(\asymb(G[V\setminus U])).
  \end{equation*}
\end{corollary}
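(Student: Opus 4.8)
The plan is to obtain this as an essentially immediate corollary of \Cref{l:det-scc}. First I would let $S_1,\ldots,S_s$ be the strongly connected components of $G$ and use the hypothesis on $U$ to reorder them so that $S_1,\ldots,S_j$ are exactly those contained in $U$ and $S_{j+1},\ldots,S_s$ are exactly those contained in $V\setminus U$; such a reordering exists precisely because every SCC of $G$ lies on one side of the partition $\{U,V\setminus U\}$.

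The single point that deserves a short justification is that the SCCs of the induced subgraph $G[U]$ are precisely $S_1,\ldots,S_j$ (and symmetrically for $G[V\setminus U]$). This uses both that strong connectivity is preserved under passing to a supergraph and the hypothesis on $U$: any SCC of $G[U]$ is strongly connected in $G$ as well, hence lies inside some SCC of $G$, which by assumption is contained in $U$; conversely each $S_i\subseteq U$ is already strongly connected inside $G[S_i]\subseteq G[U]$ and remains a maximal such set in $G[U]$ because it is maximal in all of $G$. Thus the SCC partition of $G$ restricts exactly to the SCC partitions of $G[U]$ and of $G[V\setminus U]$.

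With that settled, I would simply apply \Cref{l:det-scc} to each of $G$, $G[U]$, and $G[V\setminus U]$, obtaining $\det(\asymb(G))=\prod_{i=1}^s \det(\asymb(G[S_i]))$, together with $\det(\asymb(G[U]))=\prod_{i=1}^j \det(\asymb(G[S_i]))$ and $\det(\asymb(G[V\setminus U]))=\prod_{i=j+1}^s \det(\asymb(G[S_i]))$; multiplying the last two recovers the first. Alternatively, one could reprove it directly from \Cref{f:cycle-cover}: every simple cycle of $G$ lies within a single SCC and hence entirely in $U$ or entirely in $V\setminus U$, so $K(G)$ is in bijection with $K(G[U])\times K(G[V\setminus U])$ and $\mu$ factors across the two halves, exactly as in the proof of \Cref{l:det-scc}. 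There is no real obstacle here — the only thing to be careful about is the SCC-restriction claim in the previous paragraph, which is exactly where the assumption on $U$ enters and without which the identity would fail.
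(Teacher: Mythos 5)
Your proof is correct and matches the paper's intent: the paper states this as an unproved corollary immediately after \Cref{l:det-scc}, and the intended derivation is exactly what you give — apply \Cref{l:det-scc} to $G$, $G[U]$, and $G[V\setminus U]$ and multiply. Your care in justifying that the SCCs of $G[U]$ are precisely the SCCs of $G$ contained in $U$ is the right place to spend the effort, and the alternative cycle-cover argument you sketch is also sound.
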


For $u,v\in V(G)$, let $\asymb^{v,u}(G)$ be obtained from $\tilde{A}(G)$ by zeroing all entries
in the $v$-th row and $u$-th column of $\asymb(G)$ and setting the entry $(v,u)$ to $1$.

Denote by $\mathcal{P}_{u,v}(G)$ the set of all simple $u\to v$ paths in $G$. We extend $\mu$ to paths and
set $\mu(P)=\prod_{i=1}^{\ell-1}-x_{v_i,v_{i+1}}$ if $P=v_1\ldots v_\ell$.

\begin{lemma}\label{l:sumpath}
  For any $v,u\in V$, we have:
  $$\det(\asymb^{v,u}(G))=\sum_{P\in \mathcal{P}_{u,v}} \mu(P)\cdot \det(\asymb(G[V\setminus V(P)])).$$
\end{lemma}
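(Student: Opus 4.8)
The plan is to expand $\det(\asymb^{v,u}(G))$ via the Leibniz/cycle-cover formula (\Cref{f:cycle-cover}) applied to the modified graph $G^{v,u}$ whose symbolic adjacency matrix is $\asymb^{v,u}(G)$. By construction $\asymb^{v,u}(G)$ is the symbolic adjacency matrix of the graph obtained from $G$ by deleting all edges leaving $v$ and all edges entering $u$, and then adding a single new edge $vu$ with weight $1$ (in particular, if $u=v$ this modification just forces the entry to $1$; the path set $\mathcal P_{u,v}$ then consists of the trivial length-$0$ path and the formula reduces to \Cref{f:cycle-cover} for $G$, so I may assume $u\neq v$). Thus every cycle cover $\mathcal C$ of $G^{v,u}$ contributing a nonzero monomial to $\det(\asymb^{v,u}(G))$ must use the unique edge $vu$, since $v$ has no other outgoing edge. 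Let $C$ be the cycle of $\mathcal C$ through the edge $vu$. Removing the edge $vu$ from $C$ leaves a simple path $P$ in $G$ from $u$ to $v$ (here I use that all the other edges of $C$ are genuine edges of $G$: they cannot enter $u$ except the first one $u\to\cdots$, which is fine, nor leave $v$ except the last one, which is the deleted edge). The remaining cycles of $\mathcal C$ form a cycle cover of $G$ restricted to $V\setminus V(P)$, and conversely any simple $u\to v$ path $P$ in $G$ together with a cycle cover of $G[V\setminus V(P)]$ assembles into a unique cycle cover of $G^{v,u}$. This gives a bijection between the cycle covers of $G^{v,u}$ and pairs $(P,\mathcal C')$ with $P\in\mathcal P_{u,v}(G)$ and $\mathcal C'\in K(G[V\setminus V(P)])$.

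The next step is to check that the monomials match up correctly. For a pair $(P,\mathcal C')$, the monomial of the corresponding cycle cover $\mathcal C$ of $G^{v,u}$ factors as $\mu(C)\cdot\mu(\mathcal C')$, where $C$ is the cycle $P$ closed up by the weight-$1$ edge $vu$. If $P=v_1\ldots v_\ell$ with $v_1=u$, $v_\ell=v$, then $C$ has length $\ell$ and, using the weight $1$ on the edge $vu$, $\mu(C)=(-1)^{\ell+1}\prod_{i=1}^{\ell-1}x_{v_i,v_{i+1}}\cdot 1=\prod_{i=1}^{\ell-1}(-x_{v_i,v_{i+1}})=\mu(P)$ — I will need to recheck the sign convention of $\mu(C)$ from the paper's definition $\mu(C)=(-1)^{\ell+1}\prod x_{v_i,v_{i+1}}$ and of $\mu(P)=\prod(-x_{v_i,v_{i+1}})$, confirming the parity bookkeeping (the extra $-1$ for the closing edge of a cycle versus the per-edge $-1$ in a path is exactly what makes these agree). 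Granting that, summing over all cycle covers of $G^{v,u}$ and grouping by the associated path $P$ yields
\[
\det(\asymb^{v,u}(G))=\sum_{\mathcal C\in K(G^{v,u})}\mu(\mathcal C)=\sum_{P\in\mathcal P_{u,v}}\mu(P)\sum_{\mathcal C'\in K(G[V\setminus V(P)])}\mu(\mathcal C')=\sum_{P\in\mathcal P_{u,v}}\mu(P)\cdot\det(\asymb(G[V\setminus V(P)])),
\]
where the last equality is again \Cref{f:cycle-cover}.

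I expect the main obstacle to be the careful verification of the bijection on the ``conversely'' side together with the sign bookkeeping: one must make sure that closing a simple $u\to v$ path with the artificial edge $vu$ always produces a \emph{simple} cycle (it does, since $P$ is simple and $u\neq v$), that no other edges of a valid cycle cover can be incident to $v$ as an out-endpoint or to $u$ as an in-endpoint (forced by the zeroed row/column), and that the sign $(-1)^{\ell+1}$ coming from a cycle of length $\ell$ in the cycle-cover monomial exactly cancels against the one ``missing'' factor of $-1$ relative to a path of $\ell-1$ edges. Everything else — the factorization of $\mu(\mathcal C)$ over the connected pieces and the reduction of the inner sum to $\det(\asymb(G[V\setminus V(P)]))$ — is routine given \Cref{f:cycle-cover} and the arguments already used in the proof of \Cref{l:det-scc}.
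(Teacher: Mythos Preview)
Your proposal is correct and follows essentially the same route as the paper: expand the determinant via the cycle-cover/Leibniz formula, observe that every contributing cycle cover must use the forced entry $(v,u)$, split off the unique cycle through that entry into a simple $u\to v$ path plus a cycle cover of the complement, and verify that the weight of that cycle equals $\mu(P)$. Your formulation via the auxiliary graph $G^{v,u}$ is in fact slightly cleaner than the paper's, which introduces a modified weight $\mu'$ on cycle covers of $G$ to the same effect; the bijection and the sign bookkeeping you outline are exactly the content of the paper's argument.
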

\begin{proof}
  For any cycle $C$, let 
  $$\mu'(C)=\begin{cases}\mu(C) & \text{if } \{u,v\}\cap V(C)=\emptyset \\0 & \text{if } \{u,v\}\cap V(C)\neq \emptyset \text{ and } vu\notin E(C),\\
  \mu(C)/x_{v,u} & \text{if } vu\in E(C).\end{cases}$$
  Moreover, for a cycle cover $\mathcal{C}$, let $\mu'(\mathcal{C}):=\prod_{C\in\mathcal{C}} \mu'(C)$.
  By the Leibniz formula and the fact that only one entry of row $v$ (column $u$) is non-zero,
  we have
  $$\det(\asymb^{v,u}(G))=\sum_{\mathcal{C}\in K(G)} \mu'(\mathcal{C}).$$

  Let $\mathcal{C}$ be some cycle cover of $G$. If $vu\notin E(\bigcup\mathcal{C})$,
  then $\mathcal{C}$ contributes $0$ to the above sum by the definition of $\mu'(\mathcal{C})$.
  On the other hand, if $vu\in E(\bigcup\mathcal{C})$, then there exists some $C\in\mathcal{C}$
  that consists of a simple path $P_C=u\to v$ and an edge $vu$.
  Note that, by the definition of $\mu(P_C)$, and $\mu'(C)=\mu(C)/x_{v,u}$, we have
  $\mu'(C)=\mu(P_C)$. Moreover, $\mathcal{C}\setminus\{C\}$ is a cycle cover
  of $G[V\setminus V(P_C)]$.
  So we have $\mu'(\mathcal{C})=\mu(P_C)\cdot \mu'(\mathcal{C}\setminus\{C\})=\mu(P_C)\cdot \mu(\mathcal{C}\setminus \{C\})$.

  There is a 1-1 correspondence between cycle covers $\mathcal{C}$ with $vu\in E(\bigcup\mathcal{C})$
  and pairs $(P,\mathcal{C}')$, where $P\in\mathcal{P}_{u,v}$ and $\mathcal{C}'\in K(G[V\setminus V(P)])$.
  As a result, we have
  \begin{align*}
    \det(\asymb^{v,u}(G))&=\sum_{\substack{\mathcal{C}\in K(G)\\ vu\in E(\bigcup\mathcal{C})}}\mu'(\mathcal{C})\\
                    &=\sum_{P\in \mathcal{P}_{u,v}}\sum_{\mathcal{C}'\in K(G[V\setminus V(P)])}\mu(P)\cdot \mu(\mathcal{C}')\\
                    &=\sum_{P\in \mathcal{P}_{u,v}}\mu(P)\sum_{\mathcal{C}'\in K(G[V\setminus V(P)])}\mu(\mathcal{C}')\\
                    &=\sum_{P\in \mathcal{P}_{u,v}}\mu(P)\cdot \det(\asymb(G[V\setminus V(P)])).\hspace{2mm}\qedhere
  \end{align*}
\end{proof}

Since $\asymb(G)$ is invertible, we have
  $$\asymb(G)^{-1}_{u,v}=\frac{(-1)^{u+v}}{\det(\asymb(G))}\cdot \det(\asymb^{v,u}(G)),$$
In particular, as noted by Sankowski~\cite{Sankowski04}, since $\det(\asymb(G[Z]))\not\equiv0$ for any $Z\subseteq V$, it follows by \Cref{l:sumpath}
that $\asymb(G)^{-1}_{v,u}$ is a zero polynomial if and only if $\sum_{P\in \mathcal{P}_{u,v}}\mu(P)$ is a zero polynomial, or in other words, if there is no
$u\to v$ path in $G$.

We are now ready to prove \Cref{t:tr-matrix}.
Recall that $F=\{u_1v_1,\ldots,u_kv_k\}$ is
a group of parallel inter-SCC edges, that is,
$u_i\in R$, $v_i\in T$, where $R$ and $T$
are distinct SCCs of $G$. Let $r\in R$
and $t\in T$ be arbitrarily chosen.

Let us first assume that $F$ is non-redundant.
Then,
for all
simple $r\to t$ paths $P$, $P$ is of the form $P_R\cdot f\cdot P_T$, where
$f=u_iv_i\in F$, $P_R$ is a simple (possibly empty) $r\to u_i$ path
entirely contained in $R$, and $P_T$ is a simple $v_i\to t$ path
entirely contained in $T$.
Note that neither $P$ can use two edges from $F$ at once, nor it can enter $R$ after leaving it for the first time, nor it can leave $T$ once that SCC is entered.
So, by \Cref{l:sumpath}, we have
\begin{align*}
  \det(\asymb^{t,r}(G))
  &=\sum_{P\in \mathcal{P}_{r,t}} \mu(P)\cdot \det(\asymb(G[V\setminus V(P)]))\\
  &=\sum_{i=1}^k \sum_{P_R\in \mathcal{P}_{r,u_i}}\sum_{P_T\in \mathcal{P}_{v_i,t}}\mu(P_R\cdot u_iv_i\cdot P_T)\cdot \det(\asymb(G[V\setminus V(P_R)\setminus V(P_T)]))\\
  &=-\sum_{i=1}^k x_{u_i,v_i}\sum_{P_R\in \mathcal{P}_{r,u_i}}\sum_{P_T\in \mathcal{P}_{v_i,t}}\mu(P_R)\cdot \mu(P_T)\cdot \det(\asymb(G[V\setminus V(P_R)\setminus V(P_T)])).
\end{align*}

Let $S_1,\ldots,S_s$ be the SCCs of $G$. 
Since every path $P_R$ ($P_S$) that the sum iterates through is contained in the SCC $R$ ($T$, resp.), and $R\neq T$,
using \Cref{l:det-scc} and \Cref{l:det-scc-split} applied
to the graphs $G$ and $G[V\setminus V(P_R)\setminus V(P_T)]$, we obtain:
\begin{align*}
  \det(\asymb^{t,r}(G))
  &=-\sum_{i=1}^k x_{u_i,v_i}\sum_{P_R\in \mathcal{P}_{r,u_i}}\sum_{P_T\in \mathcal{P}_{v_i,t}}\mu(P_R)\cdot \mu(P_T)\cdot \left(\prod_{S_i\notin \{R,T\}} \det(\asymb(G[S_i]))\right) \cdot\\
  & \hspace{2cm}\det(\asymb(G[R\setminus V(P_R)])) \cdot \det(\asymb(G[T\setminus V(P_T)]))\\
  &=-\det(\asymb(G))\sum_{i=1}^k x_{u_i,v_i}\left(\frac{1}{\det(\asymb(G[R]))}\sum_{P_R\in \mathcal{P}_{r,u_i}}\mu(P_R)\cdot \det(\asymb(G[R\setminus V(P_R)])) \right)\cdot \\
  &\hspace{2cm}\left(\frac{1}{\det(\asymb(G[T]))}\sum_{P_T\in \mathcal{P}_{v_i,t}}\mu(P_T)\cdot \det(\asymb(G[T\setminus V(P_T)])) \right).
\end{align*}
Therefore, by the relationship between the inverse and the adjoint:
\begin{align}\label{eq:u1v1}
  \begin{split}
  \asymb(G)^{-1}_{r,t}
  &=(-1)^{r+t}\sum_{i=1}^k -x_{u_i,v_i}\left(\frac{1}{\det(\asymb(G[R]))}\sum_{P_R\in \mathcal{P}_{r,u_i}}\mu(P_R)\cdot \det(\asymb(G[R\setminus V(P_R)])) \right)\cdot \\
  &\hspace{2cm}\left(\frac{1}{\det(\asymb(G[T]))}\sum_{P_T\in \mathcal{P}_{v_i,t}}\mu(P_T)\cdot \det(\asymb(G[T\setminus V(P_T)])) \right).
  \end{split}
\end{align}
On the other hand, again by \Cref{l:sumpath,l:det-scc}, we have
\begin{align*}
  \asymb(G)^{-1}_{r,u_i}&=\frac{(-1)^{r+u_i}}{\det(\asymb(G))}\cdot \det(\asymb^{u_i,r}(G))\\
  &= \frac{(-1)^{r+u_i}}{\det(\asymb(G))}\cdot\sum_{P\in \mathcal{P}_{r,u_i}} \mu(P)\cdot \det(\asymb(G[V\setminus V(P)]))\\
  &=\frac{(-1)^{r+u_i}}{\det(\asymb(G))}\cdot\sum_{P\in \mathcal{P}_{r,u_i}} \mu(P)\cdot \det(\asymb(G[R\setminus V(P)]))\cdot \prod_{S_i\neq R} \det(\asymb(G[S_i]))\\
  &=\frac{(-1)^{r+u_i}}{\det(\asymb(G))}\cdot \frac{\det(\asymb(G))}{\det(\asymb(G[R]))}\sum_{P\in \mathcal{P}_{r,u_i}} \mu(P)\cdot \det(\asymb(G[R\setminus V(P)])).
\end{align*}
and thus
\begin{equation}\label{eq:u1ui}
  (-1)^{r+u_i}\cdot \asymb(G)^{-1}_{r,u_i}=\frac{1}{\det(\asymb(G[R]))}\sum_{P\in \mathcal{P}_{r,u_i}} \mu(P)\cdot \det(\asymb(G[R\setminus V(P)])).
\end{equation}
Similarly, we can obtain
\begin{equation}\label{eq:viv1}
  (-1)^{t+v_i}\cdot \asymb(G)^{-1}_{v_i,t}=\frac{1}{\det(\asymb(G[T]))}\sum_{P\in \mathcal{P}_{v_i,t}} \mu(P)\cdot \det(\asymb(G[T\setminus V(P)])).
\end{equation}
By plugging in \Cref{eq:u1ui,eq:viv1} into \Cref{eq:u1v1}, we obtain:

\begin{align*}
  \asymb(G)^{-1}_{r,t}&=(-1)^{r+t}\sum_{i=1}^k -x_{u_i,v_i}\cdot (-1)^{r+u_i}\cdot \asymb(G)^{-1}_{r,u_i} \cdot (-1)^{t+v_i}\cdot \asymb(G)^{-1}_{v_i,t}\\
  &=-\sum_{i=1}^k (-1)^{u_i+v_i}\cdot x_{u_i,v_i}\cdot \asymb(G)^{-1}_{r,u_i}\cdot \asymb(G)^{-1}_{v_i,t},
\end{align*}
which proves the ``$\impliedby$'' implication of \Cref{t:tr-matrix}.

Now suppose the edges $F$ are redundant. It is enough to prove
\begin{align*}
  \det(\asymb(G))\cdot \asymb(G)^{-1}_{r,t}\not\equiv-\det(\asymb(G))\sum_{i=1}^k (-1)^{u_i+v_i}\cdot x_{u_i,v_i}\cdot \asymb(G)^{-1}_{r,u_i}\cdot \asymb(G)^{-1}_{v_i,t},
\end{align*}
or, by multiplying both sides by $\det(\asymb(G))$, equivalently
\begin{align}\label{eq:cond}
  \det(\asymb(G))\cdot \det(\asymb^{t,r}(G))\not\equiv-\sum_{i=1}^k x_{u_i,v_i}\cdot \det(\asymb^{u_i,r}(G))\cdot \det(\asymb^{t,v_i}(G)),
\end{align}
We prove that the polynomial on the left-hand side of \Cref{eq:cond}
contains a monomial
that the right-hand side polynomial lacks.
Namely, let $P$ be some simple $r\to t$ path that goes through an SCC $Y$ of $G$
such that $Y\neq R$ and $Y\neq T$. 
Note that $P$ does not go through any of edges in $F$, since all vertices
of $P\cap Y$ have to appear on $P$ after all vertices of $P\cap R$, and before all vertices
of $P\cap T$.
By \Cref{l:sumpath},
the left-hand side contains a monomial
$$\left(\prod_{i=1}^n x_{i,i}\right)\cdot \mu(P) \cdot \left(\prod_{i\in V\setminus V(P)} x_{i,i}\right).$$
However, each monomial in the right-hand side polynomial has a variable
of the form $x_{u_i,v_i}$, where $u_iv_i\in F$, whereas the above
monomial clearly does not contain such variables by $E(P)\cap F=\emptyset$.

}

\appendix

\section{Decremental Single Source Reachability on DAGs} \label{app:dag}

In this section, 
we explain a decremental data structure that maintains single source reachability information 
on \DAG{G = (V, E)}, as summarized in the following lemma.
The data structure extends that of \citeauthor{Italiano:1988aa}~\cite{Italiano:1988aa}
and is equipped with the operations required in \Cref{subsec:combinatorial_dag}.

\italiano*

\paragraph*{The Data Structure.}

For every vertex \( y \neq r \), we define a doubly linked list \activee{y}{r} consisting of incoming edges of \( y \) in \( G \).
The data structure maintains \parent{y}{r}, which points to the first edge in \activee{y}{r} that connects \( y \) to \desc{r}, and \cc{y}{r}, which points to the second edge in \activee{y}{r} that connects \( y \) to \desc{r}.
If no such edge exists in \activee{y}{r},
we set the respective pointer to be \nul.

We now introduce the two invariants of the data structure. 
\begin{invariant} \label{invar:tree}
For every vertex \( y \neq r \), if \( \parent{y}{r} = \nul \), then \( y \) is not reachable from \(r\).
\end{invariant}

\begin{invariant} \label{invar:other}
For every vertex \( y \neq r \),  if \( \cc{y}{r} = \nul \), then \( r \) can reach \(y\) through at most one edge.
\end{invariant}

\paragraph*{Initialization.}
To compute \desc{r}, 
we simply compute a reachability tree rooted at \( r \).
For every vertex \( y \neq r \),
we set \activee{y}{r} to be the list of all incoming edges of \( y \) in \( G \).

After moving a vertex in \(\activee{y}{r} \cap \desc{r}\)
to the front of \activee{y}{r} (if such a vertex exists),
we set \parent{y}{r} to point to the front of \activee{y}{r}.
To initialize \cc{y}{r}, 
we first point \cc{y}{r} to the front element of \activee{y}{r}, 
and then call \updateC{y}{r} in \Cref{ds:app} 
to find the first edge in \activee{y}{r} satisfying the definition of \cc{y}{r}. 

Set \(\D{r}\) would maintain the set of vertices removed from \desc{r} due to the last update.
We initialize \(\D{r} = \emptyset\).

\begin{algorithm}[]
\DontPrintSemicolon
\caption{Decremental-Single-Source-Reachability-on-DAGs}
\label{ds:app}

\KwIn{a \DAG{G = (V, E)} and a root vetex \(r \in V\)}
\Maintain{set \(\desc{r} \subseteq V\) of vertices reachable from \(r\) and set \D{r} of vertices removed from \desc{r} due to the last deletion}

\Procedure{Initialize}{

\( \desc{r} \gets \text{vertices reachable from \( r \)} \)

\ForEach{vertex \( y \neq r \)}{

\(\activee{y}{r} \gets\) incoming edges of \(y\) in \(E\)

move a vertex (if any) in \( \activee{y}{r} \cap \desc{r} \) to the front of \activee{y}{r}

set \parent{y}{r} and \cc{y}{r} to point to the front of \activee{y}{r}

\updateC{y}{r}.

}

}

\Procedure{Delete(\Ed)}{

\( E \gets E \setminus \Ed \)

\( Q \gets \Ed \)

\( \D{r} \gets \emptyset \) 

\While{\( Q \neq \emptyset \)}{

\( \edge{x}{y} \coloneqq \textsc{DeQueue(\( Q \))} \)

\uIf{\( \edge{x}{y} = \parent{y}{r} \)}{

\updateP{y}{r}

}
\uElseIf{\( \edge{x}{y} = \cc{x}{r} \)}
{
\updateC{y}{r}
}
\Else{
remove \edge{x}{y} from \activee{y}{r}
}
}
}

\Procedure{UpdateP(\(y\))}{

\eIf{\( \cc{y}{r} = \nul \)}{
\( \parent{y}{r} \gets \nul \)

add \(y\) to \D{r}

\ForEach{outgoing edge \( \edge{y}{z} \in E \)}{
\textsc{EnQueue(\( Q, \edge{y}{z} \))}
}

}{
remove items from the front of \activee{y}{r} until \( \parent{y}{r} = \cc{y}{r} \)

\updateC{y}{r}
}

}

\Procedure{UpdateC(\(y\))}{

move pointer \cc{y}{r} to the next element in \activee{y}{r}

\If{\( \cc{y}{r} = \edge{z}{y} \) such that \( z \neq r\) and \( \parent{z}{r} = \nul \) }{

\updateC{y}{r}

}
}

\Procedure{In(\(y\))}{

\eIf{\parent{y}{r} or \cc{y}{r} points to an edge different than \edge{r}{y}}{
\Return \texttt{True}
}{
\Return \texttt{False}
}
}

\end{algorithm}

\paragraph*{Handling Edge Deletions.}

Assume that a deletion of edges \Ed has happened
as the last update.
To maintain \desc{r}, we use the queue \( Q \),
containing the edges which their tail needs to be reconnected to \desc{r} after the removal of the edge.
We begin by setting \( Q = \Ed \).

While \( Q \neq \emptyset \), for edge \( \edge{x}{y} \in Q \), the algorithm checks whether the removal of \edge{x}{y} affects \Cref{invar:tree,invar:other}.
The maintenance procedure is as follows.
\begin{enumerate}
\item 
If \( \edge{x}{y} = \parent{y}{r} \), then \( y \) loses its connection to \desc{r}.
In this case, we update \parent{y}{r} as follows.
\begin{itemize}
\item  
If \( \cc{y}{r} \neq \nul \), we remove the elements in the front of \activee{y}{r} until we get \( \parent{y}{r} = \cc{y}{r} \).
By the definition of pointers, this ensures that \Cref{invar:tree} is correctly maintained.
To correctly maintain \cc{y}{r}, we update \cc{y}{r} by calling \updateC{y}{r}, which finds the first edge in \activee{y}{r}, after \parent{y}{r}, that can connect \( y \) to \desc{r}, as desired.

\item 
If \( \cc{y}{r} = \nul \), then \edge{x}{y} was the only edge connecting \( y \) to \desc{r}.
Thus, we set \( \parent{y}{r} \gets \nul \), remove \( y \) from \desc{r} and add it to \D{r}.
Since the children of \( y \) may be connected to \desc{r} through \( y \), we add all outgoing edges of \( y \) to \( Q \).
\end{itemize}

\item
If \( \edge{x}{y} = \cc{y}{r} \), we update \cc{y}{r} by calling \updateC{y}{r} to correctly maintain \cc{y}{r},
and then remove \edge{x}{y} from \activee{y}{r} to correctly maintain \activee{y}{r} as a subset of incoming edges of \( y \).

\item
If \( \edge{x}{y} \neq \parent{y}{r} \) and \( \edge{x}{y} \neq \cc{y}{r} \), we only need to remove \edge{x}{y} from \activee{y}{r}.
\end{enumerate}

We conclude this section by the following lemma.

\begin{lemma} \label{lem:app}
Given an \(m\)-edge \DAG{G = (V, E)}, there is a decremental data structure that maintains the set \desc{r} of vertices reachable from the root vertex $r$ in \( O(m) \) total update time, and supports the following additional operation:
	\begin{itemize}
    	\item 
    	\textsc{In\( (y) \):} Return \texttt{True} if \( y \neq r \) and \( y \) has an in-neighbor from \( \desc{r} \setminus r \), and \texttt{False} otherwise. 	\end{itemize}
	Additionally, it maintains the set \D{r} of vertices that have been removed from \desc{r} due to the most recent deletion.
\end{lemma}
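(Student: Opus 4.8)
The plan is to prove \Cref{lem:app} by analysing the data structure of \Cref{ds:app} in the style of Italiano's classical argument, splitting the proof into correctness and running time. The structural property I would maintain, strengthening \Cref{invar:tree,invar:other}, is: after \textsc{Initialize} and after every \textsc{Delete} call, $\desc{r}=\{r\}\cup\{y\neq r:\parent{y}{r}\neq\nul\}$, and for each $y\neq r$ the pointers $\parent{y}{r}$ and $\cc{y}{r}$ point to the first and the second edge in the list \activee{y}{r} whose tail lies in $\desc{r}$ (a pointer being \nul when no such edge exists). This invariant entails \Cref{invar:tree,invar:other}, and it is a consistent fixpoint in a DAG since there a vertex $y\neq r$ is reachable from $r$ iff some in-neighbour of $y$ is. For the base case, \textsc{Initialize} computes $\desc{r}$ by a graph search, and after pulling a descendant-incident edge to the front of each \activee{y}{r} and calling \updateC{y}{r}, both pointers satisfy the invariant directly from the definition of \updateC.

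The core of the proof is showing that \textsc{Delete}(\Ed) re-establishes the invariant, and this is the step I expect to be the main obstacle. I would argue by a propagation/fixpoint argument: the queue $Q$ is seeded with \Ed and, throughout the loop, maintains the property that it contains every incoming edge $\edge{x}{y}$ of a vertex $y$ whose pointers may currently violate the invariant --- i.e.\ every deleted edge and every edge whose tail has just been removed from $\desc{r}$. I would then check the three dequeue cases: if $\edge{x}{y}=\parent{y}{r}$, then \updateP{y}{r} either promotes the old $\cc{y}{r}$ to $\parent{y}{r}$ and recomputes $\cc{y}{r}$ with \updateC{y}{r}, or, if no second descendant-edge existed, sets $\parent{y}{r}\gets\nul$, records $y$ in \D{r}, and enqueues all out-edges of $y$ so that its removal from $\desc{r}$ propagates; if $\edge{x}{y}=\cc{y}{r}$, then \updateC{y}{r} slides $\cc{y}{r}$ forward past the deleted edge and past any edge from a vertex already outside $\desc{r}$ (detected by its \parent-pointer being \nul), landing on the next descendant-edge; otherwise $\edge{x}{y}$ is spliced out of \activee{y}{r}. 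The delicate point is that $\desc{r}$ strictly shrinks \emph{while} the loop runs, so \updateC{y}{r} may momentarily accept as $\cc{y}{r}$ an edge $\edge{z}{y}$ with $z$ still in $\desc{r}$ but scheduled for removal; I would close this gap by observing that once $z$ is removed, $\edge{z}{y}$ gets enqueued and, being equal to $\cc{y}{r}$ at that time, triggers another \updateC{y}{r}, so the invariant is restored by the time $Q$ drains. Termination holds because each edge enters $Q$ at most twice (once from \Ed, once when its tail leaves $\desc{r}$, which happens at most once in the whole execution), so the loop halts, and at that point the invariant --- hence the correctness of $\desc{r}$ and $\D{r}$ --- is exactly what the case analysis has established.

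Correctness of \textsc{In} is then immediate from the invariant: by definition \textsc{In}$(r)=\texttt{False}$, and for $y\neq r$, assuming the graph is simple (parallel edges are irrelevant to reachability and may be collapsed upfront), $r$ contributes at most one incoming edge $\edge{r}{y}$, so $y$ has an in-neighbour in $\desc{r}\setminus r$ iff at least one of the first two descendant-edges of $y$ has tail $\neq r$, iff $\parent{y}{r}$ or $\cc{y}{r}$ points to an edge other than $\edge{r}{y}$, which is exactly the test performed; the degenerate cases $\parent{y}{r}=\nul$ and $\parent{y}{r}=\edge{r}{y}$ with $\cc{y}{r}=\nul$ return \texttt{False}, in agreement with \Cref{invar:tree,invar:other}. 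Finally, for the running time I would charge all work to $O(m)$: \textsc{Initialize} is a graph search plus one left-to-right sweep over each \activee{y}{r}, hence $O(m)$; over the whole deletion sequence each edge is physically removed from a list at most once, each vertex leaves $\desc{r}$ at most once (paying $O(\deg^+)$ for the enqueues), each edge enters $Q$ at most twice with $O(1)$ work per extraction beyond what is already accounted for, and --- crucially --- the pointers $\parent{y}{r}$ and $\cc{y}{r}$ only ever advance within \activee{y}{r}, so the total number of pointer moves across all \updateP and \updateC calls is $O(\sum_y\deg^-(y))=O(m)$. Summing yields $O(m)$ total update time, and since \D{r} is reset at the start of each \textsc{Delete} and only filled with the vertices removed during that call, it is maintained within the same bound. (The \anc{r}-side and the \textsc{Out} operation needed by \Cref{lem:extended_italiano} are obtained symmetrically by running the same structure on the reverse graph.)
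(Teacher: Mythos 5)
Your proof is correct and follows essentially the same approach as the paper: it extends Italiano's decremental single-source reachability structure with a second pointer $c(y)$ per vertex and charges all work to the monotone advance of $p(y)$ and $c(y)$ within $\text{active}[y]$, yielding $O\bigl(\sum_y \deg^-(y)\bigr) = O(m)$ total update time. The paper's own proof is considerably terser (deferring correctness to \cite{Italiano:1988aa} and the preceding algorithm description), whereas you spell out the strengthened two-pointer invariant and the fixpoint/propagation argument for \textsc{Delete} explicitly, which is a more self-contained but equivalent treatment.
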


\begin{proof}

\underline{Correctness:}
follows from \cite{Italiano:1988aa} and observing that the data structure correctly maintains the pointers \parent{\cdot}{r} and \cc{\cdot}{r} as discussed above.

\underline{Update time:}
the update time is dominated by the time needed to maintain \parent{\cdot}{r} and \cc{\cdot}{r}.
By \cite{Italiano:1988aa}, the total time required to maintain \parent{\cdot}{r} over any sequence of edge deletions is \( O(m) \).
Note that, for every \( y \neq r \), \cc{y}{r} and \parent{y}{r} probe \activee{y}{r} at most once in total.
Since \activee{y}{r} consists of the incoming edges to \( y \), we conclude that the total update time to maintain \parent{\cdot}{r} and \cc{\cdot}{r} is bounded by \( O\left( \sum _{u \in V} \deg{v} \right) = O(m) \).
\end{proof}

\paragraph*{Extending the Data Structure.}
We extend the data structure of \Cref{lem:app} to support all the operations of \Cref{lem:extended_italiano}.
We define \anc{r} to be the set of vertices that can reach \( r \), and \A{r} as the set of vertices that are removed from \anc{r} due to the last deletion of edges \( \Ed \).

\begin{proof}[Proof of \Cref{lem:extended_italiano}]
We use the data structure of \Cref{lem:app} to maintain \desc{r}, return \D{r}, and answer \textsc{In\( (y) \)} for every vertex \( y \neq r \).
To maintain \anc{r}, return \A{r}, and answer \textsc{Out\( (y) \)}, we use the data structure of \Cref{lem:app} on the reverse graph \( G' = (V, E') \) of \( G \) defined by \( \edge{x}{y} \in E \) iff \( \edge{y}{x} \in E' \).
It is easy to see that \anc{r} is equal to the set of vertices reachable from \( r \) in \( G' \).
Therefore, the correctness and the time complexity for each operation immediately follows.
\end{proof}

\section{Decremental Single Source Reachability on General Graphs} \label{app:general}

In this section, we explain a data structure that
maintains a decremental single source reachability tree \T{r} rooted at \( r \) on a graph \( G = (V, E) \).
The guarantees are stated in the following lemma.

\roditty*

Our algorithm builds upon that of \citeauthor{Roditty:2016aa}, which we briefly explain in \Cref{subsec:ds_RZ} 
before proceeding with our extension in \Cref{subsec:ds_app_general}.

\subsection{The data structure of \cite{Roditty:2016aa}} \label{subsec:ds_RZ}

The data structure uses \textit{uninspected} inter-SCC edges to maintain \T{r}.
An uninspected edge is either a tree edge, which will remain uninspected, or it is not useful; once it has been inspected by the data structure, it will no longer be uninspected.
Initially, all edges are uninspected.

A vertex \( z \) is called \textit{active} if \( \activee{Z}{r} \neq \nul \).
For a vertex \( z \in V \), \inn{z}{r} is the set of uninspected inter-SCC edges entering \( z \), 
and \outt{z}{} is the set of \textit{all} edges outgoing \( z \).
For every SCC \( Z \), 
the data structure maintains a doubly linked list \activee{Z}{r} that contains all active vertices of \( Z \).

The data structure maintains the following invariant.
\begin{invariant} \label{invar:tree_general}
If \( y \in V \setminus R \) is the first vertex in \activee{Y}{r} and \edge{x}{y} is the first edge in \inn{y}{r}, 
then \edge{x}{y} is the tree edge connecting the SCC \( Y \) to \T{r}.
In particular, if \( \activee{Y}{r} = \nul \), then none of the vertices of \( Y \) are connected to \T{r}.
\end{invariant}

The data structure maintains a sequence of graphs \( G_0, G_1, \dots, G_t \), where \( t \) is the number of insert operations.
Here, \( G_i \) is the snapshot of \( G \) after the \( i \)th insertion.
However, edges that are subsequently deleted from \( G \) are also deleted from $G_i$, ensuring that at each step,   \( \emptyset = E_0 \subseteq E_1 \subseteq \dots \subseteq E_t = E \), where \( E_i \) is the set of edges in \( G_i \).
The data structure maintains an array indexed by \( V \) to maintain the SCCs: for every vertex \( z \in V \), \scc{z}{i} is the name of the SCC containing \( z \) in \( G_i \).
The guarantees of the data structure are summarized in the following theorem.

\begin{theorem}[Sections 3 and 4 of \cite{Roditty:2016aa}, rephrased] \label{th:roditty_SCC}
Given a directed graph \( G=(V, E) \) and the sequence  of subgraphs \( G_0, G_1, \dots, G_t \) defined above, there is a fully dynamic data structure that maintains the SCCs for each \( G_i \), and supports each insert operation on \( G \) in $O(m + n \log n)$ worst-case time and each delete operation on \( G \) in $O(m + n \log n)$ amortized update time,
where \( m \) is the number of edges in the current graph \( G \).

Moreover, the data structure supports the following additional operations:
\begin{itemize}
\item 
\textsc{Detect\( () \):} List all the components that decomposed as a consequence of the most recent delete operation in \( G \), together with the index \( i \) specifying the subgraph \( G_i \) that the decomposition happened.
This operation runs in \( O(n) \) time.

\item 
\textsc{List\( (Z, i) \):} Given an SCC \( Z \) in \( G_i \), list all the SCCs \( Z \) decomposed as a consequence of the most recent delete operation.
This operation runs in time proportional to the number of the SCCs that \( Z \) decomposed into.

\end{itemize}
\end{theorem}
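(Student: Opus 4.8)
The plan is to derive this as a rephrasing of the decremental strongly-connected-components machinery of \citeauthor{Roditty:2016aa}~\cite{Roditty:2016aa} (Sections~3 and~4), augmented with a small amount of bookkeeping to expose the \textsc{Detect} and \textsc{List} primitives. First I would recall their construction at the level of detail needed here. The structure stores the nested family $G_0\subseteq G_1\subseteq\dots\subseteq G_t$ described above; an insert centered at a vertex discards the snapshot it supersedes and rebuilds a fresh one on the current graph, so that only $O(n)$ snapshots are ever alive at once and each of them is purely decremental. For every alive $G_i$ one maintains its SCC partition (the array $\scc{\cdot}{i}$) together with a single-source reachability tree \T{r} via the ``uninspected inter-SCC edge'' technique captured by \Cref{invar:tree_general}: the tree edge attaching an SCC $Y\neq R$ to \T{r} is read off as the first edge of $\inn{y}{r}$ for the first active vertex $y$ of $\activee{Y}{r}$, and the tree is repaired ES-tree-style whenever a vertex loses its parent, by scanning out of it through $\outt{\cdot}{}$ and through the uninspected incoming edges of the relevant SCCs. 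The insert bound is then exactly the cost of one fresh decremental initialization on the current graph, and the amortized delete bound is their charging argument: each inter-SCC edge is inspected $O(1)$ times per snapshot before being discarded, which pays the $O(m)$ term, while the $O(n\log n)$ term pays for the per-deletion priority-queue / heavy-path bookkeeping that locates split components. With these recollections in place, the first sentence of the statement is a verbatim translation into our notation.

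The second ingredient I would spell out is the observation that lets all alive snapshots be handled together within a single $O(m+n\log n)$ amortized bound. Since $E_0\subseteq E_1\subseteq\dots\subseteq E_t$, the SCC partition of $G_i$ is a coarsening of that of $G_{i-1}$, so the snapshots induce a refinement chain of partitions of $V$; a deletion either refines this chain or leaves it unchanged. The cited algorithm processes the refinements from the most refined snapshot $G_0$ upward, so that the ``search from the smaller side'' that identifies \emph{which} component of $G_i$ splits and into \emph{which} pieces is reused across the consecutive coarser snapshots in which the same (or a coarser) split occurs; summed over the at most $n$ alive snapshots this is the $O(m+n\log n)$ computation referred to in the overview as ``how the SCCs decompose in all the snapshots at once''. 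I would import this as the amortized delete bound.

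It then remains to add the interface. During the handling of a deletion, every time the algorithm above registers that a component $Z$ of some $G_i$ has broken apart, it already produces the list of resulting sub-components — these are precisely the new SCC names it assigns while rebuilding \T{r} and updating $\scc{\cdot}{i}$. I would therefore keep, \emph{for the current deletion only}, a list of pairs $(Z,i)$ that broke apart and, indexed by such a pair, the list of its children; both are populated for free while the deletion is processed and are cleared when the next deletion starts. \textsc{Detect} returns this list, in time proportional to its length, which is $O(n)$ by the analysis of~\cite{Roditty:2016aa} (for a single-edge deletion at most one SCC of each of the $O(n)$ alive snapshots splits; for a batched deletion the accumulated list size is subsumed by the $O(m+n\log n)$ delete cost). \textsc{List}$(Z,i)$ returns the stored child list, in time proportional to the number of SCCs $Z$ decomposed into.

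The step I expect to be the main obstacle is the verification underlying the second paragraph, namely checking, against the actual subroutines of Sections~3 and~4 of~\cite{Roditty:2016aa}, that the deletion handler indeed outputs the \emph{explicit} sub-component decomposition of each split (rather than merely flagging that a split happened) and that harvesting these decompositions across all alive snapshots genuinely stays within $O(m+n\log n)$ amortized and within $O(n)$ for \textsc{Detect}. Everything else is a routine change of notation; the definitions of $\inn{\cdot}{r}$, $\outt{\cdot}{}$, $\activee{\cdot}{r}$ and \Cref{invar:tree_general} have been set up precisely so that this translation goes through, and the minimal-SCC-subgraph computation needed elsewhere for intra-SCC edges (\Cref{lem:minimal}) also fits in the same $O(m+n\log n)$ budget.
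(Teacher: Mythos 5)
The paper does not prove \Cref{th:roditty_SCC} at all; it is stated as a direct rephrasing of Sections~3 and~4 of \cite{Roditty:2016aa}, and the surrounding text in Appendix~\ref{app:general} only recalls the construction (uninspected inter-SCC edges, the lists $\activee{\cdot}{r}$ and $\inn{\cdot}{r}$, \Cref{invar:tree_general}) before \emph{using} \textsc{Detect} and \textsc{List} as black-box primitives supplied by that citation. Your sketch is consistent with this treatment: you recall the same decremental machinery on the nested family $G_0\subseteq\dots\subseteq G_t$, observe the refinement chain of SCC partitions, and propose to harvest the split decompositions that the deletion handler already materializes — and you correctly and honestly flag that the substantive step is checking against \cite{Roditty:2016aa} that those decompositions are produced explicitly and that the cross-snapshot accounting stays inside $O(m+n\log n)$. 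In effect you are offering more justification than the paper does, which simply cites. One small caveat worth tightening: your argument that \textsc{Detect} runs in $O(n)$ worst case relies on ``at most one SCC per alive snapshot splits'' for single-edge deletions and then defers batched deletions to the amortized delete bound, but the statement asserts an $O(n)$ bound for the operation itself; since a single deletion can split one SCC of a snapshot into many pieces, and a batched deletion can split several SCCs per snapshot, the $O(n)$ claim on the number of \emph{decomposed-component/snapshot pairs} output by \textsc{Detect} is not self-evident from your counting and should be explicitly imported from \cite{Roditty:2016aa} rather than re-derived.
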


\paragraph*{Handling Edge Insertions.}
After an insertion, we can afford to recompute the SCCs in the current graph, i.e., in \( G_t \), as well as all the values required by the data structure, such as \inn{\cdot}{r} and \outt{\cdot}{}.
It is easy to see that this takes \( O(m + n \log n ) \) time.

\paragraph*{Handling Edge Deletions.}
We explain how to efficiently maintain the SCCs of \( G_i \) after the deletion of \Ed.
Using the operation \textsc{Detect\( () \)} from \Cref{th:roditty_SCC}, we obtain the SCCs of \( G_i \) that has decomposed after the deletion.
Assume that SCC \( Z \) is one of them, and \textsc{List\( (Z, i) \)} has returned \( Z_1, Z_2, \dots, Z_k \) as the SCCs \( Z \) decomposed into, where \( |Z_1| \geq |Z_2| \geq \dots \geq |Z_k| \).
We first show how to maintain \activee{z_j}{r}.
Instead of scanning \activee{Z}{r} and moving each vertex to a new list, we let \( Z_1 \) to inherit \activee{Z}{r}.
Thus, we only need to move the vertices that does not belong to \( Z_1 \), resulting in \( O(\sum _{j = 2} ^ k |Z_j| ) \) time for this operation.
Note that we can simultaneously update \inn{z}{r} for every vertex \( z \in Z \).
Since \( |Z_1| \geq |Z_j| \), during a sequence of edge deletions, each vertex can be moved at most \( \log n \) times, which results in the following lemma.

\begin{lemma}[Lemma 5.1 of \cite{Roditty:2016aa}] \label{lem:active_RZ}
    The total cost to maintain \activee{\cdot}{r} in each graph \( G_i \) during any sequence of edge deletions is \( O(m + n \log n) \), where \( m \) is the number of edges in the initial graph \( G_i \).
\end{lemma}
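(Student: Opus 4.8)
The plan is to prove the bound by the standard \emph{union-by-size} (small-to-large) charging argument, essentially the one sketched in the paragraph preceding the statement. Fix a single snapshot $G_i$ and recall that the update sequence acts on it purely decrementally, so the SCCs of $G_i$ only ever \emph{decompose} -- no SCC ever grows or merges. Hence the family of SCCs of $G_i$ that appear over time is laminar, and the evolution is a forest of split events: whenever an SCC $Z$ breaks into pieces $Z_1,\dots,Z_k$ with $|Z_1|\ge|Z_2|\ge\dots\ge|Z_k|$, we designate $Z_1$ as the \emph{heir} and $Z_2,\dots,Z_k$ as the \emph{evicted} pieces.

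First I would pin down the implementation so that a single split event costs only $O\bigl(\sum_{j\ge 2}|Z_j|\bigr)$. On a split we do \emph{not} rebuild \activee{Z}{r}: we relabel that doubly linked list so that it becomes \activee{Z_1}{r}, and we physically remove from it only the vertices of $Z_2\cup\dots\cup Z_k$, inserting each one into a freshly created list \activee{Z_j}{r} in $O(1)$ time (the \inn{z}{r} lists are attached to individual vertices, so they travel with their vertices at no extra asymptotic cost). Identifying the heir and enumerating the evicted vertices is done with the simultaneous-search trick together with the \textsc{List}$(Z,i)$ primitive of \Cref{th:roditty_SCC}, within the stated $O\bigl(\sum_{j\ge 2}|Z_j|\bigr)$ budget. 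A deletion can also render a vertex $z$ inactive (its \inn{z}{r} becomes \nul), which triggers an $O(1)$ removal of $z$ from its \activee{\cdot}{r} list; since \inn{z}{r} only shrinks, this happens at most once per edge of $G_i$, contributing $O(m)$ overall.

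The charging is the crux. Each time a vertex $v$ is moved, it lies in an evicted piece $Z_j$ with $j\ge 2$, and from $|Z_1|\ge|Z_j|$ and $|Z_1|+|Z_j|\le|Z|$ we get $|Z_j|\le|Z|/2$; that is, the size of $v$'s SCC at least halves with every move. Since that size starts at most $n$ and never drops below $1$, $v$ is moved at most $\lfloor\log_2 n\rfloor$ times over the entire deletion sequence on $G_i$. Summing over the $n$ vertices yields $O(n\log n)$ for all vertex relocations (and for the \inn{\cdot}{r} data carried along). Adding the one-time $O(m)$ cost of building \activee{\cdot}{r}, \inn{\cdot}{r}, and \outt{z}{} for $G_i$, plus the $O(m)$ for inactivations and for the at-most-one promotion of each edge of $G_i$ from intra-SCC to inter-SCC status, gives the claimed $O(m+n\log n)$ total, proving \Cref{lem:active_RZ}.

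The main obstacle is not a deep one but a bookkeeping one: one must verify that \emph{every} list operation triggered by a deletion is either $O(1)$-per-edge (inactivations, edge promotions) or $O(1)$-per-move (vertex relocations), so nothing escapes the two charges $O(m)$ and $O(n\log n)$. In particular, maintaining the per-vertex \inn{\cdot}{r} lists as SCCs split -- new inter-SCC edges appearing, and several small \inn{\cdot}{r} lists that must be merged -- has to be arranged so that it is subsumed by the vertex-move charge; attaching \inn{\cdot}{r} to vertices rather than to SCC objects, together with the monotonicity of the decomposition, is exactly what makes this go through.
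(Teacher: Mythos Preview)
Your proposal is correct and follows exactly the approach the paper sketches in the paragraph immediately preceding the lemma (and, ultimately, the argument in~\cite{Roditty:2016aa}): let the largest piece inherit the list, move only vertices in the smaller pieces, and observe that each vertex's SCC size at least halves per move so it is relocated $O(\log n)$ times. Your additional bookkeeping about inactivations and edge promotions is fine, though note that a vertex becomes inactive at most once per \emph{vertex} (not per edge), which is still comfortably within the stated bound.
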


To maintain \inn{\cdot}{r},
we add each SCC that may need to be inspected to \( Q \).
If SCC \( Z \) has decomposed into \( Z_1, Z_2, \dots, Z_k \), we then add all \( Z_j \) to \( Q \), except for the (possibly) one that contains the root \( r \).
Also, for every edge \( \edge{x}{y} \in \Ed \), if \( y \) is the first vertex in \activee{Y}{r} and \edge{x}{y} is the first edge in \inn{y}{r}, we then add the SCC \( Y \) to \( Q \).
We then pick an SCC \( W \in Q \).
Let \( w \) be the first vertex in \activee{W}{r}.
We scan \inn{w}{r} to find an inter-SCC incoming edge \edge{x}{w} that satisfies the following conditions: it should be removed from \( G_i \), and either \( \activee{\scc{x}{i}}{r} \neq \emptyset \) or \( \scc{x}{i} = \scc{r}{i} \).
We keep removing edges from \inn{w}{r} until we find such an edge.
If we reach \( \inn{w}{r} = \nul \), we remove \( w \) from \activee{W}{r}, choose the first vertex in \activee{W}{r} again, and try to find an edge that connects \( W \) to \T{r}. 
If we reach \( \activee{W}{r} = \nul \) without finding a connecting edge, then \( W \) is no longer connected to \T{r}.
In this case, for every vertex \( w \in W \) and every outgoing edge \( \edge{w}{y} \in \outt{w}{} \), if \( y \) is the first vertex in \activee{Y}{r} and \edge{w}{y} is the first edge in \inn{y}{r}, we then add the SCC \( Y \) to \( Q \).
We conclude this subsection with the following lemma which guarantees the correctness of the algorithm.

\begin{lemma}[Lemma 5.2 of \cite{Roditty:2016aa}] \label{lem:T_RZ}
    The algorithm described above correctly maintains \T{r} in \( G_i \) (and so \inn{\cdot}{r} and \activee{\cdot}{r}) during any sequence of edge deletions in \( O(m) \) total update time, where \( m \) is the number of edges in the initial graph \( G_i \).
\end{lemma}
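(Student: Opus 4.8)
The goal is to verify two things: that the deletion procedure restores \Cref{invar:tree_general} — which is precisely what ``maintaining \T{r}'' means — and that all work it performs, beyond the SCC- and active-list bookkeeping already charged in \Cref{th:roditty_SCC} and \Cref{lem:active_RZ}, sums to \( O(m) \) over any sequence of deletions. I would build the argument around the monotonicity of the decremental graph \( G_i \): once an edge leaves \( E_i \) it never returns, and once a vertex becomes unreachable from the root SCC \( R \) it stays unreachable — even when its SCC later splits, since a path from \( R \) to a fragment is a path from \( R \) to the whole former SCC.

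Monotonicity yields the two facts that make the bookkeeping correct. First, when the scan of \inn{w}{r} removes an edge \edge{x}{w} from the front — because \edge{x}{w} has been deleted, or because the SCC currently containing \( x \) is disconnected from \T{r} — that edge can never again be a valid tree edge, so deleting it from \inn{w}{r} permanently is correct; hence each inter-SCC edge is discarded from an \inn{\cdot}{r} list at most once. Second, when a vertex \( y \) is dropped from \activee{Y}{r} its list \inn{y}{r} has been exhausted of valid candidates, so \( y \) can never again carry the tree edge of its SCC; a dropped vertex only ever reacquires a nonempty incoming-edge list inside a \emph{strictly smaller} SCC after a split, where \inn{\cdot}{r} is rebuilt as in \Cref{lem:active_RZ}. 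A subtlety I would spell out is exactly this interaction with splits: that rebuilding \inn{\cdot}{r} on split fragments resurrects no discarded edge, and that letting the largest fragment \( Z_1 \) inherit \activee{Z}{r} never misclassifies a vertex of \( Z_1 \).

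Granting these facts, correctness of the procedure amounts to showing that the queue \( Q \) is seeded with — and then transitively propagated to — every SCC whose tree edge or connectivity could have changed. \( Q \) is seeded with the heads \( Y \) of deleted tree edges and, for every SCC reported decomposed by \textsc{Detect} into \( Z_1,\dots,Z_k \), with all fragments but the one containing \( r \); and during processing, whenever a popped SCC \( W \) turns out to have \( \activee{W}{r} = \nul \) (it has just become disconnected), every SCC \( Y \) that reached \T{r} through an edge leaving \( W \) is enqueued. I would then prove, by induction on the order in which SCCs are popped — mirroring the correctness proof of a decremental Even--Shiloach tree on the condensation of \( G_i \) — that when \( Q \) empties, every SCC reachable from \( R \) has nonempty \activee{\cdot}{r} whose first vertex and first incoming-edge form a genuine tree edge, and every unreachable SCC has \( \activee{\cdot}{r} = \nul \); this is \Cref{invar:tree_general}.

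For the running time, the dominant cost is the scanning of the lists \inn{\cdot}{r}, which I charge to edges: each inter-SCC edge is discarded at most once (monotonicity), and by \Cref{th:roditty_SCC} and \Cref{lem:active_RZ} each edge is inserted into an \inn{\cdot}{r} list only \( O(1) \) times over its lifetime — once at initialization and once when it first becomes inter-SCC — so this totals \( O(m) \). Every other pop of \( Q \) does \( O(1) \) additional work, charged to the event that enqueued it: a deleted tree edge, the creation of a fragment (each vertex changes SCC \( O(\log n) \) times by the halving argument of \Cref{lem:active_RZ}), or a disconnection event (each SCC node ever created disconnects at most once, by monotonicity). All of this fits within the \( O(m+n\log n) \) envelope of \Cref{th:roditty_SCC}, and the part specific to maintaining \T{r}, \inn{\cdot}{r} and \activee{\cdot}{r} is \( O(m) \), as claimed. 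The main obstacle I anticipate is the correctness step for a single deletion that splits several SCCs while others lose their tree edges: one must check that the interleaving of ``\( W \) disconnected'' propagations with the freshly created fragments cannot leave the algorithm halting with a stale tree edge. I would discharge this following the case analysis in the proof of Lemma~5.2 of~\cite{Roditty:2016aa}.
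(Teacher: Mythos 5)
The paper does not contain a proof of this lemma: it is stated with the bracketed attribution ``Lemma~5.2 of~\cite{Roditty:2016aa}'' precisely because it is imported verbatim (modulo notation) from Roditty and Zwick's paper, and the authors treat it as a black-box building block for their extensions in the next subsection. There is therefore no ``paper's own proof'' to compare your sketch against.

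Taken on its own terms, your reconstruction follows the standard decremental-reachability template and hits the right load-bearing ideas: you correctly isolate the lemma's $O(m)$ bound as charging only the work beyond what \Cref{th:roditty_SCC} and \Cref{lem:active_RZ} already account for; you identify the two monotonicity facts (deleted or tree-disconnected edges never become valid tree candidates again, and a vertex exhausted of candidates stays exhausted) that make the one-pass scan of each $\inn{\cdot}{r}$ list amortize to $O(m)$; and you correctly identify the delicate point — the interaction between SCC splits and disconnection propagation within a single deletion — as the place where a careful case analysis is needed, and you defer that to the original source. That deferral is appropriate here, since what you'd be reproving is exactly the cited result. The one place I'd push back slightly is the closing phrase ``all of this fits within the $O(m+n\log n)$ envelope of \Cref{th:roditty_SCC}'' — the lemma's claim is the tighter $O(m)$ for the part that is specific to $\T{r}$, $\inn{\cdot}{r}$, and $\activee{\cdot}{r}$ maintenance, and you do state that explicitly a clause later, but the two statements sitting next to each other could mislead a reader into thinking the lemma only asserts the looser bound. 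That is a presentational quibble, not a mathematical gap.
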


\subsection{Our extension to the data structure} \label{subsec:ds_app_general}

Let \graph{u} be the snapshot of \( G \) taken after the last insertion \Ei{u} centered around \( u \) (if such an insertion occurred).
Note that subsequent edge insertions centered around vertices \emph{different} from $u$ do not alter $\graph{u}$. 
However, edges that are subsequently deleted from \( G \) are also deleted from \graph{u}, ensuring that at each step, \( \E{u} \subseteq E \).
Thus, \graph{u} undergoes only edge deletions. 
If vertex $u$ serves as an insertion center in future updates, the snapshot graph $\graph{u}$ is reinitialized, and \T{u} is constructed from scratch. 

Note that \graph{u} is actually the graph \( G_i \) in the sequence defined in \Cref{subsec:ds_RZ}, where \( i \) is the \textit{last} time that an insertion happened around \( u \).
We can easily maintain \graph{u}'s by maintaining the sequence \( G_0, G_1, \dots, G_t \) and bookkeeping the time index \( i \) in the sequence for each vertex \( u \).
After an insertion around \( u \), we  update the index to the most recent \( t \).

Similar to \Cref{subsec:combinatorial_general}, we define \desc{u} as the set of vertices reachable from \( u \), and \anc{u} as the set of vertices that can reach \( u \) in \graph{u}.
We extend the data structure of \Cref{subsec:ds_RZ} to support the following additional operations.
\begin{itemize}
    
    \item 
    \textsc{In\( (y, r) \):}
    for any vertex \( y \) with \( Y \neq R \), return \texttt{True} if \( Y \) has an in-neighbor from  \( \desc{r} \setminus R \) In \graph{r}, and \texttt{False} otherwise. 
    Here, \( R, Y \) are the SCCs that contain \( r, y \) in \graph{r}, respectively.
    
    \item
    \textsc{Out\( (x, r) \):}
    for any vertex \( x \) with \( X \neq R \), return \texttt{True} if \( R \) has an out-neighbor to  \( \anc{r} \setminus X \) In \graph{r}, and \texttt{False} otherwise. 
    Here, \( R, X \) are the SCCs that contain \( r, x \) in \graph{r}, respectively.
    
\end{itemize}

Here, we maintain \T{u} in each \graph{u}.
To adapt this to the data structure of \Cref{subsec:ds_RZ}, for each \( G_i \), if it is a snapshot of an insertion around vertex \( u \), we then simply initialize \T{u}, and maintain the related values decrementally.
Note that, by \Cref{lem:active_RZ,lem:T_RZ}, we can freely choose the root vertex \( r \), which in this case is \( r = u \).

To implement the operations, we take advantage of the same idea we used in \Cref{app:dag}, but now on the SCCs instead of the vertices: for each SCC \( Y \), we maintain the two pointers \parent{Y}{r},
which points to an edge \edge{x_2}{y} such that \( x_2 \in V (\T{r}) \) and \( \scc{x_2}{r} \neq \scc{x_1}{r} \),
and \cc{Y}{r}, 
which points to an edge \edge{x_2}{y} such that \( x_2 \in V (\T{r}) \) and \( \scc{x_2}{r} \neq \scc{x_1}{r} \).
If no such edge exists in \parent{Y}{r} or \cc{Y}{r},
we set the respective pointer to be \nul.
Roughly speaking, \parent{Y}{r} and \cc{Y}{r} maintain two \emph{different} (if any) incoming edges  \edge{x_1}{y} and \edge{x_2}{y} of \( Y \) such that \( x_1 \) and \( x_2 \) belong to \( V(\T{r}) \) from different SCCs.

The data structure maintains \activee{Y}{r} such that \parent{Y}{r} always points to the front of \inn{y}{r}, where \( y \) is the first vertex of \activee{Y}{r}.
Also, \cc{Y}{r} always points to the first edge \textit{after} \inn{y}{r} that satisfies the criteria defining \cc{Y}{r}.
We now introduce the other invariant of our data structure.

\begin{invariant} \label{invar:other_general}
If \( y \in V \setminus R \), then  \cc{Y}{r} is an edge coming from an SCC different than $\parent{Y}{r}$ and connects $Y$ to $\T{r}$ (if any).
If \( \cc{Y}{r} = \nul \), then \( Y \) has at most one SCC connecting it to \T{r}.
\end{invariant}

Similar to \Cref{app:dag}, handling the sets \anc{r} and  \A{r}, and the operation \textsc{Out\( (x, r) \)} is simply done on the reverse graph of \graph{u}.
Thus, here, we only discuss how to handle the sets \desc{r} and  \D{r}, and the operation \textsc{In\( (y, r) \)} as follows.
\begin{itemize}
    \item 
    Handling \( \desc{u} = V(\T{u}) \) is simply done by the data structure.
    If an SCC \( Y \) got removed during the update, i.e., we reach \( \activee{Y}{r} = \nul \), we then add all the vertices of \( Y \) to \D{u}.
    \item 
    The operation \textsc{InNeighbor\( (y, r) \)} is supported by the pointer \cc{Y}{r}.
    By the definition of the pointer, if \( \cc{Y}{r} = \nul \), we return \texttt{False} in response to the operation, and \texttt{True} otherwise.
\end{itemize}

Handling the insertions and deletions are similar to the data structure of \Cref{app:dag}.

\clearpage

\section*{References}
\printbibliography[heading=none]

\clearpage

\end{document}